\documentclass[10pt]{article}
\usepackage{amsfonts,amsmath,amssymb,amsthm}
\usepackage[pdftex]{graphicx}
\usepackage[hmargin=1in,vmargin=1in]{geometry}
\usepackage{natbib} 
\usepackage{setspace}
\usepackage{enumerate}
\usepackage[toc,title,titletoc,header]{appendix}
\usepackage[colorlinks,citecolor=blue,hypertexnames=false]{hyperref}
\usepackage{etoolbox}
\usepackage{booktabs}
\usepackage{multirow}
\allowdisplaybreaks
\def\qed{\rule{2mm}{2mm}}

\newtheorem{theorem}{Theorem}[section]
\newtheorem{lemma}{Lemma}[section]

\newtheorem{proposition}{Proposition}[section]
\theoremstyle{definition}

\newtheorem{remark}{Remark}[section]
\newtheorem{assumption}{Assumption}[section]

\AtEndEnvironment{remark}{~\qed}
\AtEndEnvironment{example}{~\qed}

\DeclareMathOperator{\var}{Var}
\DeclareMathOperator{\cov}{Cov}
\DeclareMathOperator{\diag}{diag}
\DeclareMathOperator{\tr}{tr}
\DeclareMathOperator{\blp}{BLP}

\DeclareMathOperator*{\argmin}{argmin}

\begin{document}

\title{\vspace{-.5in}\Large Graph-Laplacian Variance Estimators for Finely Stratified Experiments \thanks{Some material in this paper first appeared in a retired draft, \emph{A New Design-Based Estimator for Finely Stratified Experiments}. The fourth author acknowledges support from the National Science Foundation through grant SES-2419008. We thank OpenAI's Codex for helpful discussions about exposition, notation, and manuscript editing.}}

\author{Yuehao Bai \\
Department of Economics \\
University of Southern California \\
\url{yuehao.bai@usc.edu}
\and
Xun Huang \\
Department of Economics \\
University of Chicago \\
\url{xhuang520@uchicago.edu}
\and
Joseph P.\ Romano\\
Departments of Economics \& Statistics \\
Stanford University\\
\url{romano@stanford.edu}
\and
Azeem M.\ Shaikh \\
Department of Economics \\
University of Chicago \\
\url{amshaikh@uchicago.edu}
\and
Max Tabord-Meehan \\
Department of Economics \\
University of Toronto \\
\url{m.tabordmeehan@utoronto.ca}
}

\begin{spacing}{1.2}
\maketitle
\end{spacing}

\vspace{-0.2in}

\begin{spacing}{1}
\begin{abstract}
This paper considers design-based inference on the average treatment effect in finely stratified experiments, where uncertainty arises only from the randomized treatment assignment. We focus on settings in which units are first stratified into groups of fixed size according to baseline covariates and then, within each group, exactly one unit is assigned to treatment (or, symmetrically, exactly one unit is assigned to control). 
%In this setting, we introduce a class of graph-Laplacian estimators for the variance of the difference-in-means estimator. These estimators place strata on a weighted graph whose vertices are the strata and whose edges determine which between-stratum treatment effect differences are aggregated to construct the estimator. 
In this setting, we introduce a class of graph-Laplacian variance estimators in which strata form the vertices of a weighted graph and edge weights determine how between-stratum comparisons are aggregated.
The canonical estimator of \cite{imai2008variance} corresponds to a complete graph with edge weights normalized so that each stratum has weighted degree one, while a paired-stratum estimator arises from a perfect matching graph. For the subclass of degree-calibrated graphs, in which each vertex has weighted degree one, we derive an exact bias identity showing that the corresponding estimators are upward-biased, with bias governed by squared differences in the true stratum-level treatment effects across adjacent strata. As a result, any such estimator may be used for valid inference. The identity further suggests that paired-stratum estimators constructed from a covariate-based perfect matching can induce small biases when treatment effects vary smoothly with the covariates. 
% However, we demonstrate that the resulting estimators may suffer from large worst-case bias in the absence of such smoothness and that, among graph-Laplacian estimators constructed from degree-calibrated graphs, the complete-graph estimator is minimax optimal for normalized bias under a weak bound on treatment-effect heterogeneity.
Without such smoothness, however, we show that paired-stratum estimators can exhibit large worst-case bias, and that, within the class of degree-calibrated estimators, the complete-graph estimator is minimax optimal for normalized bias under a weak bound on treatment-effect heterogeneity. Motivated by this contrast, we propose a regularized graph estimator that controls worst-case normalized bias while preserving much of the locality of the paired-stratum estimator. Simulations illustrate the resulting tradeoff between locality and worst-case protection.

\end{abstract}
\end{spacing}

\noindent KEYWORDS: Experiments, Finite Population, Average Treatment Effect, Matched Pairs, Stratification

\noindent JEL classification codes: C12, C31, C35, C36

\thispagestyle{empty} 
\newpage
\setcounter{page}{1}

\section{Introduction} \label{sec:intro}
This paper considers the problem of design-based inference on the average treatment effect in finely stratified experiments. Here, by ``design-based'' we mean that there is no sampling uncertainty and the only randomness comes from the treatment assignment mechanism. We consider stratified experiments in which units are first partitioned into strata of fixed size $k$ according to baseline covariates and then, within each stratum, exactly $\ell < k$ units are assigned uniformly at random to treatment and the remainder to control. Our main focus is the finely stratified case, in the terminology of \cite{fogarty2018mitigating}, in which $\min\{\ell, k - \ell\} = 1$. A prominent special case of this framework is a matched pairs design, in which $k = 2$ and $\ell = 1$.

In this setting, we first establish, by way of motivation, a result showing that standard confidence intervals for the average treatment effect based on the usual difference-in-means estimator require a variance estimator that is at least asymptotically upward-biased. We then review why constructing upward-biased variance estimators is particularly challenging when $\ell = 1$ (or $k - \ell = 1$), because in these cases, the within-stratum sample variance of treated outcomes (respectively, control outcomes) is identically zero. With this in mind, we introduce a new class of variance estimators, which we call \emph{graph-Laplacian} estimators, that remain well-defined in this challenging case. They view the strata as vertices of a weighted graph and average weighted squared differences in the corresponding stratum-level treatment-effect estimates across pairs of strata connected by an edge. Equivalently, they are graph-Laplacian quadratic forms in the stratum-level treatment-effect estimates.

This class of estimators interpolates between two extreme constructions. At one end is the estimator proposed by \cite{imai2008variance}, which is based on the sample variance formed by treating the stratum-level difference-in-means estimates as independent observations. In the graph-Laplacian framework, this estimator arises from the complete graph that assigns equal weight to each edge, such that each vertex has weighted degree one. At the other end is a paired-stratum estimator obtained from a perfect matching. The matching pairs the strata so that each stratum is compared to exactly one other stratum, in analogy with collapsed-strata variance estimators in survey sampling \citep[see][p.~109]{sarndal1992model}. More generally, we call a graph degree-calibrated if each vertex has weighted degree one; both the normalized complete graph and any perfect-matching graph satisfy this condition. We derive a closed-form expression for the bias of graph-Laplacian estimators constructed from degree-calibrated graphs, which shows that they are upward-biased and hence can be used for valid inference. The same expression shows that the magnitude of this bias is governed by treatment-effect differences across weighted edges, suggesting that large edge weights should be placed between strata with similar treatment effects.
 In practice, however, the true stratum-level treatment effects are unobserved, and so an empirical implementation would choose the graph using observed variables thought to be informative about treatment-effect heterogeneity, such as baseline covariates. Intuitively, this approach should work well when stratum-level average treatment effects vary smoothly with the covariates used to form the matching. This motivates constructing the graph using a minimum-cost perfect matching based on these baseline covariates.

The preceding discussion suggests a tension that motivates the rest of the paper. In a design-based analysis, the potential outcomes and covariates are treated as fixed and the assignment mechanism is the only source of randomness; the randomization framework itself imposes no relationship between baseline covariates and the stratum-level average treatment effects. Without further restrictions on the finite population, we show that no two distinct degree-calibrated graphs can be uniformly ordered in terms of their biases. We then show that, among graph-Laplacian estimators constructed from degree-calibrated graphs, the complete-graph estimator is uniquely minimax optimal for normalized bias over populations satisfying a bound on the variance of the stratum-level average treatment effects. In contrast, a minimum-cost perfect matching exposes the estimator to larger worst-case normalized bias when its matched pairs do not align with the heterogeneity in treatment effects.

We then study the complementary setting in which baseline covariates are in fact informative about treatment-effect heterogeneity. To do so, we introduce an asymptotic framework motivated by a thought experiment in which the finite population is modeled as having been drawn once, in an i.i.d.\ fashion, from a well-behaved probability distribution under which potential outcomes vary smoothly with baseline covariates. The limiting bias of a variance estimator constructed from a degree-calibrated graph is then governed by the extent to which its edge weights exploit this covariate information. The complete graph, which does not exploit covariate information, has a limiting bias that is larger than that of a minimum-cost perfect matching based on baseline covariates, unless treatment effects are homogeneous. As a result, except in this special case, confidence intervals based on a minimum-cost perfect matching are asymptotically strictly shorter than those based on a complete graph.

Combining insights from these two perspectives, we argue that minimum-cost perfect matching on baseline covariates is often the natural choice, and in ``smooth'' regimes can lead to the least bias and the shortest confidence intervals among the estimators we study. At the same time, the minimax analysis demonstrates that in a purely design-based setting a perfect matching is a strong commitment to this idealized regime. We therefore propose a \emph{regularized} minimum-cost graph construction that keeps the baseline-covariate cost criterion used for the perfect matching while imposing a spectral constraint on the graph-Laplacian. This constraint controls worst-case normalized bias while allowing the selected graph to exploit enough covariate information to attain the same limiting bias as minimum-cost perfect matching in the smooth regime. In this sense, even a small amount of regularization can protect against adversarial treatment-effect configurations while preserving much of the locality that makes the perfect-matching estimator attractive in practice. The simulations illustrate this tradeoff: confidence intervals based on a minimum-cost perfect matching are shortest in the smooth regime, intervals based on the complete graph are shortest in the adversarial regime, and in the transition case, regularized graphs produce the shortest intervals.

The literature on stratified block randomization dates back to at least \cite{fisher1935design}. For the case in which $\ell = 1$ or $k - \ell = 1$, the variance estimator in \cite{imai2008variance} has been extended to different settings by \cite{imai2009essential}, \cite{fogarty2018mitigating}, \cite{pashley2021insights}, \cite{de2024level} and \cite{zhu2024design-based}. \cite{fogarty2018regression-assisted} studies regression adjustment for matched pair designs, and \cite{liu2020regression-adjusted} study regression adjustment for stratified designs (primarily for the case where $\min\{\ell, k - \ell\} > 1$). \cite{ding2017paradox} considers randomization inference for matched pairs, alongside other randomization schemes. These papers study modifications of the Imai estimator for alternative experimental designs or with regression adjustment; they do not propose variance estimators based on comparisons between nearby stratum-level estimates. One recent exception is \cite{cytrynbaum2026coupling}, who employ a paired-stratum estimator for randomized experiments that generalize stratified designs to settings with complex treatments. Related ideas have a longer history in the survey-sampling literature, where collapsed-strata variance estimators have been proposed when fine stratification prevents direct within-stratum variance estimation \citep[see, e.g.,][p.~109]{sarndal1992model}. Our graph-Laplacian estimators can be viewed as a weighted collapsed-strata construction: they aggregate squared differences between pairs of stratum-level treatment-effect estimates according to graph weights. Imposing degree calibration on the underlying graph results in a simple finite-population bias identity, which is the starting point for our analysis. Our use of spectral restrictions to trade off covariate locality against worst-case robustness is also conceptually related to \cite{harshaw2024balancing}, who explore an analogous tradeoff in the choice of the treatment-assignment mechanism rather than the variance estimator, balancing global covariate means against worst-case precision. Finally, we note that all of the experimental-design papers above employ a design-based framework as we do in this paper. As a consequence, their analyses differ from work which studies inference in finely stratified experiments from a superpopulation perspective \citep[][]{bai2022optimality, bai2022inference, bai2024inference-b, bai2024covariate, bai2024inference, jiang2024bootstrap, cytrynbaum2024covariate, cytrynbaum2021optimal, bai2025efficiency}. See also \cite{abadie2008estimation}, who consider superpopulation estimation of the variance in matched pair designs, conditional on the covariates. Paired-stratum variance estimators are common in this work, but valid inference is justified under conditions in which the stratum pairings become close in the limit. In contrast, in the design-based setting we obtain valid inference without imposing such restrictions. We use similar restrictions only later, to quantify when such estimators lead to small biases.

The rest of the paper is organized as follows. Section \ref{sec:setup} fixes notation and the design-based framework. Section \ref{sec:main} first explains in what sense upward-biased variance estimation is necessary for standard confidence intervals, then reviews existing estimators and introduces the graph-Laplacian class. Section \ref{sec:minimax} uses the finite-population bias identity to compare graphs in a minimax environment and defines the regularized minimum-cost graph. Section \ref{sec:variances} turns to the complementary smooth-covariate regime and compares the limiting biases of the graph estimators. Section \ref{sec:sims} illustrates the same tradeoff in simulations, and Section \ref{sec:recommendations} provides recommendations for empirical practice.

\section{Setup and Notation} \label{sec:setup}

Consider an experiment consisting of $i \in \{1, \ldots, n\}$ units. For the $i$th unit, let $Y_{i} \in \mathbb R$ denote their observed outcome, $D_i \in \{0, 1 \}$ denote their received treatment, $X_i \in \mathbb R^{p_X}$ denote their observed, baseline covariates, and $Y_i(d)$ denote their potential outcome under treatment $d \in \{0,1\}$. As usual, the observed outcomes are related to the potential outcomes via the relationship
\begin{equation}\label{eq:obsY}
Y_i = Y_i(1)D_i + Y_i(0)(1 - D_i)~.
\end{equation}
In what follows, it will be convenient to use the following shorthand notation: for a generic vector $A_i$ indexed by $i$, let $A^{(n)} := (A_i: 1 \leq i \leq n)$.  For later use, we also define $W_i := (Y_i(1),Y_i(0),X_i)'$. 

In the design-based framework that we maintain throughout the paper, the potential outcomes and covariates of the units in the experiment are modeled as nonrandom quantities, with the only source of randomness arising from the treatment assignment mechanism. Our analysis concerns stratified designs in which the covariates are used to stratify units in the experiment into groups, i.e., strata, of \emph{fixed} size $k$, and then, within each group, $\ell < k$ units are chosen uniformly at random to assign to treatment and the remainder to control; of particular interest to us will be the special case when $\min\{\ell, k - \ell\} = 1$ (i.e. a \emph{finely stratified} design). For ease of exposition, we assume throughout that $n = mk$, so that $m$ denotes the number of strata. Formally, we model the strata as a partition of $\{1, \dots, n\}$:
$$\Lambda_n := \{ \lambda_j \subseteq \{1, \dots, n\}: 1 \leq j \leq m \}~,$$ 
with $|\lambda_j| = k$. It is worth emphasizing that we have suppressed in the notation the fact that each $\lambda_j$ (and therefore $\Lambda_n$ itself) can depend on $X^{(n)}$. Since we maintain that $k$ is fixed throughout the paper, when we write that $n \to \infty$, it should be understood that $m \to \infty$. Using this notation, the (joint) distribution of treatment assignment is characterized by the following assumption:

\begin{assumption} \label{ass:dn}
Treatment status $D^{(n)}$ is assigned independently for each $1 \leq j \leq m$ so that $$(D_i : i \in \lambda_j) \sim \text{Unif}\bigg (\bigg \{(d_1, \ldots, d_k) \in \{0,1\}^k : \sum_{1 \leq i \leq k} d_i = \ell \bigg \} \bigg)~.$$
\end{assumption}

\noindent In other words, $\ell$ out of $k$ units in each stratum are treated uniformly at random, independently across strata.

Our parameter of interest is the finite population average treatment effect, given by $$\Delta_n := \bar Y_n(1) - \bar Y_n(0)~,$$ where, for $d \in \{0,1\}$, 
$$\bar Y_n(d) := \frac{1}{n} \sum_{1 \leq i \leq n} Y_i(d)~.$$  
A natural estimator of $\Delta_n$ is given by the usual difference-in-means estimator, i.e., 
$$\hat \Delta_n := \frac{1}{n(1)} \sum_{1 \leq i \leq n} Y_i D_i - \frac{1}{n(0)} \sum_{1 \leq i \leq n} Y_i (1 - D_i)~,$$ 
where $n(1) = \sum_{1 \le i \le n} D_i = n \eta$, with $\eta := \ell/k$ the proportion of the $n$ units that are treated, and $n(0) = \sum_{1 \le i \le n}(1 - D_i) = n(1 - \eta)$.  Note that because we have assumed that $n = mk$, both the number of treated units $n \eta = m \ell$ and the number of untreated units $n (1 - \eta) = m (k - \ell)$ are integers.

\begin{remark}\label{rem:cluster}
The analysis in this paper can be readily extended to cluster RCTs where the treatment is implemented at the cluster level \citep[see, for instance,][]{imai2009essential, su2021model, de2024level, bai2024inferencecluster}. Let $i \in \{1, \dots, n\}$ denote the $i$th cluster and $g_i$ denote the number of units in the $i$th cluster. Let $D_i$ denote the treatment assignment for the $i$th cluster, which we emphasize applies to all units in this cluster. For each $i$ and $1 \leq t \leq g_i$, let $Y_{i, t}(1)$, $Y_{i, t}(0)$ denote the nonrandom potential outcomes for the $t$th unit in the $i$th cluster and let $Y_{i, t}$ denote its observed outcome. Suppose the parameter of interest is the average treatment effect across all units:
\[ \Delta_n^\dagger = \frac{1}{\sum_{1 \leq i \leq n} g_i} \sum_{1 \leq i \leq n} \sum_{1 \leq t \leq g_i} (Y_{i,t}(1) - Y_{i,t}(0))~. \]
Note that $\Delta_n^\dagger = \frac{1}{n}\sum_{1 \leq i \leq n} (Y_i^\dagger(1) -  Y_i^\dagger(0))$, where for $d \in \{0, 1\}$,
\[ Y_i^\dagger(d) = \frac{1}{\frac{1}{n}\sum_{1 \leq i \leq n} g_i} \sum_{1 \leq t \leq g_i} Y_{i,t}(d)~. \]
Accordingly, let $Y_i^\dagger = Y_i^\dagger(1) D_i + Y_i^\dagger(0) (1 - D_i)$. A natural estimator for $\Delta^{\dagger}_n$ is given by
\[ \hat \Delta_n^\dagger = \frac{1}{n(1)} \sum_{1 \leq i \leq n} Y_i^\dagger D_i - \frac{1}{n(0)}\sum_{1 \le i \le n} Y_i^\dagger (1 - D_i)~, \]
where $n(1)$ and $n(0)$ are defined as before, and in this case are the numbers of treated and untreated \emph{clusters}. The analysis in this paper then immediately applies by replacing $Y_i$ by $Y^{\dagger}_i$ throughout.
\end{remark}

\section{Upward-biased Variance Estimation and Graph-Laplacians}\label{sec:main}
In this section, we introduce a new class of variance estimators for finely stratified experiments, whose properties we study formally in Sections \ref{sec:minimax} and \ref{sec:variances}. We begin in Section \ref{sec:motivation} with a result that highlights the importance of variance estimators that are at least \emph{asymptotically} upward-biased, defined precisely in the statement of  Theorem \ref{thm:upward} below. Of course, a simple sufficient condition for an estimator to be asymptotically upward-biased is that it is upward-biased at every $n$.  This leads us, in Section \ref{sec:review}, to review prior proposals for constructing variance estimators in stratified experiments that are upward-biased under only the assumption that $D^{(n)}$ satisfies Assumption \ref{ass:dn}. In light of these prior proposals, Section \ref{sec:main_var} introduces a new class of  estimators of $\var[\hat{\Delta}_n]$ for finely stratified experiments which we call \emph{graph-Laplacian} estimators, and characterizes conditions under which estimators in this class are guaranteed to be upward-biased for $\var[\hat{\Delta}_n]$ under only Assumption \ref{ass:dn}. This characterization provides a unifying framework for many existing variance estimators, identifies which members of the graph-Laplacian class are expected to have the smallest upward bias in practice, and sets the stage for the analyses that follow in later sections.
%We further discuss in Remark \ref{rem:better_bounds} some improvements upon these estimators that typically result in estimators that are only asymptotically upward-biased under assumptions stronger than Assumption \ref{ass:dn} and therefore result in valid inference less generally than their upward-biased counterparts.  
\subsection{Motivating Result}\label{sec:motivation}
Our first theorem formalizes the sense in which an asymptotically upward-biased estimator for $\var[\hat{\Delta}_n]$ is required for valid inference on $\Delta_n$. Although results of this type are well-known, we provide it here for completeness and note in particular that the theorem establishes sufficiency \emph{and necessity}, which to our knowledge has not been formally documented in prior work.

\begin{theorem}\label{thm:upward}
Suppose $D^{(n)}$ satisfies Assumption \ref{ass:dn} and consider a sequence of finite populations such that 
\begin{equation}\label{eq:pop_moments}
\frac{1}{n}\max_{d \in \{0, 1\}}\max_{1 \leq i \leq n}Y_i(d)^2 \rightarrow 0
\end{equation}
as $n \rightarrow \infty$, and 
\begin{equation}\label{eq:pop_nondegen}
0 < \liminf_{n \to \infty} n \cdot \var[\hat{\Delta}_n] \le \limsup_{n \to \infty} n \cdot \var[\hat \Delta_n] < \infty~.
\end{equation}
Further assume that $\tilde{V}_n \geq 0$ is an estimator of $\var[\hat \Delta_n]$ such that
\begin{equation} \label{eq:consistent}
n \big | \tilde{V}_n - E[\tilde{V}_n] \big | \xrightarrow{P} 0
\end{equation}
as $n \rightarrow \infty$. Then,
\begin{equation} \label{eq:valid}
\liminf_{n \rightarrow \infty} P\Big\{\hat{\Delta}_n - \sqrt{\tilde{V}_n}\cdot z_{1-\alpha/2} \le \Delta_n \leq \hat{\Delta}_n + \sqrt{\tilde{V}_n}\cdot z_{1 - \alpha/2}\Big\} \ge 1 - \alpha~,    
\end{equation}
where $z_{1-\alpha/2}$ is the $1 - \alpha/2$ quantile of the standard normal distribution, if and only if
\begin{equation} \label{eq:upward}
\liminf_{n \to \infty} n \big ( E[\tilde{V}_n] - \var[\hat{\Delta}_n] \big ) \geq 0~.
\end{equation}
\end{theorem}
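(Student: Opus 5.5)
The plan has three parts: a central limit theorem for $\hat\Delta_n$, a ratio argument, and an analysis of the coverage limit along subsequences.

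\textbf{CLT step.} Under Assumption \ref{ass:dn}, write
\[
\hat\Delta_n - \Delta_n = \frac{1}{m}\sum_{j=1}^m (\hat\Delta_{j} - \Delta_j),
\]
where $\hat\Delta_j$ and $\Delta_j$ are the stratum-level difference-in-means and average effect. The summands are independent across $j$, mean zero, and bounded by a constant times $\max_{d,i}|Y_i(d)|$. By \eqref{eq:pop_moments},
\[
\frac{\max_j (\hat\Delta_j-\Delta_j)^2}{m^2\var[\hat\Delta_n]} \lesssim \frac{\max_{d,i} Y_i(d)^2/n}{n\var[\hat\Delta_n]} \to 0 ,
\]
using the lower bound in \eqref{eq:pop_nondegen}. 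This is the Lindeberg condition for the triangular array, which gives
\[
\frac{\hat\Delta_n-\Delta_n}{\sqrt{\var[\hat\Delta_n]}} \xrightarrow{d} N(0,1).
\]

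\textbf{Ratio step.} Put $v_n := n\var[\hat\Delta_n]$ and $e_n := nE[\tilde V_n]$. By \eqref{eq:consistent}, $n\tilde V_n - e_n \xrightarrow{P} 0$. Write the coverage event as
\[
|Z_n| \le z_{1-\alpha/2}\sqrt{n\tilde V_n/v_n}, \qquad Z_n := \frac{\hat\Delta_n-\Delta_n}{\sqrt{\var[\hat\Delta_n]}} .
\]
Since $v_n$ is bounded away from $0$ and $\infty$, $e_n$ is also bounded along any subsequence with $\liminf e_n<\infty$. It therefore suffices to work along subsequences where $v_n\to v\in(0,\infty)$ and $e_n\to e\in[0,\infty]$. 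If $e=\infty$, coverage tends to $1$ trivially. Otherwise $n\tilde V_n/v_n \xrightarrow{P} e/v$. By Slutsky and continuity of the normal cdf (the boundary has probability zero when $e/v>0$; if $e=0$ the limit is $P\{|N|\le 0\}=0$), coverage converges to
\[
g(e/v) := 2\Phi\!\left(z_{1-\alpha/2}\sqrt{e/v}\right)-1 .
\]

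\textbf{Sufficiency.} If \eqref{eq:upward} holds, then along every convergent subsequence $e-v\ge \liminf(e_n-v_n)\ge 0$, so $e/v\ge 1$ and the limit is at least $2\Phi(z_{1-\alpha/2})-1=1-\alpha$. Every subsequence has a further subsequence of this kind, so the liminf of the coverage is at least $1-\alpha$.

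\textbf{Necessity.} Suppose \eqref{eq:upward} fails. Then there is $\epsilon>0$ and a subsequence with $e_n-v_n\le -\epsilon$. Extract a further subsequence with $v_n\to v$ and $e_n\to e$. Then $e\le v-\epsilon<v$, so $e/v<1$. Because $g$ is strictly increasing, the coverage along this subsequence converges to $g(e/v)<1-\alpha$. This contradicts \eqref{eq:valid}.

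\textbf{Main obstacle.} The delicate part is the CLT. The Lindeberg verification needs the stratum-level differences to be bounded by the $\max_i |Y_i(d)|$ term and needs the variance lower bound from \eqref{eq:pop_nondegen}. The subsequence bookkeeping for the ratio step is routine.
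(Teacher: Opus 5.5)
Your proof is correct, but it is organized differently from the paper's.

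\textbf{CLT step.} The paper rewrites $\hat\Delta_n-\Delta_n=\frac{1}{n}\sum_{j}\sum_{i\in\lambda_j}(R_i-\bar R_j)D_i$ with $R_i=Y_i(1)/\eta+Y_i(0)/(1-\eta)$. It then computes the variance exactly from the without-replacement sampling formulas and checks the scale-free condition \eqref{eq:highlevel} of Theorem \ref{thm:normal}. You instead take the independent stratum-level errors $(\hat\Delta_{j,n}-\Delta_{j,n})/m$ as the triangular array. You bound them deterministically by a multiple of $\max_{d,i}|Y_i(d)|$ and normalize by $\var[\hat\Delta_n]$ directly. The Lindeberg indicators then vanish for large $n$ without any explicit variance formula. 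The paper's route pays off later, because its exact expression for $n\var[\hat\Delta_n]$ in terms of $\sum_j\sum_{i\in\lambda_j}(R_i-\bar R_j)^2$ is reused in the proof of Theorem \ref{thm:limit_V}.

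\textbf{Studentization step.} The paper appeals to Lemma \ref{lem:ratio}, a one-sided statement about $P\{X_n/\hat\sigma_n\le x\}$ proved by conditioning plus subsequences. It leaves implicit how the two tails combine into the two-sided coverage claim; for necessity this requires using the same subsequence in both tails. You instead pass to subsequences along which $n\var[\hat\Delta_n]\to v\in(0,\infty)$ and $nE[\tilde V_n]\to e\in[0,\infty]$. You compute the two-sided limiting coverage $2\Phi(z_{1-\alpha/2}\sqrt{e/v})-1$ explicitly and get both directions at once from its strict monotonicity. This is shorter, handles the two-sided interval directly, and anticipates the coverage formula in Theorem \ref{thm:main_var}.

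\textbf{Minor wording point.} Your sentence claiming that $nE[\tilde V_n]$ is bounded along subsequences with finite liminf is not right as stated. It is also not needed: compactness of $[0,\infty]$ already gives the convergent further subsequence you use. In the necessity step, $e_n-v_n\le-\epsilon$ makes the limit $e$ finite anyway.
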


Note that because $\tilde{V}_n$ is defined as an estimator of $\var[\hat \Delta_n]$ and not the asymptotic variance of $\sqrt{n}(\hat \Delta_n - \Delta_n)$, the conditions in the theorem are all stated with a scaling by $n$. We refer to $n \big(E[\tilde V_n]-\var[\hat\Delta_n]\big)$ as the normalized bias of $\tilde V_n$. When \eqref{eq:upward} holds, we say that $\tilde{V}_n$ is \emph{asymptotically upward-biased} for $\var[\hat{\Delta}_n]$.  A simple sufficient condition for \eqref{eq:upward} is, of course, that 
\begin{equation} \label{eq:exactupward}
E[\tilde{V}_n] \ge \var[\hat \Delta_n] \text{ for all } n \geq 1~.
\end{equation}
When \eqref{eq:exactupward} holds, we say that $\tilde{V}_n$ is upward-biased for $\var[\hat{\Delta}_n]$. Theorem \ref{thm:upward} thus demonstrates that, under mild regularity conditions, confidence intervals for $\Delta_n$ of the form $[\hat \Delta_n \pm  \sqrt{\tilde{V}_n}\cdot z_{1 - \alpha/2}]$ are valid in the sense of having asymptotically at least nominal coverage if and only if $\tilde{V}_n$ is asymptotically upward-biased for $\var[\hat{\Delta}_n]$. These assumptions include weak restrictions on the finite population (i.e., \eqref{eq:pop_moments}), a non-degeneracy condition (i.e., \eqref{eq:pop_nondegen}), and a requirement that $n\cdot\tilde{V}_n$ is consistent for its expectation (i.e., \eqref{eq:consistent}). This last condition will typically hold under some further restrictions on finite population moments; see, e.g., Theorem \ref{thm:main} below. Finally, under the conditions of the theorem, if the biases of two variance estimators are ordered for all sufficiently large $n$, then their corresponding confidence-interval lengths inherit the same weak asymptotic ordering. Thus, smaller upward bias leads to asymptotically weakly shorter confidence intervals.

In what follows, we study variance estimators that can be shown to satisfy \eqref{eq:exactupward}, and therefore \eqref{eq:upward}, whenever $D^{(n)}$ satisfies Assumption \ref{ass:dn}. Such estimators yield confidence intervals for $\Delta_n$ that are valid whenever the remaining hypotheses of Theorem \ref{thm:upward} are satisfied. It is worth emphasizing, however, that establishing \eqref{eq:upward} for estimators that do not satisfy \eqref{eq:exactupward} may in general require conditions beyond those imposed by Theorem \ref{thm:upward}; Remarks \ref{rem:better_bounds} and \ref{rem:AI} provide examples. Our primary focus will thus be on studying a class of estimators that are upward-biased in the sense of \eqref{eq:exactupward}, because their validity follows under the hypotheses of Theorem \ref{thm:upward} alone.

\subsection{Review of Some Upward-Biased Estimators of \texorpdfstring{$\var[\hat{\Delta}_n]$}{Var[Delta-hat]}}\label{sec:review}

In this section, we review some upward-biased estimators of $\var[\hat{\Delta}_n]$.  To this end, recall that it can be shown using standard arguments \citep[see, for instance,][]{imbens2015causal} that
\begin{equation} \label{eq:varDeltahat}
\var[\hat{\Delta}_n] = \frac{1}{nm}\sum_{1 \leq j \leq m}\left(\frac{S^2_j(1)}{\eta} + \frac{S^2_j(0)}{1 - \eta} - S_{j, \Delta}^2\right)~,    
\end{equation}
where
\[S^2_j(d) := \frac{1}{k-1}\sum_{i \in \lambda_j}\left(Y_i(d)- \bar Y_{j, n}(d)\right)^2~, \hspace{5mm} S_{j, \Delta}^2 := \frac{1}{k-1}\sum_{i \in \lambda_j}\left(Y_i(1) - Y_i(0) - \Delta_{j,n}\right)^2~,\]
with
\[\bar Y_{j, n}(d) := \frac{1}{k}\sum_{i \in \lambda_j}Y_i(d)~, \hspace{5mm} \Delta_{j,n} := \frac{1}{k}\sum_{i \in \lambda_j}\left(Y_i(1) - Y_i(0)\right)~.\]
In settings where $\min\{\ell, k - \ell\} > 1$, the construction of an upward-biased variance estimator is straightforward: an unbiased estimator of $S^2_j(d)$ for $d \in \{0, 1\}$ is given by the sample variance of the outcomes for units within stratum $j$ assigned to treatment $d$, which we denote by $\hat{S}^2_j(d)$ \citep[see][]{imbens2015causal, pashley2021insights}. A simple estimator of $\var[\hat{\Delta}_n]$ is thus given by
\begin{equation}\label{eq:var_coarse}\frac{1}{nm}\sum_{1 \le j \le m}\left(\frac{\hat{S}^2_j(1)}{\eta} + \frac{\hat{S}^2_j(0)}{1 - \eta}\right)~,
\end{equation}
and its bias is $\frac{1}{nm}\sum_{1 \le j \le m}S_{j, \Delta}^2 \ge 0$. Note that this estimation strategy exploits the lower bound $S_{j, \Delta}^2 \ge 0$; we return to this observation in Remark \ref{rem:better_bounds} below.

This estimation strategy fails, however, in finely stratified designs (i.e. when $\min\{\ell, k - \ell\} = 1$) since in this case the corresponding within-stratum sample variance is identically zero.  In these settings, a canonical estimator considered in the literature instead computes the variance of the stratum-level average treatment effect estimates \citep[][]{imai2008variance}:
\begin{equation} \label{eq:imai}
\hat{V}_n^{\rm IM} := \frac{1}{m (m - 1)}\sum_{1 \leq j \leq m}\left(\hat{\Delta}_{j,n} - \hat{\Delta}_n\right)^2~,    
\end{equation}
where
\[\hat{\Delta}_{j,n} := \frac{1}{\ell}\sum_{i \in \lambda_j}Y_iD_i - \frac{1}{k - \ell}\sum_{i \in \lambda_j}Y_i(1-D_i) \]
is the difference in means in the $j$th stratum. This variance estimator serves as the basic scaffolding for many of the recent estimators proposed in the literature on inference in stratified randomized experiments: it is a special case of the small-block variance estimator proposed in \cite{pashley2021insights} when all strata are the same size \citep[see also][]{zhu2024design-based}, a special case of the pair-cluster variance estimator of \cite{de2024level} in the setting of an individual-level randomized experiment, and a special case of the regression-based estimator due to \cite{fogarty2018mitigating}. Note that the bias of $\hat{V}^{\rm IM}_n$ is given by \citep[see][]{imbens2015causal,fogarty2018mitigating}
\begin{equation} \label{eq:imai_bias}
E[\hat{V}^{\rm IM}_n] - \var[\hat{\Delta}_n] = \frac{1}{m(m-1)}\sum_{1 \leq j \leq m}(\Delta_{j,n} - \Delta_n)^2 \geq 0~,    
\end{equation}
so that $\hat{V}^{\rm IM}_n$ is an upward-biased estimator of $\var[\hat{\Delta}_n]$. Moreover, under mild assumptions, it is straightforward to establish using Chebyshev's inequality that $\hat V_n^{\rm IM}$ is consistent for its expectation in the sense of \eqref{eq:consistent}, and thus can be used to construct valid confidence intervals in the sense of \eqref{eq:valid}. Note that unlike \eqref{eq:var_coarse}, $\hat V_n^{\rm IM}$ obtains its upward bias through the variation in the stratum-level treatment effect estimates across all strata. This observation motivates our proposal in the next section, where we introduce variance estimators for finely stratified designs that can borrow information across strata in a more structured way.

\begin{remark}\label{rem:better_bounds}
In some circumstances, it may be further possible to bound $S_{j, \Delta}^2$ from below by a positive number instead of zero, and this tighter bound could then be used to construct less conservative estimators of $\var[\hat \Delta_n]$. For instance, it follows from the Cauchy--Schwarz inequality that
\[ \sum_{i \in \lambda_j} (Y_i(1) - \bar Y_{j, n}(1)) (Y_i(0) - \bar Y_{j, n}(0)) \leq \bigg ( \sum_{i \in \lambda_j} (Y_i(1) - \bar Y_{j, n}(1))^2 \bigg )^{1/2} \bigg ( \sum_{i \in \lambda_j} (Y_i(0) - \bar Y_{j, n}(0))^2 \bigg )^{1/2}~,\]
from which we can deduce the lower bound 
\[S_{j, \Delta}^2 \ge (S_j(1) - S_j(0))^2 \ge 0~.\]
In fact, it is sometimes possible to achieve even tighter bounds; see \cite{aronow2014sharp} for details. We note, however, that we are not aware of estimators based on lower bounds other than $S_{j, \Delta}^2 \ge 0$ which are necessarily guaranteed to be upward-biased, and in general additional assumptions  need to be imposed in order to guarantee that the resulting estimators are at least asymptotically upward-biased. See, for instance, Corollary 1 in \cite{aronow2014sharp}. As a consequence, these improved estimators may not lead to valid confidence intervals whenever these additional assumptions fail to hold. 
\end{remark}

\begin{remark}\label{rem:hybrid_designs}
The fixed-$k$, fixed-$\ell$ design considered in this paper is useful for isolating the variance-estimation problem that arises in finely stratified experiments, and captures many designs employed in practice. In some applications, however, strata may have different sizes and assignment fractions, and the same experiment may contain both ``large'' strata with at least two treated and two control units and ``fine'' strata with only one treated or one control unit. A simple way to handle such hybrid designs is to decompose the experiment into subcollections of strata to which the same variance-estimation rule will be applied, as in \cite{pashley2021insights}. To make this idea concrete, suppose the strata are indexed by $1 \le j \le m$, let $n_j=|\lambda_j|$, and partition the strata into disjoint subcollections $\mathcal G_1,\ldots,\mathcal G_R$. For instance, $\mathcal G_r$ might collect all strata which are triples with one treated unit. For $1 \le r \le R$, define
\[
N_r=\sum_{j\in\mathcal G_r}n_j~,\qquad
q_r=\frac{N_r}{n}~,\qquad
\hat\Delta^{(r)}_n=\sum_{j\in\mathcal G_r}\frac{n_j}{N_r}\hat\Delta_{j,n}~.
\]
Then, the overall estimator can be written as
\[
\sum_{1 \le r \le R}q_r\hat\Delta^{(r)}_n~,
\]
the variance of which equals $\sum_{1 \le r \le R}q_r^2\var[\hat\Delta^{(r)}_n]$
because of the independence of treatment assignments across strata. If $\mathcal G_r$ consists of large strata, then $\var[\hat\Delta^{(r)}_{n}]$ can be estimated by applying the within-stratum estimator \eqref{eq:var_coarse} directly to the strata inside $\mathcal{G}_r$.
If $\mathcal G_r$ consists of fine strata, one can instead estimate $\var[\hat\Delta^{(r)}_{n}]$ from comparisons across the stratum-level estimators inside $\mathcal G_r$ using for instance \eqref{eq:imai} or the graph-Laplacian estimators introduced below. The final variance estimator is then obtained by adding these subcollection-specific components. 
\end{remark}

\subsection{Graph-Laplacian Estimators of \texorpdfstring{$\var[\hat\Delta_n]$}{Var[Delta-hat]}}\label{sec:main_var}
We now introduce a class of variance estimators which generalize the estimator in \eqref{eq:imai} in that they borrow information across strata and thus remain well-defined even when $\min\{\ell,k-\ell\}=1$. The construction has a graph-theoretic interpretation: the strata form the vertices of a graph, and the estimator places weights on the graph's edges. Each edge then contributes a
weighted squared difference of the treatment-effect estimates for the two adjacent strata. One motivation for this construction is as follows: by the independence of treatment assignments across strata, the variance of $\hat\Delta_n$ can be written as
\[
\var[\hat\Delta_n]
=
\frac{1}{m^2}\sum_{1 \le j \le m} \var[\hat\Delta_{j,n}]~.
\]
As explained in Section \ref{sec:review}, in finely stratified designs the stratum-level variances $\var[\hat\Delta_{j,n}]$ generally cannot be estimated entirely within strata. However, for two distinct strata $1 \leq a \neq b \leq m$,
\[
E[(\hat\Delta_{a,n}-\hat\Delta_{b,n})^2]
=
\var[\hat\Delta_{a,n}]+\var[\hat\Delta_{b,n}]+(\Delta_{a,n}-\Delta_{b,n})^2~.
\]
Thus, squared differences between stratum-level difference-in-means estimators contain information about the corresponding stratum-level variances, but also include a bias term reflecting heterogeneity in the true stratum-level average treatment effects. This suggests choosing which squared differences to average
so that the resulting heterogeneity term is small. We show below that this choice can equivalently be formulated as deciding how to construct a weighted graph whose vertices are the strata.

Towards this end, for strata $1\leq a,b\leq m$, let $\omega_{ab,m}$ denote the nonrandom weight assigned to the edge between strata $a$ and $b$. Throughout the remainder of the paper, we maintain without further comment that the graph weights are nonnegative and symmetric, with zero diagonal:
\begin{equation} \label{eq:graph_weights}
\omega_{ab,m}=\omega_{ba,m}\ge 0,\qquad \omega_{aa,m}=0~.
\end{equation}
We define the weighted degree of stratum $1\leq j\leq m$ by
\[
\mathrm{deg}_{j,m}:=\sum_{b\ne j}\omega_{jb,m}~.
\]
For a vector of graph weights $\omega_m = (\omega_{ab,m}: 1 \le a, b \le m)$, the \emph{graph-Laplacian variance estimator} is given by
\begin{equation}\label{eq:graph_estimator}
\hat V_n(\omega_m)
:=
\frac{1}{m^2}\sum_{1\le a<b\le m}
\omega_{ab,m}(\hat\Delta_{a,n}-\hat\Delta_{b,n})^2~.
\end{equation}
$\hat{V}_n(\omega_m)$ arises naturally if we view it as the quadratic form associated with the graph-Laplacian corresponding to $\omega_m$ \citep[see, for instance][]{chung1997spectral, vonluxburg2007tutorial}. Formally, the graph-Laplacian $L_m(\omega_m)$ is the $m \times m$ symmetric matrix with diagonal entries
\begin{equation} \label{eq:laplacian_diag}
\big [ L_m(\omega_m) \big ]_{jj}=\mathrm{deg}_{j,m}
\end{equation}
for $1 \leq j \leq m$ and off-diagonal entries
\begin{equation} \label{eq:laplacian_offdiag}
\big [ L_m(\omega_m) \big ]_{ab}=-\omega_{ab,m}
\end{equation}
for $1 \leq a, b \leq m$, $a \neq b$. For any $x=(x_1,\ldots,x_m)'$, it follows that
\begin{equation} \label{eq:x_quadratic}
x'L_m(\omega_m)x
=
\sum_{1\le a<b\le m}\omega_{ab,m}(x_a-x_b)^2~,       
\end{equation}
so that \eqref{eq:graph_estimator} implies that the variance estimator can be written as
\begin{equation} \label{eq:laplacian_quadratic}
\hat{V}_n(\omega_m) = \frac{1}{m^2}\hat{\delta}_n'L_m(\omega_m)\hat{\delta}_n~,
\end{equation}
where $\hat{\delta}_n = (\hat{\Delta}_{1,n}, \ldots, \hat{\Delta}_{m,n})'$. 

The estimator $\hat{V}^{\rm IM}_n$ in \eqref{eq:imai} corresponds to the graph-Laplacian estimator in \eqref{eq:graph_estimator} based on the complete graph $\omega_m^{\rm CG}$, which assigns weight $1/(m-1)$ to every edge:
\[
\omega_{ab,m}^{\rm CG}=\frac{1}{m-1}~,
\qquad 1 \leq a\ne b \leq m~.
\]
To see this, note that by the identity
\[
\frac{1}{m}\sum_{1\le a<b\le m}(x_a-x_b)^2 = \sum_{1 \le j \le m} (x_j-\bar x)^2~,
\]
we obtain
\[
\hat V_n(\omega_m^{\rm CG})
=
\frac{1}{m^2}
\sum_{1\le a<b\le m}\frac{1}{m-1}
(\hat\Delta_{a,n}-\hat\Delta_{b,n})^2
= \frac{1}{m(m-1)}\sum_{1 \leq j \leq m}(\hat \Delta_{j,n} - \hat \Delta_n)^2= \hat V_n^{\rm IM}~.
\]
From this perspective, the complete graph places equal weight across every stratum pair, regardless of how similar the treatment effects are between strata. Alternatively, we could consider matching together only those strata whose stratum-level treatment effects we expect to be most similar, by constructing a graph corresponding to a \emph{perfect matching} of the strata. Suppose $m$ is even and define
\[
\mathcal P_m
=
\left\{
\{a_j,b_j\}: 1 \le j \le m/2
\right\}
\]
to be a perfect matching of the stratum indices: that is, the sets \(\{a_j,b_j\}\) are disjoint two-element subsets that partition \(\{1,\ldots,m\}\). Set $\omega_{ab,m}=I\{\{a,b\}\in\mathcal P_m\}$ for $1 \leq a \neq b \leq m$. Then,
\[
\hat V_n(\omega_m)
=
\frac{1}{m^2}\sum_{1 \leq a < b \leq m: \{a,b\}\in\mathcal P_m}
(\hat\Delta_{a,n}-\hat\Delta_{b,n})^2~,
\]
is the associated graph-Laplacian estimator.

Theorem \ref{thm:main} derives the finite-population bias of the graph-Laplacian estimator and states conditions under which the estimator is upward-biased and consistent for its expectation.
% This equals the paired-strata estimator from the preceding draft.  To see this, define
% \[
% \hat \tau_n^2=\frac{1}{m}\sum_{1 \le j \le m}\hat\Delta_{j,n}^2,
% \qquad
% \hat\kappa_n=\frac{2}{m}\sum_{j=1}^{m/2}
% \hat\Delta_{2j-1,n}\hat\Delta_{2j,n}.
% \]
% Then
% \[
% \frac{1}{m^2}\sum_{j=1}^{m/2}
% (\hat\Delta_{2j-1,n}-\hat\Delta_{2j,n})^2
% =
% \frac{1}{m}\left\{
% \frac{1}{m}\sum_{1 \le j \le m} \hat\Delta_{j,n}^2
% -
% \frac{2}{m}\sum_{j=1}^{m/2}\hat\Delta_{2j-1,n}\hat\Delta_{2j,n}
% \right\}
% =\frac{1}{m}(\hat\tau_n^2-\hat\kappa_n).
% \]
% If $m$ is odd, one can use the same leftover-stratum convention as in the paired construction; the degree-calibrated graph formulation itself is most transparent for the perfect-matching case with even $m$, or for fractional graphs such as the complete graph and the linear-programming graphs introduced below.

\begin{theorem}\label{thm:main}
Suppose Assumption \ref{ass:dn} holds. Let $\omega_m=(\omega_{ab,m}: 1 \le a,b \le m)$ be nonrandom graph weights that satisfy \eqref{eq:graph_weights}. Then,
\begin{enumerate}[\rm (a)]
\item The bias of $\hat{V}_n(\omega_m)$ is given by
\begin{equation}\label{eq:bias}
E[\hat V_n(\omega_m)]-\var[\hat\Delta_n]
=
\frac{1}{m^2}\sum_{1\le a<b\le m}
\omega_{ab,m}(\Delta_{a,n}-\Delta_{b,n})^2
+
\frac{1}{m^2}\sum_{1 \le j \le m}(\mathrm{deg}_{j,m}-1)\var[\hat{\Delta}_{j,n}]~.
\end{equation}
Consequently, if $\mathrm{deg}_{j,m}\ge 1$ for every $1 \le j \le m$, then $\hat V_n(\omega_m)$ is upward-biased for $\var[\hat\Delta_n]$.
\item 
Suppose in addition that
\begin{equation}\label{eq:deg_bound}
\max_{1\le j\le m}\mathrm{deg}_{j,m}\le C
\end{equation}
for some constant $C<\infty$ and
\begin{equation}\label{eq:four_moment}
\frac{1}n\sum_{1 \le i \le n} Y_i(d)^4=o(n), \qquad d\in\{0,1\}~.
\end{equation}
Then,
\[
n \big | \hat V_n(\omega_m)-E[\hat V_n(\omega_m)] \big | \xrightarrow{P} 0~,
\]
so that \eqref{eq:consistent} holds as $n \rightarrow \infty$.
\end{enumerate}
\end{theorem}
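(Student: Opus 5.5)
For part (a), we compute the expectation of each squared difference directly and then sum. Under Assumption \ref{ass:dn}, the stratum-level estimators $\hat\Delta_{j,n}$ are mutually independent across $j$. Each is unbiased for $\Delta_{j,n}$, because within a stratum the assignment is a completely randomized experiment. Hence for $a\ne b$,
\[
E[(\hat\Delta_{a,n}-\hat\Delta_{b,n})^2]=\var[\hat\Delta_{a,n}]+\var[\hat\Delta_{b,n}]+(\Delta_{a,n}-\Delta_{b,n})^2~,
\]
as already noted before the statement. Multiplying by $\omega_{ab,m}/m^2$ and summing over $a<b$, each $\var[\hat\Delta_{j,n}]$ collects total weight $\sum_{b\ne j}\omega_{jb,m}=\mathrm{deg}_{j,m}$, by symmetry of the weights. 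This gives
\[
E[\hat V_n(\omega_m)]=\frac{1}{m^2}\sum_{a<b}\omega_{ab,m}(\Delta_{a,n}-\Delta_{b,n})^2+\frac1{m^2}\sum_j \mathrm{deg}_{j,m}\var[\hat\Delta_{j,n}]~.
\]
Subtracting $\var[\hat\Delta_n]=m^{-2}\sum_j\var[\hat\Delta_{j,n}]$, which follows from independence and $\hat\Delta_n=m^{-1}\sum_j\hat\Delta_{j,n}$ (equal strata sizes), yields \eqref{eq:bias}. Both terms are nonnegative when every $\mathrm{deg}_{j,m}\ge1$, so the estimator is upward-biased.

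For part (b), we use Chebyshev's inequality and aim to show $n^2\var[\hat V_n(\omega_m)]\to0$. Since $n=mk$ with $k$ fixed, this amounts to showing $m^{-2}\var[\hat\delta_n'L_m\hat\delta_n]\to0$. Write $\hat\Delta_{j,n}=\Delta_{j,n}+\epsilon_j$, where the $\epsilon_j$ are independent and mean zero. Then
\[
\hat\delta_n'L\hat\delta_n=\delta'L\delta+2\delta'L\epsilon+\epsilon'L\epsilon~,
\]
where $\delta=(\Delta_{1,n},\ldots,\Delta_{m,n})'$ is nonrandom, so it suffices to bound the variances of the linear term $\delta'L\epsilon$ and the quadratic term $\epsilon'L\epsilon$ separately.

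\emph{Linear term.} By independence,
\[
\var[\delta'L\epsilon]=\sum_j (L\delta)_j^2\var[\epsilon_j]\le \max_j\var[\epsilon_j]\,\|L\|_{op}^2\|\delta\|^2~.
\]
We bound $\|L\|_{op}\le 2\max_j \mathrm{deg}_{j,m}\le2C$ via Gershgorin. The stratum-level quantities satisfy $\Delta_{j,n}^2,\ \var[\epsilon_j]\lesssim \sum_{i\in\lambda_j}(Y_i(1)^2+Y_i(0)^2)$, since $k$ is fixed.

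\emph{Quadratic term.} Expanding,
\[
\var[\epsilon'L\epsilon]\lesssim \sum_j L_{jj}^2E[\epsilon_j^4]+\sum_{a\ne b}L_{ab}^2E[\epsilon_a^2]E[\epsilon_b^2]~,
\]
using independence and zero means so that the cross covariances vanish except for matching index pairs. Here $L_{jj}\le C$ and $\sum_b L_{ab}^2\le C^2$.

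Combining the pieces, everything reduces to controlling moment sums by \eqref{eq:four_moment}. The linear term contributes at most $m^{-2}\cdot C^2\cdot\max_j\var[\epsilon_j]\cdot\sum_j\Delta_{j,n}^2$. The main obstacle is converting the maximum and the sum into the given condition. We have $\max_j\var[\epsilon_j]\lesssim \max_i Y_i(d)^2\le (\sum_i Y_i(d)^4)^{1/2}=o(n)$, using $\sum_i Y_i^4=o(n^2)$, and $\sum_j\Delta_{j,n}^2\lesssim\sum_iY_i(d)^2\le (n\sum_iY_i(d)^4)^{1/2}=o(n^{3/2})$. Together these give only $o(n^{1/2})$, which is not enough. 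We therefore bound more carefully: $\sum_j(L\delta)_j^2\var[\epsilon_j]$ is bounded through Cauchy--Schwarz by $(\sum_j (L\delta)_j^4)^{1/2}(\sum_j\var[\epsilon_j]^2)^{1/2}$. Then $\sum_j(L\delta)_j^4\lesssim C^4\sum_j\Delta_{j,n}^4$, because $(L\delta)_j$ is a combination with absolute weights summing to at most $2C$ (Jensen). Both factors are then bounded by $\sum_i Y_i(d)^4=o(n^2)$, giving $o(n^2)/m^2\to0$. For the quadratic term, $\sum_jE[\epsilon_j^4]\lesssim\sum_iY_i(d)^4=o(n^2)$. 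The off-diagonal piece is at most $C^2\sum_a E[\epsilon_a^2]\max_bE[\epsilon_b^2]$, which we bound similarly: by Cauchy--Schwarz on $\sum_{a\ne b}L_{ab}^2 v_av_b\le C^2\sum_a v_a^2$, where $v_a=E[\epsilon_a^2]$ and $v_a^2\lesssim\sum_{i\in\lambda_a}Y_i^4$. Dividing by $m^2$, everything is $o(1)$, and Chebyshev concludes.
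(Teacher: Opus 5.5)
Your proposal is correct and follows the paper's proof essentially step for step. Part (a) is the same edgewise expectation computation with each $\var[\hat\Delta_{j,n}]$ collecting weight $\mathrm{deg}_{j,m}$. Part (b) uses the same decomposition $\hat\delta_n'L\hat\delta_n=\delta_n'L\delta_n+2\delta_n'L\varepsilon_n+\varepsilon_n'L\varepsilon_n$, and the same three bounds:
\begin{itemize}
\item the Cauchy--Schwarz bound $\sum_j (L\delta_n)_j^2\var[\hat\Delta_{j,n}]\le(\sum_j (L\delta_n)_j^4)^{1/2}(\sum_j\var[\hat\Delta_{j,n}]^2)^{1/2}$;
\item a Jensen/H\"older bound $\sum_j (L\delta_n)_j^4\lesssim C^4\sum_j\Delta_{j,n}^4$;
\item the bound $v_av_b\le(v_a^2+v_b^2)/2$ for the off-diagonal quadratic piece.
\end{itemize}

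The only cosmetic difference is that you apply Chebyshev to the total variance rather than treating $A_n$ and $B_n$ separately. That works via $\var[X+Y]\le 2\var[X]+2\var[Y]$; you do need this inequality, because the linear--quadratic covariance involves third moments and need not vanish. In a final write-up, drop the abandoned max-times-sum bounds.
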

Theorem \ref{thm:main} justifies using $\hat{V}_n(\omega_m)$ for inference about $\Delta_n$ and further demonstrates that the magnitude of its bias depends explicitly on how the graph $\omega_m$ places weights on the strata vertices. Interpreted as a function of $\omega_m$, the bias is smaller when the graph places more weight on pairs of strata which have similar stratum-level average treatment effects. We say that the graph is \emph{degree-calibrated} if
\begin{equation}\label{eq:degree_calibration}
\mathrm{deg}_{j,m} = \sum_{b\ne j}\omega_{jb,m}=1
\text{ for every }1 \le j \le m~.
\end{equation}
Degree calibration means that each stratum has weighted degree equal to one, just as in a perfect matching, but allows for this unit of edge weight to be distributed fractionally across several neighboring strata.  If $\omega_m$ is degree-calibrated then $\mathrm{deg}_{j,m} = 1$ for all $1 \le j \le m$ and thus by Theorem \ref{thm:main},
\[
E[\hat V_n(\omega_m)]-\var[\hat\Delta_n]
=
\frac{1}{m^2}\sum_{1\le a<b\le m}
\omega_{ab,m}(\Delta_{a,n}-\Delta_{b,n})^2\ge 0~.
\]
As a consequence, degree-calibration guarantees that $\hat{V}_n(\omega_m)$ is upward-biased and ensures that the ``excess'' bias in the second component of the expression in \eqref{eq:bias} is zero. Accordingly, in the results that follow we will restrict ourselves to graph-Laplacian estimators constructed from degree-calibrated graphs. The specific choice of graph controls the size of the bias; this bias will be small whenever vertices joined by large-weight edges have similar average treatment effects. 

In practice, the stratum-level treatment effects will not be known, so the graph must be constructed using baseline observed covariates. With this in mind, for $1 \le j \le m$, let
\[ \bar X_{j,n} = \frac{1}{k}\sum_{i \in \lambda_j}X_i\]
denote the stratum-level covariate means. For pairs of strata $1 \le a, b \le m$ let $c_{ab,m}$ denote a pairwise stratum \emph{cost} constructed from $\bar X_{a,n}$ and $\bar X_{b,n}$. It could be, for instance, the squared Euclidean distance $\|\bar X_{a,n} - \bar X_{b,n}\|^2$. Suppose for the moment that $m$ is even and define
\[
\mathcal M_m
:=
\left\{
\omega = (\omega_{ab})_{1 \leq a, b \leq m}:\omega_{ab}=\omega_{ba}\in\{0,1\} \text{ and } \omega_{aa}=0~,
\sum_{b\ne a}\omega_{ab}=1\text{ for } 1 \leq a \leq m
\right\}~,
\]
so that each $\omega_m\in\mathcal M_m$ is a perfect matching of the strata. A natural implementation of the perfect-matching estimator chooses a \emph{minimum-cost} perfect matching:
\begin{equation} \label{eq:perfect_matching}
\omega_m^{\rm PM}\in\argmin_{\omega_m\in\mathcal M_m}
\sum_{1\le a<b\le m} c_{ab,m}\omega_{ab,m}~.    
\end{equation}
Methods to solve such minimum-cost perfect matching problems are well-studied in the literature: see, for instance, \cite{derigs1988solving}, \cite{greevy2004optimal}. The bias of $\hat{V}_n(\omega_m^{\rm PM})$ should be small whenever $\omega_m^{\rm PM}$ joins vertices with similar average treatment effects. In Section \ref{sec:variances}, we formalize this intuition by defining an asymptotic benchmark, denoted $V^{\rm obs}$, and providing conditions under which the minimum-cost perfect matching attains it.

The intuitive appeal of this construction is that it uses baseline covariates to place edge weight on strata that are expected to have similar treatment effects. However, the design-based paradigm does not immediately ensure that this will be the case. In the remainder of the paper we consider the bias properties of this construction, and degree-calibrated graph choices more generally, from two perspectives. In Section \ref{sec:minimax} we quantify how costly a perfect matching can be from a minimax-bias perspective when no restrictions link the baseline covariates to the unknown treatment effects; this motivates a regularized version of the perfect matching construction. In Section \ref{sec:variances} we introduce an asymptotic framework where baseline covariates are informative about treatment-effect heterogeneity and contrast the limiting behavior of $\hat{V}_n(\omega_m)$ for ``local'' graphs like $\omega_m^{\rm PM}$ versus the ``non-local'' complete graph $\omega_m^{\rm CG}$. 

\begin{remark}\label{rem:AI}
The nearest-neighbor variance estimator developed in \cite{abadie2008estimation}, originally proposed for estimating the conditional variance $\var[\hat{\Delta}_n|X^{(n)}]$ in a super-population framework, could also in principle be used as an estimator of $\var[\hat{\Delta}_n]$ in the design-based setting we consider here. Indeed, their estimator can also be written as a graph-Laplacian estimator, but we note that the estimator will not generally be degree-calibrated.  Indeed, for each stratum $1 \le j \le m$, let $\eta(j) \in \{1, \ldots m\} \setminus \{j\}$ be stratum $j$'s ``nearest neighbor'' (in practice, $\eta(j)$ would be selected using baseline covariates, but for the purposes of this discussion it is not relevant exactly how $\eta(j)$ is selected).  The Abadie--Imbens estimator is given by
\[
\hat V_n^{\rm AI}
=
\frac{1}{2m^2}\sum_{1 \le j \le m}
(\hat\Delta_{j,n}-\hat\Delta_{\eta(j),n})^2~.
\]
Equivalently,
\[
\hat V_n^{\rm AI}=\hat V_n(\omega_m^{\rm AI})~,
\qquad
\omega_{ab,m}^{\rm AI}
=
\frac{1}{2} \big ( I\{\eta(a)=b\}+ I\{\eta(b)=a\} \big )~,
\]
so
\[
\mathrm{deg}_{j,m}=\frac{1+|\{b:\eta(b)=j\}|}{2}~.
\]
Therefore, the estimator $\hat V_n^{\rm AI}$ is degree-calibrated only in the special case where every stratum is used exactly once as another stratum's nearest neighbor. Outside of this special case, $\mathrm{deg}_{j,m}$ is not guaranteed to be larger than $1$ and as a consequence the estimator is not generally upward-biased under Assumption \ref{ass:dn} alone.
\end{remark}

\section{Minimax Bias of Graph-Laplacian Variance Estimators}\label{sec:minimax}

Theorem \ref{thm:main} demonstrates that variance estimators constructed from degree-calibrated graphs are guaranteed to be upward-biased, and thus through the lens of Theorem \ref{thm:upward} any such estimator can be used to construct valid confidence intervals for $\Delta_n$. In Section \ref{sec:main_var}, we argued that the bias-minimizing graph should join vertices with similar stratum-level average treatment effects. However, in practice these treatment effects are not known a priori.  In this section, we study this bias minimization problem from a minimax perspective in a purely design-based environment.

The first result of this section establishes that variance estimators constructed from degree-calibrated graphs cannot be uniformly ordered in terms of their biases without further restrictions on the finite population.
\begin{proposition}\label{prop:no_uniform_ordering}
Let $\omega_m$ and $\tilde \omega_m$ be two degree-calibrated graphs. If
\begin{equation}\label{eq:order}
\sum_{1 \le a<b \le m}\omega_{ab,m}(x_a-x_b)^2
\le
\sum_{1 \le a<b \le m}\tilde \omega_{ab,m}(x_a-x_b)^2
\end{equation}
for every $x\in\mathbb R^m$, then $\omega_m=\tilde \omega_m$.
\end{proposition}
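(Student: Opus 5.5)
The plan is to rewrite the hypothesis as a matrix inequality and use the fact that degree calibration fixes the trace. By \eqref{eq:x_quadratic}, condition \eqref{eq:order} says exactly that $x'L_m(\omega_m)x \le x'L_m(\tilde\omega_m)x$ for all $x \in \mathbb R^m$. Equivalently, the symmetric matrix
\[
M := L_m(\tilde\omega_m) - L_m(\omega_m)
\]
is positive semidefinite. I will show that $M$ has zero trace and conclude that $M = 0$.

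For the trace, I use \eqref{eq:laplacian_diag}: the diagonal entries of $L_m(\omega_m)$ are the weighted degrees $\mathrm{deg}_{j,m}$. Under degree calibration \eqref{eq:degree_calibration} these all equal one, for both $\omega_m$ and $\tilde\omega_m$. Hence every diagonal entry $M_{jj}$ is zero, and in particular $\tr(M) = 0$.

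A positive semidefinite matrix has nonnegative eigenvalues summing to its trace. With zero trace, all eigenvalues vanish, so $M = 0$. The same conclusion also follows more elementarily. Each diagonal entry is zero, and for a PSD matrix $M_{ab}^2 \le M_{aa}M_{bb} = 0$, so all off-diagonal entries vanish too; this comes from taking $x = e_a \pm t e_b$.

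Finally, by \eqref{eq:laplacian_offdiag}, the off-diagonal entries of $M$ are $\omega_{ab,m} - \tilde\omega_{ab,m}$. These are therefore zero for all $a \neq b$. The diagonal weights are both zero by \eqref{eq:graph_weights}, so $\omega_m = \tilde\omega_m$.

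There is no real obstacle here. The only point needing care is the translation of \eqref{eq:order} into positive semidefiniteness via the Laplacian quadratic form, together with the observation that degree calibration pins the diagonal to the identity.
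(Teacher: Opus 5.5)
Your proposal is correct and matches the paper's proof essentially step for step. Both arguments rewrite \eqref{eq:order} as positive semidefiniteness of $L_m(\tilde\omega_m)-L_m(\omega_m)$, use degree calibration to get zero diagonal and hence zero trace, conclude the difference vanishes, and then read off $\omega_m=\tilde\omega_m$ from the off-diagonal entries. Your elementary $2\times 2$-minor argument is a valid alternative justification for the ``PSD with zero trace implies zero'' step.
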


Proposition \ref{prop:no_uniform_ordering} shows that the bias of two non-identical degree-calibrated graph-Laplacian variance estimators, including those obtained from the complete graph and a perfect-matching graph in particular, cannot be uniformly ordered over all finite populations. The result follows from the quadratic-form representation of the bias: if $L_m$ and $\tilde L_m$ are the graph-Laplacians associated with $\omega_m$ and $\tilde \omega_m$, then \eqref{eq:order} implies $\tilde L_m - L_m$ must be positive semidefinite. Degree-calibration further implies that the diagonal elements of $\tilde L_m - L_m$ must be zero. Combining these two facts implies that $\tilde L_m - L_m$ must be identically zero.

In light of Proposition \ref{prop:no_uniform_ordering}, a natural next question is which graph-Laplacian estimator is \emph{minimax} bias-optimal. Towards that end, recall from Theorem \ref{thm:main} that for a degree-calibrated graph, the (normalized) bias is given by
\[
n(E[\hat V_n(\omega_m)]-\var[\hat\Delta_n])
=
k \cdot \frac{1}{m}\delta_n'L_m(\omega_m)\delta_n~,
\]
where $\delta_n = (\Delta_{1, n}, \dots, \Delta_{m, n})'$. Since $k$ is fixed throughout the paper, the minimax comparison is unchanged by this constant factor.
Define the projection matrix
\[P_m = I_m - \frac{1}{m}\iota_m\iota_m'~,\]
and, for $x\in\mathbb R^m$, write $\|x\|_m^2=m^{-1}\sum_{1 \le j \le m}x_j^2$.  Then, $P_m \delta_n = (\Delta_{1, n} - \Delta_n, \dots, \Delta_{m, n} - \Delta_n)'$, and $\|P_m \delta_n\|_m^2$ is the population variance of stratum-level average treatment effects. Let
\[\Omega_m:= \left\{\omega = (\omega_{ab})_{1 \leq a, b \leq m}: \omega_{ab} = \omega_{ba} \ge 0 \text{ for } 1 \leq a, b \leq m, \omega_{aa} = 0 \text{ and } \sum_{b \ne a}\omega_{ab} = 1 \text{ for } 1 \leq a \leq m \right\}~,\]
denote the set of all degree-calibrated graphs on $m$ strata. For conformable matrices $A$ and $B$, we write $A \preceq B$ whenever $B - A$ is positive semidefinite. The theorem below characterizes the graph-Laplacian estimator that is minimax optimal within $\Omega_m$ for the normalized bias over the class of populations with bounded stratum-level treatment effect variance.
%Note that because $\omega_m \in \Omega_m$ is degree-calibrated, its Laplacian satisfies
%\[ L_m(\omega_m) = I_m - \omega_m~. \]
\begin{theorem}\label{thm:minimax}
Fix $C>0$. Then,
\begin{enumerate}[\rm (a)]
    \item \[
\inf_{\omega_m\in\Omega_m}
\sup_{\delta_n \in \mathbb R^m: \|P_m \delta_n\|_m^2\le C}
\frac{1}{m}\delta_n'L_m(\omega_m)\delta_n
=
C\frac{m}{m-1}~,
\]
where the infimum is uniquely attained by $\omega_m^{\rm CG}$. 
\item If $\omega_m\in\Omega_m$ satisfies $L_m(\omega_m)\preceq \kappa_m P_m$ for $\kappa_m \geq 1$, then
\[
\sup_{\delta_n \in \mathbb R^m: \|P_m \delta_n\|_m^2\le C}
\frac{1}{m}\delta_n'L_m(\omega_m)\delta_n
\le
\kappa_m C~.
\]
\item Suppose $m$ is even and let $\omega_m\in\mathcal M_m$ be any perfect matching graph. Then,
\[
\sup_{\delta_n \in \mathbb R^m: \|P_m \delta_n\|_m^2\le C}
\frac{1}{m}\delta_n'L_m(\omega_m)\delta_n
=
2C~.
\]
\end{enumerate}
\end{theorem}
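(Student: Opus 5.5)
The plan is to reduce everything to the spectral norm of $L_m(\omega_m)$ on the subspace orthogonal to $\iota_m$.

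Since every Laplacian satisfies $L_m\iota_m=0$, we have $\delta_n'L_m\delta_n=(P_m\delta_n)'L_m(P_m\delta_n)$. Writing $u=P_m\delta_n$, the constraint becomes $u'u\le Cm$ with $u\perp\iota_m$. Hence
\[
\sup_{\delta_n:\|P_m\delta_n\|_m^2\le C}\frac1m\delta_n'L_m\delta_n=C\,\lambda^\ast(\omega_m),
\]
where $\lambda^\ast(\omega_m)$ is the largest eigenvalue of $L_m(\omega_m)$ restricted to $\iota_m^\perp$. Because $\iota_m$ is itself an eigenvector (eigenvalue $0$) and $L_m\succeq0$, this is simply the largest eigenvalue of $L_m$, or $0$ if $L_m=0$ (impossible under degree calibration). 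Each part then becomes an eigenvalue statement.

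For (a), degree calibration gives $\tr L_m(\omega_m)=\sum_j \mathrm{deg}_{j,m}=m$. Since $L_m$ has eigenvalue $0$ on $\iota_m$, the remaining $m-1$ eigenvalues sum to $m$. Therefore $\lambda^\ast\ge m/(m-1)$, with equality iff all $m-1$ eigenvalues on $\iota_m^\perp$ equal $m/(m-1)$, i.e.\ $L_m=\frac{m}{m-1}P_m$. Reading off the off-diagonal entries gives $\omega_{ab}=1/(m-1)$, which is $\omega_m^{\rm CG}$. For existence, note that $L_m(\omega_m^{\rm CG})=\frac{m}{m-1}I_m-\frac1{m-1}\iota_m\iota_m'=\frac{m}{m-1}P_m$, so the infimum is attained at $\omega_m^{\rm CG}$ and only there. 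This step, uniqueness via the trace argument, is the main point of the theorem, but it is short once the reduction is in place.

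For (b), $L_m\preceq\kappa_mP_m$ gives $\delta_n'L_m\delta_n\le\kappa_m\delta_n'P_m\delta_n=\kappa_m m\|P_m\delta_n\|_m^2\le\kappa_m mC$. Dividing by $m$ yields the bound, so this part is immediate.

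For (c), a perfect matching's Laplacian is block diagonal with $m/2$ blocks $\begin{pmatrix}1&-1\\-1&1\end{pmatrix}$, whose eigenvalues are $0$ and $2$. Thus $\lambda^\ast\le2$. To show attainment, take $\delta_n$ with entries $+1$ at $a_j$ and $-1$ at $b_j$ for every matched pair $\{a_j,b_j\}$, scaled by $\sqrt C$. This vector is orthogonal to $\iota_m$ (this is where $m$ even and perfectness matter), it satisfies $\|P_m\delta_n\|_m^2=C$, and it gives $\frac1m\delta_n'L_m\delta_n=\frac1m\cdot\frac m2\cdot4C=2C$.

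The only subtlety throughout is the reduction step: we must check that the supremum is over the full eigenspace on $\iota_m^\perp$, with the shift along $\iota_m$ being irrelevant, and this follows from $L_m\iota_m=0$.
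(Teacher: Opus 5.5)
Your proposal is correct and follows essentially the same route as the paper: you reduce the supremum to $C\,\lambda_{\max}(L_m(\omega_m))$ via $L_m(\omega_m)\iota_m=0$, use $\tr L_m(\omega_m)=m$ to get the lower bound $m/(m-1)$ and uniqueness through $L_m(\omega_m)=\frac{m}{m-1}P_m$, obtain (b) directly from the Loewner order, and use the $2\times 2$ block structure for (c). The only cosmetic difference is that in (c) you exhibit the attaining vector explicitly, whereas the paper cites the eigenvalue $2$ with multiplicity $m/2$.
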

Theorem \ref{thm:minimax} (a) demonstrates that the corresponding graph-Laplacian estimator $\hat V_n(\omega_m^{\rm CG})$ is uniquely minimax optimal for normalized bias within $\Omega_m$ when treatment-effect heterogeneity is restricted only through a bound on the variance of stratum-level average treatment effects. Theorem \ref{thm:minimax}(b) further shows that if the graph-Laplacian of $\omega_m$ is ``bounded'' in an appropriate sense by $\kappa_m P_m$, then its maximum normalized bias is bounded proportionally to the constant $\kappa_m$. Note that in order for $L_m(\omega_m) \preceq \kappa_m P_m$ to be feasible, it has to be the case that $\kappa_m \geq \frac{m}{m - 1}$. Moreover, it follows from direct calculation that $L_m(\omega_m^{\rm CG}) = \frac{m}{m - 1} P_m$, so that the bound in part (b) is attained by $\omega_m^{\rm CG}$ when $\kappa_m=\frac{m}{m-1}$ and coincides with the minimax value in part (a). To gain some intuition for the role of $\kappa_m$, let $e_j \in \mathbb R^m$ denote the $j$th standard basis vector. Applying the definition of $\kappa_mP_m - L_m(\omega_m)$ being positive semidefinite to the vector $e_a - e_b$ for $1 \leq a \neq b \leq m$, we obtain by direct calculation that
\[\omega_{ab,m} \le \kappa_m - 1~,\]
so that the constraint $L_m(\omega_m) \preceq \kappa_m P_m$ forces the graph to ``spread'' weight over a larger set of neighbors when $\kappa_m$ is small. The complete graph spreads weight over all possible neighbors. In contrast, a perfect matching graph applies each edge weight to a single neighbor; Theorem \ref{thm:minimax}(c) computes the corresponding maximum normalized bias in this case.
%Moreover, note from \eqref{eq:imai_bias} that this variance is proportional to the bias of $\hat V_n(\omega_m^{\rm CG})$.
Combining Theorem \ref{thm:minimax}(a) and (c), we see that the worst-case normalized bias of the variance estimator constructed from any perfect matching is $2(m-1)/m$ times the worst-case normalized bias of $\hat{V}_n(\omega_m^{\rm CG})$. Thus, while a perfect matching may have much smaller bias when it pairs strata with similar treatment effects, Theorem \ref{thm:minimax} demonstrates that it also exposes the estimator to larger worst-case normalized bias.

The discussion above motivates us to consider the following \emph{regularized} minimum-cost optimization problem, which keeps the baseline-covariate cost criterion used for the perfect matching while imposing a spectral constraint on the graph-Laplacian that controls worst-case normalized bias:
\begin{equation}\label{eq:sdp}
\begin{aligned}
\omega_m^{\rm SDP}(\kappa_m)
\in \argmin_{\omega_m \in \Omega_m}\quad
& \sum_{1 \le a < b \le m} c_{ab,m} \omega_{ab,m} \\
\text{subject to}\quad
& L_m(\omega_m) \preceq \kappa_m P_m~.
\end{aligned}
\end{equation}
Note that $\omega_m^{\rm SDP}(\kappa_m)$ is the fractional relaxation of the minimum-cost perfect matching problem, with an additional regularization constraint. This problem is a semidefinite program which can be solved using existing software packages such as {\tt CVXR}. Recalling the second conclusion of Theorem \ref{thm:minimax}, this constraint ensures that the maximum normalized bias of $\omega_m^{\rm SDP}$ over the set of populations such that $\|P_m\delta_n\|_m^2 \le C$ is bounded by $\kappa_m C$.  In Section \ref{sec:variances} we explain how a careful choice of the tuning parameter $\kappa_m$ allows $\hat{V}_n(\omega_m^{\rm SDP}(\kappa_m))$ to attain the observable asymptotic benchmark defined there, while simultaneously ensuring that its worst-case normalized bias in the sense of Theorem \ref{thm:minimax} can be made arbitrarily close to the minimax value, asymptotically.

\begin{remark}\label{rem:alternatives}
Instead of using $\omega_m^{\rm SDP}(\kappa_m)$, one might instead consider combining the benefits of $\omega_m^{\rm PM}$ and the complete graph $\omega_m^{\rm CG}$ using a convex combination 
\[
(1-\pi_m)\omega_m^{\rm CG}+\pi_m \omega_m^{\rm PM}~,
\]
with $0 \le \pi_m \le 1$. This graph is degree-calibrated and thus would produce an upward-biased graph-Laplacian variance estimator. However, this graph cannot generally accomplish both the goals of attaining the observable asymptotic benchmark defined in Section \ref{sec:variances} and guarding against worst-case normalized bias at the same time. Approaching the minimax normalized-bias bound requires $\pi_m\to0$, while attaining the observable asymptotic benchmark generally requires $\pi_m\to1$. The regularized graph in \eqref{eq:sdp} is designed to avoid this conflict by spreading each stratum's edge weight across many nearby strata: the regularization constraint controls worst-case normalized bias, while a careful selection of the edges allows the resulting estimator to  match the bias of the minimum-cost perfect matching estimator asymptotically.
\end{remark}

The next section studies the complementary setting in which the baseline covariates used to construct the graph are informative about treatment-effect heterogeneity. In that setting, ``local'' graph sequences can reduce asymptotic bias relative to the complete graph while preserving the finite-population upward-bias guarantee from Theorem \ref{thm:main}.

% Note however, that $\omega_m$ being degree-calibrated does not require that $\omega_m$ be a perfect matching, and in practice the integrality restriction can force nonlocal comparisons. For instance, if a group of three strata are all close to one another and far away from the other strata, a perfect matching must match one of these three strata to a stratum outside of the group.  In contrast, a \emph{fractional} graph can keep all comparisons within the group by setting
% \[
% \omega_{12}=\omega_{13}=\omega_{23}=\frac{1}2~,
% \]
% which still maintains degree calibration. As an illustration of this phenomenon in practice, consider the randomized evaluation of Afghanistan's National Solidarity Programme in \citep{beath2015nsp}.  The evaluation covered 500 villages in 10 districts, with the 50 villages in each district paired into 25 matched pairs. Viewing these matched village pairs as our strata, a perfect matching cannot remain entirely within district. Related issues may arise in experiments with many small sites, districts, schools, or markets when these local sites contain an odd number of original strata.

% These considerations motivate replacing the integral matching problem by its fractional relaxation.  Define the class of degree-calibrated fractional graphs by
% \[
% \Omega_m
% :=
% \left\{
% \omega:\omega_{ab}=\omega_{ba}\ge 0,\ \omega_{aa}=0,
% \sum_{b\ne a}\omega_{ab}=1\text{ for all }a
% \right\}.
% \]

\section{Asymptotics for Graph-Laplacian Variance Estimators} \label{sec:variances}

In this section, we provide a framework that permits us to further discriminate among the variance estimators of $\var[\hat{\Delta}_n]$ introduced in Sections \ref{sec:main} and \ref{sec:minimax} by characterizing the limit of $nE[\hat{V}_n(\omega_m)]$ in a framework where baseline covariates are smoothly related to the underlying treatment effect heterogeneity. The framework below is formalized by restrictions on the sequence of finite populations, strata, and graph weights.  These restrictions are motivated by a limiting thought experiment in which $W^{(n)}$ is itself a realization of an i.i.d.\ sample from a fixed probability distribution, and the strata $\Lambda_n$ are formed in such a way that units with similar baseline observed covariates are grouped together.  
%As explained in Remark \ref{rem:equi}, however, the restrictions can also be shown to hold in other contexts.  
% For a random vector $\tilde W=(\tilde Y(1),\tilde Y(0),\tilde X)\sim Q$, define
% \[
% \mu_d(x):=E_Q[\tilde Y(d)\mid \tilde X=x],
% \qquad
% \theta(x):=\mu_1(x)-\mu_0(x),
% \]
% and
% \[
% \Omega
% :=
% E_Q\left[\var_Q[\tilde Y(1)-\tilde Y(0)\mid \tilde X]\right]~.
% \]
% We also define
% \begin{equation} \label{eq:Vobs}
% V^{\rm obs}
% :=
% E_Q\left[
% \frac{\var_Q(\tilde Y(1)\mid \tilde X)}{\eta}
% +
% \frac{\var_Q(\tilde Y(0)\mid \tilde X)}{1-\eta}
% \right].
% \end{equation}

\begin{assumption}\label{ass:super}
For some random vector $\tilde W=(\tilde Y(1),\tilde Y(0),\tilde X)\sim Q$ with
$\max_{d\in\{0,1\}}E_Q[\tilde Y(d)^2]<\infty$, $(W^{(n)}, \Lambda_n)$ satisfy the following requirements:
\begin{enumerate}[(a)]
\item As $n\to\infty$,
\begin{align*}
\frac{1}n\sum_{1 \le i \le n} Y_i(d)^r & \to E_Q[\tilde Y(d)^r] \text{ for }d\in\{0,1\},\ r\in\{1,2\} \\
\frac{1}n\sum_{1 \le i \le n} Y_i(1)Y_i(0) & \to E_Q[\tilde Y(1)\tilde Y(0)]~.
\end{align*}
\item For every $d,d'\in\{0,1\}$,
\[
\frac{1}{m}\sum_{1 \le j \le m}
\frac{1}{k(k-1)}
\sum_{\substack{i,i'\in\lambda_j\\ i\ne i'}}
Y_i(d)Y_{i'}(d')
\to
E_Q \left [ E_Q[\tilde{Y}(d)\mid \tilde{X}]E_Q[\tilde{Y}(d')\mid \tilde X] \right ]~.
\]
\end{enumerate}
\end{assumption}

Assumption \ref{ass:super} imposes deterministic convergence conditions on the fixed finite populations.  Part (a) requires the empirical moments of the potential outcomes to converge to their counterparts under $Q$; these conditions hold by the law of large numbers if $W^{(n)}$ is modeled as an i.i.d.\ sample from $Q$.  Part (b) requires within-stratum outcome products to converge as though units in the same stratum share a common limiting covariate value.  This condition can be verified for standard stratification schemes that group together units with similar covariate values, under the same i.i.d.\ thought experiment. See \cite{bai2022inference}, \cite{bai2024inference}, and \cite{cytrynbaum2021optimal} for related arguments.

Using Assumption \ref{ass:super}, we have the following expression for the limit of $\var[\hat{\Delta}_n]$ after appropriate normalization:
\begin{theorem}\label{thm:limit_V}
Under Assumptions \ref{ass:dn} and \ref{ass:super},
\[n\cdot\var[\hat{\Delta}_n] \rightarrow V~,\]
where
\[V := E_Q\left[\frac{\var_Q[\tilde{Y}(1)|\tilde{X}]}{\eta} + \frac{\var_Q[\tilde{Y}(0)|\tilde{X}]}{1- \eta}\right] - E_Q\left[\var_Q[\tilde{Y}(1) - \tilde{Y}(0)\mid \tilde{X}]\right]~.\]
\end{theorem}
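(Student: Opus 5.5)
The plan is to start from the exact finite-population formula \eqref{eq:varDeltahat}. Multiplying it by $n$ gives
\[
n\cdot\var[\hat\Delta_n]=\frac{1}{m}\sum_{1\le j\le m}\left(\frac{S_j^2(1)}{\eta}+\frac{S_j^2(0)}{1-\eta}-S_{j,\Delta}^2\right)~.
\]
It then suffices to find the limits of the three stratum averages $\frac1m\sum_j S_j^2(1)$, $\frac1m\sum_j S_j^2(0)$ and $\frac1m\sum_j S_j^2_{j,\Delta}$ separately. Since $\eta$ is fixed, the limits combine linearly to give $V$. To do this I will rewrite each within-stratum variance in terms of the two kinds of quantities that Assumption \ref{ass:super} controls: averages of squares, and within-stratum cross products over distinct units.

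The key algebraic identity holds for any numbers $(a_i:i\in\lambda_j)$ with mean $\bar a_j$. It follows from $\sum_i(a_i-\bar a_j)^2=\sum_i a_i^2-\frac1k\big(\sum_i a_i^2+\sum_{i\neq i'}a_ia_{i'}\big)$:
\[
\frac{1}{k-1}\sum_{i\in\lambda_j}(a_i-\bar a_j)^2=\frac1k\sum_{i\in\lambda_j}a_i^2-\frac{1}{k(k-1)}\sum_{\substack{i,i'\in\lambda_j\\ i\neq i'}}a_ia_{i'}~.
\]
Apply it with $a_i=Y_i(d)$ and average over $j$. Since $n=mk$, the first term becomes $\frac1n\sum_i Y_i(d)^2$, so
\[
\frac1m\sum_j S_j^2(d)=\frac1n\sum_{1\le i\le n}Y_i(d)^2-\frac1m\sum_j\frac{1}{k(k-1)}\sum_{i\neq i'}Y_i(d)Y_{i'}(d)~.
\]
By Assumption \ref{ass:super}(a) with $r=2$ and Assumption \ref{ass:super}(b) with $d=d'$, this converges to $E_Q[\tilde Y(d)^2]-E_Q[\mu_d(\tilde X)^2]=E_Q[\var_Q[\tilde Y(d)\mid\tilde X]]$, where $\mu_d(x):=E_Q[\tilde Y(d)\mid\tilde X=x]$. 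The hypothesis $\max_dE_Q[\tilde Y(d)^2]<\infty$ makes all these moments finite.

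For $S_{j,\Delta}^2$ I will apply the same identity with $a_i=Y_i(1)-Y_i(0)$, whose within-stratum mean is $\Delta_{j,n}$, and expand both pieces bilinearly. The average of squares becomes
\[
\frac1n\sum_i Y_i(1)^2+\frac1n\sum_i Y_i(0)^2-\frac2n\sum_i Y_i(1)Y_i(0)~,
\]
which converges by part (a), including its cross-moment condition, to $E_Q[(\tilde Y(1)-\tilde Y(0))^2]$. The cross-product term splits into four sums of the form $\sum_{i\neq i'}Y_i(d)Y_{i'}(d')$. The two mixed ones, $(d,d')=(1,0)$ and $(0,1)$, are handled by part (b) for each ordered pair; they coincide anyway by relabeling $i\leftrightarrow i'$. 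Their combined limit is
\[
E_Q[\mu_1^2]+E_Q[\mu_0^2]-2E_Q[\mu_1\mu_0]=E_Q[(\mu_1(\tilde X)-\mu_0(\tilde X))^2]~,
\]
so $\frac1m\sum_jS_{j,\Delta}^2\to E_Q[(\tilde Y(1)-\tilde Y(0))^2]-E_Q[(\mu_1-\mu_0)^2]=E_Q[\var_Q[\tilde Y(1)-\tilde Y(0)\mid\tilde X]]$.

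Combining the three limits with weights $1/\eta$, $1/(1-\eta)$ and $-1$ yields exactly $V$. There is no genuine analytic obstacle, since everything reduces to the deterministic limits in Assumption \ref{ass:super}. The main care point is the bookkeeping: the normalization $\frac1k\sum_{i\in\lambda_j}$ averaged over $j$ must match $\frac1n\sum_i$, and the ordered off-diagonal sums in the mixed terms must be matched correctly to part (b), including the factor $2$ and the symmetry between $(1,0)$ and $(0,1)$.
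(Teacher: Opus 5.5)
Your proof is correct and is essentially the paper's: both rewrite within-stratum sample variances as $\frac1n\sum_i(\cdot)^2$ minus the stratum-averaged off-diagonal cross products, and then invoke Assumption \ref{ass:super}(a)--(b). The difference is only in bookkeeping. The paper works with the single composite outcome $R_i = Y_i(1)/\eta + Y_i(0)/(1-\eta)$, via the representation $n\var[\hat\Delta_n]=\eta(1-\eta)\frac{1}{k-1}\frac1m\sum_j\sum_{i\in\lambda_j}(R_i-\bar R_j)^2$ from the proof of Theorem \ref{thm:normal}, and converts $\eta(1-\eta)\var_Q[\tilde R\mid\tilde X]$ into the three-term form only in the limit; you instead start from the Neyman decomposition \eqref{eq:varDeltahat} and take the three limits separately.
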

Theorem \ref{thm:limit_V} characterizes the limiting variance of $\hat{\Delta}_n$ when the sequence of finite populations satisfies the additional structure posited in Assumption \ref{ass:super}.  The expression for $V$ is the limiting analogue of the variance decomposition in \eqref{eq:varDeltahat}, with the second component being the average conditional variance of the treatment effect.  In light of the discussion in Remark \ref{rem:better_bounds}, we define the observable asymptotic benchmark for graph-Laplacian estimators as
\[V^{\rm obs}:= E_Q\left[\frac{\var_Q[\tilde{Y}(1)|\tilde{X}]}{\eta} + \frac{\var_Q[\tilde{Y}(0)|\tilde{X}]}{1- \eta}\right]~.\]
To study when graph-Laplacian estimators attain this benchmark, we impose one additional high-level assumption on the sequence of populations and graph weights.
\begin{assumption} \label{ass:limit_graph_products}
For the sequence of graphs $(\omega_m)_{m \geq 1}$, there exists a map $\Gamma_\omega:\mathbb R^{p_X} \rightarrow \mathbb R^{r_\omega}$ such that, for every $d,d'\in\{0,1\}$,
\[
\frac{1}{m}\sum_{1 \le a \le m}\sum_{b\ne a}\omega_{ab,m}
\bar Y_{a,n}(d)\bar Y_{b,n}(d')
\to
E_Q\left[E_Q[\tilde{Y}(d)\mid\Gamma_\omega(\tilde{X})]E_Q[\tilde{Y}(d')\mid\Gamma_\omega(\tilde{X})]\right]~.
\]
\end{assumption}
Assumption \ref{ass:limit_graph_products} is a graph-specific product-limit condition.  It states that average products of stratum-level potential outcomes across graph edges converge to a limit determined by $\Gamma_\omega(\tilde X)$.  The map $\Gamma_\omega$ summarizes the covariate information retained by the graph when it forms cross-stratum comparisons.  Note that Assumption \ref{ass:limit_graph_products} is not necessarily guaranteed to hold for any sequence $(\omega_m)$, but we document the functional form of $\Gamma_\omega$ for the two most relevant cases in Propositions \ref{prop:complete_graph_product}--\ref{prop:local_graph_product} below.

The next theorem establishes the limit of the expectation of the variance estimator:
\begin{theorem}\label{thm:main_var}
Suppose Assumption \ref{ass:dn} holds and Assumptions \ref{ass:super}--\ref{ass:limit_graph_products} hold for a sequence of degree-calibrated graphs $(\omega_m)_{m \geq 1}$.
Then
\[
nE[\hat V_n(\omega_m)]
\to
V_\omega:= V^{\rm obs}+k E_Q\left[\var_Q \big [E[\tilde Y(1) - \tilde Y(0) | \tilde X] \mid \Gamma_\omega(\tilde{X}) \big ]\right]~.
\]
Furthermore, if \eqref{eq:pop_moments}--\eqref{eq:pop_nondegen} hold and $n\{\hat V_n(\omega_m)-E[\hat V_n(\omega_m)]\}\xrightarrow{P}0$, then
\[
P \left\{\hat \Delta_n - \sqrt{\hat V_n(\omega_m)}\cdot z_{1 - \alpha / 2} \le  \Delta_n \le \hat \Delta_n + \sqrt{\hat V_n(\omega_m)}\cdot z_{1 - \alpha / 2} \right\}
\to
2\Phi(z_{1-\alpha/2}/\varsigma_\omega)-1~,
\]
where $\varsigma_\omega=(V/V_\omega)^{1/2}$.
\end{theorem}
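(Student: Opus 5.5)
We start from the exact identity in Theorem \ref{thm:main}(a). Because $\omega_m$ is degree-calibrated, it reads
\[
nE[\hat V_n(\omega_m)] = n\var[\hat\Delta_n] + \frac{k}{m}\sum_{1\le a<b\le m}\omega_{ab,m}(\Delta_{a,n}-\Delta_{b,n})^2~,
\]
using $n/m^2 = k/m$. Theorem \ref{thm:limit_V} gives $n\var[\hat\Delta_n]\to V = V^{\rm obs} - E_Q[\var_Q[\tilde Y(1)-\tilde Y(0)\mid\tilde X]]$. It therefore suffices to compute the limit of the graph term, which we denote $B_n$.

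Expanding the square and using degree calibration $\sum_{b\ne a}\omega_{ab,m}=1$ gives
\[
B_n = \frac{k}{m}\sum_{a}\Delta_{a,n}^2 - \frac{k}{m}\sum_a\sum_{b\ne a}\omega_{ab,m}\Delta_{a,n}\Delta_{b,n}~.
\]
Write $\Delta_{j,n} = \bar Y_{j,n}(1)-\bar Y_{j,n}(0)$ and expand both products bilinearly in $(d,d')$.

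For the first sum, note that
\[
\bar Y_{j,n}(d)\bar Y_{j,n}(d') = \frac{1}{k^2}\sum_{i\in\lambda_j}Y_i(d)Y_i(d') + \frac{1}{k^2}\sum_{i\ne i'\in\lambda_j}Y_i(d)Y_{i'}(d')~.
\]
The diagonal part, averaged over $j$, is $\frac1k\cdot\frac1n\sum_i Y_i(d)Y_i(d')$. By Assumption \ref{ass:super}(a) it converges to $\frac1k E_Q[\tilde Y(d)\tilde Y(d')]$; the cross term $d\ne d'$ is handled through the $Y(1)Y(0)$ moment. The off-diagonal part equals $\frac{k-1}{k}$ times the quantity in Assumption \ref{ass:super}(b), so it converges to $\frac{k-1}{k}E_Q[\mu_d\mu_{d'}]$, where $\mu_d(x)=E_Q[\tilde Y(d)\mid\tilde X=x]$. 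Combining the $(d,d')$ terms with signs and writing $\theta=\mu_1-\mu_0$, we get
\[
\frac{k}{m}\sum_a\Delta_{a,n}^2 \to E_Q[(\tilde Y(1)-\tilde Y(0))^2] + (k-1)E_Q[\theta(\tilde X)^2]~.
\]
For the second sum, Assumption \ref{ass:limit_graph_products} applied bilinearly gives convergence of $\frac{k}{m}\sum_a\sum_{b\ne a}\omega_{ab,m}\Delta_{a,n}\Delta_{b,n}$ to $kE_Q[E_Q[\theta(\tilde X)\mid\Gamma_\omega(\tilde X)]^2]$. Here we use the tower property $E_Q[\tilde Y(d)\mid\Gamma_\omega]=E_Q[\mu_d\mid\Gamma_\omega]$, which holds because $\Gamma_\omega$ is a function of $\tilde X$.

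The limit of $B_n$ is therefore
\[
E_Q[(\tilde Y(1)-\tilde Y(0))^2] - E_Q[\theta^2] + kE_Q[\theta^2] - kE_Q[E_Q[\theta\mid\Gamma_\omega]^2]
= E_Q[\var_Q[\tilde Y(1)-\tilde Y(0)\mid\tilde X]] + kE_Q[\var_Q[\theta(\tilde X)\mid\Gamma_\omega(\tilde X)]]~.
\]
Adding $V$, the conditional-variance term cancels, leaving $V_\omega$ as stated. The main obstacle is the bookkeeping in the bilinear expansions: the diagonal and within-stratum pieces must recombine exactly into $E[\var[\cdot\mid\tilde X]]$ plus $(k-1)E[\theta^2]$. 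This needs care with the $1/k$ versus $(k-1)/k$ weights.

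For the coverage statement, write the event as $|\hat\Delta_n-\Delta_n|/\sqrt{\var[\hat\Delta_n]}\le z_{1-\alpha/2}\sqrt{n\hat V_n(\omega_m)/(n\var[\hat\Delta_n])}$. Three facts combine here. First, under \eqref{eq:pop_moments}--\eqref{eq:pop_nondegen}, a finite-population CLT (the one underlying Theorem \ref{thm:upward}, e.g. Lindeberg across independent strata) shows the left side converges in distribution to $|N(0,1)|$. Second, the first part together with the consistency hypothesis gives $n\hat V_n(\omega_m)\xrightarrow{P}V_\omega$. Third, $n\var[\hat\Delta_n]\to V>0$, with positivity from \eqref{eq:pop_nondegen}. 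By Slutsky the threshold converges in probability to $z_{1-\alpha/2}(V_\omega/V)^{1/2}=z_{1-\alpha/2}/\varsigma_\omega$. Hence the coverage probability tends to $P\{|N(0,1)|\le z_{1-\alpha/2}/\varsigma_\omega\}=2\Phi(z_{1-\alpha/2}/\varsigma_\omega)-1$, using continuity of the normal distribution function.
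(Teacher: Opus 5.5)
Your proof is correct, but it reaches the limit of $nE[\hat V_n(\omega_m)]$ by a different decomposition than the paper.

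\textbf{The paper's route.} For this first part the paper uses neither Theorem \ref{thm:main}(a) nor Theorem \ref{thm:limit_V}. It rewrites $\hat V_n(\omega_m)$ via degree calibration as $m^{-2}\bigl(\sum_j \hat\Delta_{j,n}^2 - \sum_a\sum_{b\ne a}\omega_{ab,m}\hat\Delta_{a,n}\hat\Delta_{b,n}\bigr)$. It then computes $E[\hat\Delta_{j,n}^2]$ from scratch using the second-order assignment probabilities $E[D_iD_{i'}]$, $E[D_i(1-D_{i'})]$, and so on. This shows directly that $\frac{k}{m}\sum_j E[\hat\Delta_{j,n}^2]\to V^{\rm obs}+kE_Q[\theta(\tilde X)^2]$, where $\theta(x)=E_Q[\tilde Y(1)-\tilde Y(0)\mid \tilde X=x]$. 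Finally it subtracts the graph cross-product limit from Assumption \ref{ass:limit_graph_products}.

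\textbf{Your route.} You implicitly split $E[\hat\Delta_{j,n}^2]=\var[\hat\Delta_{j,n}]+\Delta_{j,n}^2$ through the exact bias identity. You take the variance piece from Theorem \ref{thm:limit_V}, which leaves only the deterministic quantity $\frac{k}{m}\delta_n'L_m(\omega_m)\delta_n$. Its limit needs no assignment moments at all. The price is that $E_Q[\var_Q[\tilde Y(1)-\tilde Y(0)\mid\tilde X]]$ appears and then cancels against the corresponding term in $V$, whereas in the paper $V^{\rm obs}$ emerges directly.

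\textbf{What each buys.} Your route gives an explicit limit for the normalized bias: $n(E[\hat V_n(\omega_m)]-\var[\hat\Delta_n])\to E_Q[\var_Q[\tilde Y(1)-\tilde Y(0)\mid\tilde X]]+kE_Q[\var_Q[\theta(\tilde X)\mid\Gamma_\omega(\tilde X)]]$. This makes $V_\omega\ge V>0$ immediate, so $\varsigma_\omega$ is well defined and $\varsigma_\omega\le 1$, meaning the limiting coverage is at least nominal. It also ties the asymptotics back to the finite-population bias identity. The paper's route is self-contained and does not depend on the $R_i$-based variance formula from the CLT proof that underlies Theorem \ref{thm:limit_V}.

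\textbf{Coverage.} Your coverage argument is essentially the paper's Slutsky argument: CLT from \eqref{eq:pop_moments}--\eqref{eq:pop_nondegen}, $n\var[\hat\Delta_n]\to V>0$, and $n\hat V_n(\omega_m)\xrightarrow{P}V_\omega$.
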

Theorem \ref{thm:main_var} demonstrates that $\hat V_n(\omega_m)$ attains the observable upper bound $V^{\rm obs}$ whenever the conditional average treatment effect $E[\tilde Y(1) - \tilde Y(0) | \tilde X]$ is measurable with respect to $\Gamma_\omega(\tilde X)$; in particular, this holds when the graph product limit has $\Gamma_\omega(\tilde X)=\tilde X$.  If the graph retains less covariate information, then the excess term is the residual variation in $E[\tilde Y(1) - \tilde Y(0) | \tilde X]$ left after conditioning on that coarser information.  

We verify in Proposition \ref{prop:complete_graph_product} below that, for the complete graph $\omega_m^{\rm CG}$, $\Gamma_\omega(\tilde X) \equiv 1$ is constant, so the excess term is $k\var_Q[E[\tilde Y(1) - \tilde Y(0) | \tilde X]]$ and vanishes only under homogeneous conditional average treatment effects.

\begin{proposition}\label{prop:complete_graph_product}
Suppose Assumption \ref{ass:super}(a) holds. Then, for $(\omega_m^{\rm CG})_{m \geq 1}$, Assumption \ref{ass:limit_graph_products} holds with $\Gamma_\omega \equiv 1$. If in addition Assumptions \ref{ass:dn} and \ref{ass:super}(b) hold, then $V_{\omega^{\rm CG}} = V^{\rm obs} + k \var_Q[E_Q[\tilde Y(1) - \tilde Y(0) | \tilde X]]$.
\end{proposition}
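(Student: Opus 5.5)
The plan is to compute the complete-graph edge sum explicitly in terms of the full sum over strata. First I write down the weighted cross product for $\omega_m^{\rm CG}$:
\[
\frac{1}{m}\sum_{1\le a\le m}\sum_{b\ne a}\frac{1}{m-1}\bar Y_{a,n}(d)\bar Y_{b,n}(d')
=\frac{1}{m(m-1)}\Big[\Big(\sum_{a}\bar Y_{a,n}(d)\Big)\Big(\sum_{b}\bar Y_{b,n}(d')\Big)-\sum_{a}\bar Y_{a,n}(d)\bar Y_{a,n}(d')\Big]~.
\]
Since $\frac{1}{m}\sum_a \bar Y_{a,n}(d)=\bar Y_n(d)$, the first term equals $\frac{m}{m-1}\bar Y_n(d)\bar Y_n(d')$. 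By Assumption~\ref{ass:super}(a), this converges to $E_Q[\tilde Y(d)]E_Q[\tilde Y(d')]$.

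It remains to show that the diagonal remainder $\frac{1}{m(m-1)}\sum_a \bar Y_{a,n}(d)\bar Y_{a,n}(d')$ vanishes. I bound it by Cauchy--Schwarz together with Jensen's inequality, $\bar Y_{a,n}(d)^2\le \frac1k\sum_{i\in\lambda_a}Y_i(d)^2$. This gives
\[
\Big|\frac{1}{m(m-1)}\sum_a \bar Y_{a,n}(d)\bar Y_{a,n}(d')\Big|\le \frac{1}{m-1}\Big(\frac1n\sum_i Y_i(d)^2\Big)^{1/2}\Big(\frac1n\sum_i Y_i(d')^2\Big)^{1/2}\to 0,
\]
because the second moments converge by Assumption~\ref{ass:super}(a). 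Since $E_Q[E_Q[\tilde Y(d)\mid 1]E_Q[\tilde Y(d')\mid 1]]=E_Q[\tilde Y(d)]E_Q[\tilde Y(d')]$, Assumption~\ref{ass:limit_graph_products} holds with $\Gamma_\omega\equiv 1$.

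For the second claim, I note that $\omega_m^{\rm CG}$ is degree-calibrated: each vertex has $m-1$ edges of weight $1/(m-1)$. Under Assumptions~\ref{ass:dn} and \ref{ass:super}, Theorem~\ref{thm:main_var} therefore applies. With $\Gamma_\omega$ constant, conditioning on $\Gamma_\omega(\tilde X)$ is trivial, so the excess term is $k\var_Q[E_Q[\tilde Y(1)-\tilde Y(0)\mid\tilde X]]$.

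I expect no serious obstacle. The only step needing care is the remainder bound, which uses only the second moments controlled by Assumption~\ref{ass:super}(a).
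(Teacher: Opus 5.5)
Your proposal is correct and follows essentially the same route as the paper: you decompose the complete-graph sum into $\frac{m}{m-1}\bar Y_n(d)\bar Y_n(d')$ minus a diagonal remainder, and you control the remainder by Cauchy--Schwarz, Jensen, and the second-moment convergence in Assumption \ref{ass:super}(a). The second claim then follows, as you say, from the degree calibration of $\omega_m^{\rm CG}$ and Theorem \ref{thm:main_var} with $\Gamma_\omega$ constant.
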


Next, we present sufficient conditions under which Assumption \ref{ass:limit_graph_products} holds with $\Gamma_\omega(x) = x$, so that $V_\omega = V^{\rm obs}$ in Theorem \ref{thm:main_var}. To begin, we require that a suitable decomposition of the potential outcomes exists:

\begin{assumption} \label{ass:lipschitz}
For $d\in\{0,1\}$ and $1 \leq i \leq n$, there exists $\mu_d(X_i)$ and $u_i(d)$ such that $Y_i(d)=\mu_d(X_i)+u_i(d)$ and
\begin{enumerate}[(a)]
\item $\mu_d:\mathbb R^{p_X}\to\mathbb R$ is Lipschitz.
\item For every $d,d'\in\{0,1\}$,
\[ \frac{1}n\sum_{1 \le i \le n}\mu_d(X_i)\mu_{d'}(X_i)\to E_Q \left [ E_Q[\tilde{Y}(d)\mid \tilde{X}]E_Q[\tilde{Y}(d')\mid \tilde{X}] \right ]~, \]
\item For $\bar \mu_{j,n}(d)=k^{-1}\sum_{i\in\lambda_j}\mu_d(X_i)$ and $\bar u_{j,n}(d)=k^{-1}\sum_{i\in\lambda_j}u_i(d)$,
\begin{align*}
\frac{1}{m}\sum_{1 \le a \le m}\sum_{b\ne a}\omega_{ab,m}\bar u_{a,n}(d)\bar \mu_{b,n}(d') & \to 0 \\
\frac{1}{m}\sum_{1 \le a \le m}\sum_{b\ne a}\omega_{ab,m}\bar \mu_{a,n}(d)\bar u_{b,n}(d') & \to 0 \\
\frac{1}{m}\sum_{1 \le a \le m}\sum_{b\ne a}\omega_{ab,m}\bar u_{a,n}(d)\bar u_{b,n}(d') & \to 0~.
\end{align*}
\end{enumerate} 
\end{assumption}

The Lipschitz condition in Assumption \ref{ass:lipschitz}(a) ensures that, along graph edges with small covariate distance, the smooth components of $Y_i(d)$ are close. Assumption \ref{ass:lipschitz}(b) further guarantees that the average products of the smooth components converge to the desired limit. The residual-product conditions in Assumption \ref{ass:lipschitz}(c) require the remaining graph-weighted products involving $u_i(d)$ to be asymptotically negligible.  Under an i.i.d.\ thought experiment with $\mu_d(\tilde X)=E_Q[\tilde Y(d)\mid\tilde X]$, these residual conditions can be verified using arguments similar to those in \cite{bai2024inference}.

We also require a graph-locality condition, defined in terms of the following graph-weighted average distance between covariates:
\begin{equation}\label{eq:graph_edge_diameter}
\mathcal D_n(\omega_m)
=
\frac{1}{m}
\sum_{1 \le a \le m}\sum_{b\ne a}\omega_{ab,m}
\max_{i,i'\in\lambda_a\cup\lambda_b}\|X_i-X_{i'}\|^2~.
\end{equation}

\begin{assumption} \label{ass:local}
The sequence of graphs $(\omega_m)_{m \geq 1}$ satisfies $\mathcal D_n(\omega_m)\to 0$.    
\end{assumption}

Assumption \ref{ass:local} requires the graph-weighted average squared distance across adjacent strata to vanish. In Proposition \ref{prop:graph_locality} below, we will provide sufficient conditions for Assumption \ref{ass:local} for $\omega_m^{\rm PM}$ in \eqref{eq:perfect_matching} and $\omega_m^{\rm SDP}(\kappa_m)$ in \eqref{eq:sdp}.

Assumptions \ref{ass:lipschitz} and \ref{ass:local} together ensure Assumption \ref{ass:limit_graph_products} holds with $\Gamma_\omega(x) = x$.

\begin{proposition}\label{prop:local_graph_product}
Suppose the potential outcomes $Y_i(d), 1 \leq i \leq n, d \in \{0, 1\}$ satisfy Assumption \ref{ass:lipschitz}. Further suppose $(\omega_m)_{m \geq 1}$ is a sequence of degree-calibrated graphs such that Assumption \ref{ass:local} holds. Then, Assumption \ref{ass:limit_graph_products} holds for $\Gamma_\omega(x) = x$.
\end{proposition}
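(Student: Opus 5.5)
Since $Y_i(d)=\mu_d(X_i)+u_i(d)$, we have $\bar Y_{j,n}(d)=\bar\mu_{j,n}(d)+\bar u_{j,n}(d)$. Expand the graph-weighted product in Assumption \ref{ass:limit_graph_products} into four terms:
\[
\frac{1}{m}\sum_{1 \le a \le m}\sum_{b\ne a}\omega_{ab,m}\bar Y_{a,n}(d)\bar Y_{b,n}(d')
= T_{\mu\mu}+T_{u\mu}+T_{\mu u}+T_{uu}~.
\]
The last three terms vanish directly by Assumption \ref{ass:lipschitz}(c). It then suffices to show
\[
T_{\mu\mu}=\frac{1}{m}\sum_a\sum_{b\ne a}\omega_{ab,m}\bar\mu_{a,n}(d)\bar\mu_{b,n}(d')\to E_Q\big[E_Q[\tilde Y(d)\mid\tilde X]E_Q[\tilde Y(d')\mid\tilde X]\big]~,
\]
which is exactly the target with $\Gamma_\omega(x)=x$.

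We compare $T_{\mu\mu}$ to $\frac1n\sum_i\mu_d(X_i)\mu_{d'}(X_i)$, whose limit is the desired one by Assumption \ref{ass:lipschitz}(b). First, use degree calibration, $\sum_{b\ne a}\omega_{ab,m}=1$, to write
\[
\frac1m\sum_a\bar\mu_{a,n}(d)\bar\mu_{a,n}(d')=\frac1m\sum_a\sum_{b\ne a}\omega_{ab,m}\bar\mu_{a,n}(d)\bar\mu_{a,n}(d')~.
\]
The difference between this and $T_{\mu\mu}$ is
\[
\frac1m\sum_a\sum_{b\ne a}\omega_{ab,m}\bar\mu_{a,n}(d)\big(\bar\mu_{a,n}(d')-\bar\mu_{b,n}(d')\big)~.
\]
By Lipschitz continuity with constant $L$, $|\bar\mu_{a,n}(d')-\bar\mu_{b,n}(d')|\le L\max_{i,i'\in\lambda_a\cup\lambda_b}\|X_i-X_{i'}\|$. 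Cauchy--Schwarz with weights $\omega_{ab,m}/m$ then bounds the difference by
\[
L\Big(\frac1m\sum_a\bar\mu_{a,n}(d)^2\Big)^{1/2}\mathcal D_n(\omega_m)^{1/2}~,
\]
again using degree calibration to collapse $\sum_b\omega_{ab,m}$. By Jensen, $\frac1m\sum_a\bar\mu_{a,n}(d)^2\le\frac1n\sum_i\mu_d(X_i)^2$, which is bounded by Assumption \ref{ass:lipschitz}(b). Together with Assumption \ref{ass:local}, the difference is $o(1)$.

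Second, compare $\frac1m\sum_a\bar\mu_{a,n}(d)\bar\mu_{a,n}(d')$ with $\frac1n\sum_i\mu_d(X_i)\mu_{d'}(X_i)$. Within stratum $a$, each $\mu_d(X_i)$ differs from $\bar\mu_{a,n}(d)$ by at most $L\,\mathrm{diam}(\lambda_a)$. Hence the difference is bounded by
\[
C\Big(\frac1n\sum_i\mu_d(X_i)^2+\frac1n\sum_i\mu_{d'}(X_i)^2\Big)^{1/2}\Big(\frac1m\sum_a\mathrm{diam}(\lambda_a)^2\Big)^{1/2}~.
\]
Note that $\mathrm{diam}(\lambda_a)^2\le\max_{i,i'\in\lambda_a\cup\lambda_b}\|X_i-X_{i'}\|^2$ for every $b$. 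Averaging with the weights $\omega_{ab,m}$, which sum to one over $b$, gives $\frac1m\sum_a\mathrm{diam}(\lambda_a)^2\le\mathcal D_n(\omega_m)\to0$. Combining the two steps with Assumption \ref{ass:lipschitz}(b) yields the claim.

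The main obstacle is this second step: it needs within-stratum closeness, which is not assumed separately. The plan is to extract it from $\mathcal D_n(\omega_m)$ via degree calibration, as above. The remaining bookkeeping, namely the Cauchy--Schwarz bounds with asymmetric $d,d'$, is routine.
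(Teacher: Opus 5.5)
Your proposal is correct and follows essentially the same route as the paper. The shared steps are the four-term expansion handled by Assumption \ref{ass:lipschitz}(c), the degree-calibration and Cauchy--Schwarz comparison of the graph-weighted smooth term with $\frac{1}{m}\sum_{j}\bar\mu_{j,n}(d)\bar\mu_{j,n}(d')$, and the extraction of within-stratum locality from $\mathcal D_n(\omega_m)$ using the unit row sums. The only minor difference is in your second step: the paper uses the exact identity $\bar\mu_{j,n}(d)\bar\mu_{j,n}(d')-\frac{1}{k}\sum_{i\in\lambda_j}\mu_d(X_i)\mu_{d'}(X_i)=-\frac{1}{2k^2}\sum_{i,i'\in\lambda_j}(\mu_d(X_i)-\mu_d(X_{i'}))(\mu_{d'}(X_i)-\mu_{d'}(X_{i'}))$ to get a bound quadratic in the stratum diameter with no moment factor, whereas your Cauchy--Schwarz bound is linear in the diameter times bounded second moments, and both suffice.
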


To conclude the discussion, we provide sufficient conditions for Assumption \ref{ass:local} to hold for the graphs $\omega_m^{\rm PM}$ and $\omega_m^{\rm SDP}(\kappa_m)$ and use these conditions to derive a parsimonious choice of $\kappa_m$ to use in practice. To describe the conditions, define the coordinate-range for the stratum means as
\begin{equation}\label{eq:mean_range}
M_{n}^2
:=
\max_{1\le h\le p_X}
\left(
\max_{1\le j\le m}\bar X_{j,n}^{(h)}
-
\min_{1\le j\le m}\bar X_{j,n}^{(h)}
\right)^2~,
\end{equation}
where $\bar X_{j,n}^{(h)}$ is the $h$th component of $\bar X_{j,n}$. Further define
\begin{equation}\label{eq:within_stratum_radius}
R_n =\frac{1}{m}\sum_{1 \le j \le m} \max_{i\in\lambda_j}\|X_i-\bar X_{j,n}\|^2~.
\end{equation}

\begin{proposition}\label{prop:graph_locality}
Assume $R_n \to 0$. Then, the following statements hold:
\begin{enumerate}[\rm (a)]
\item Suppose $m$ is even. Let $\omega_m^{\rm PM}$ be given by \eqref{eq:perfect_matching} with $c_{ab, m} = \|\bar X_{a,n} - \bar X_{b,n}\|^2$. Also suppose the stratum means satisfy the range condition
\begin{equation}\label{eq:pm_range_condition}
M_{n}^2m^{-2/(p_X+1)}\to0~.
\end{equation}
Then, Assumption \ref{ass:local} holds for $(\omega_m^{\rm PM})_{m \geq 1}$.
\item Let $\omega_m^{\rm SDP}(\kappa_m)$ be given by \eqref{eq:sdp} with $c_{ab, m} = \|\bar X_{a,n} - \bar X_{b,n}\|^2$. Suppose there is an integer sequence $(b_m)_{m \geq 1}$ such that
\begin{equation}\label{eq:block_size_tuning}
b_m\to\infty~,
\qquad
b_m=o(m)~,
\qquad
\frac{b_m}{b_m-1}\le \kappa_m~,
\end{equation}
and the stratum means satisfy the range condition
\begin{equation}\label{eq:lp_range_condition}
M_{n}^2\left(\frac{m}{b_m}\right)^{-2/(p_X+1)}
\to0~.
\end{equation}
Then, Assumption \ref{ass:local} holds for $(\omega_m^{\rm SDP}(\kappa_m))_{m \geq 1}$.
\end{enumerate}
\end{proposition}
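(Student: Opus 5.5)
The plan is to reduce Assumption \ref{ass:local} to a statement about the average edge cost of the graph, and then to bound that cost by exhibiting a cheap feasible graph in each optimization problem.

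\emph{Step 1 (reduction to average cost).} Write $r_j := \max_{i\in\lambda_j}\|X_i-\bar X_{j,n}\|^2$. For $i,i'\in\lambda_a\cup\lambda_b$ the triangle inequality through $\bar X_{a,n}$ and $\bar X_{b,n}$ gives
\[
\|X_i-X_{i'}\|^2\le 3\big(r_a+\|\bar X_{a,n}-\bar X_{b,n}\|^2+r_b\big),
\]
and the same bound covers pairs $i,i'$ in the same stratum, since then $\|X_i-X_{i'}\|^2\le 2(r_a+r_a)$. Now use degree calibration, $\sum_{b\ne a}\omega_{ab,m}=1$, together with symmetry of the weights. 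This yields
\[
\frac1m\sum_a\sum_{b\ne a}\omega_{ab,m}(r_a+r_b)=2R_n .
\]
Hence
\[
\mathcal D_n(\omega_m)\le 6R_n+\frac{6}{m}\sum_{1\le a<b\le m}c_{ab,m}\,\omega_{ab,m}.
\]
Since $R_n\to0$ by hypothesis, it suffices to show that $m^{-1}$ times the optimal objective value tends to zero. Because $\omega_m^{\rm PM}$ and $\omega_m^{\rm SDP}(\kappa_m)$ are minimizers, it is enough to construct one feasible graph for each problem whose normalized cost vanishes.

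\emph{Step 2 (part (a)).} All stratum means lie in a box of side $M_n$ in each coordinate. Partition this box into a grid of cubes of side $\delta$, which gives at most $(M_n/\delta+1)^{p_X}$ cells. Within each cell, pair the strata arbitrarily; each such pair costs at most $p_X\delta^2$. This leaves at most one unmatched stratum per cell, and the number of leftovers is even because $m$ is even. I would then pair the leftovers consecutively along a boustrophedon (snake) ordering of the cells. This ordering keeps consecutive leftovers within a column of cells, so the total cost of the leftover pairs is of order $M_n^2$ times the number of columns, roughly $(M_n/\delta)^{p_X-1}$, rather than $M_n^2$ times the number of cells. Balancing $\delta^2$ against $M_n^2(M_n/\delta)^{p_X-1}/m$ should produce exactly the rate $M_n^2m^{-2/(p_X+1)}$ in \eqref{eq:pm_range_condition}.

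I expect this leftover-pairing bookkeeping to be the main obstacle. The naive alternative, pairing leftovers with cost $\le p_XM_n^2$ each, only gives the rate $m^{-2/(p_X+2)}$. If the snake argument fails, I would instead recurse: slice the box into slabs along the first coordinate, sort within each slab, and telescope the squared gaps.

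\emph{Step 3 (part (b)).} Here I would use the same grid, with cells merged along the snake ordering, to partition the strata into blocks $B_1,\dots,B_G$ of consecutive strata. Each block has size between $b_m$ and $2b_m$, which is possible because $b_m=o(m)$. On each block, place the normalized complete graph with weights $1/(|B_g|-1)$, and no edges across blocks. This graph is degree-calibrated. Its Laplacian is block-diagonal with blocks $\frac{|B_g|}{|B_g|-1}P_{B_g}$, so it annihilates $\iota_m$ and its largest eigenvalue is at most $b_m/(b_m-1)\le\kappa_m$. Since $P_m$ acts as the identity on $\iota_m^\perp$, this implies $L_m\preceq\kappa_mP_m$, so the graph is feasible for \eqref{eq:sdp}. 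Each stratum spreads unit weight over strata in its block, so the normalized cost is bounded by an average squared block diameter. The Step 2 argument, applied with about $m/b_m$ effective blocks in place of $m$ points, gives the bound $M_n^2(m/b_m)^{-2/(p_X+1)}$, which tends to zero by \eqref{eq:lp_range_condition}.
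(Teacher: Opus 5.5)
Your proposal is correct and has the same architecture as the paper's proof. Step 1 is Lemma \ref{lem:mean_cost_to_graph_locality}. Part (a) compares $\omega_m^{\rm PM}$ with the matching induced by a good pair partition. Part (b) compares $\omega_m^{\rm SDP}(\kappa_m)$ with the blockwise complete graph. Your spectral feasibility check (block Laplacians $\frac{|B_g|}{|B_g|-1}P_{B_g}$ with null vector $\iota_m$) is equivalent to the paper's within/between sum-of-squares decomposition.

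The genuine difference is the geometric partition bound. The paper imports it from Theorem A.3 of \cite{cytrynbaum2021optimal} (Lemma \ref{lem:spatial_sorting_consequence}). It then controls the comparator's cost through within-block sums of squares via $\sum_{a\ne b\in B}\|\bar X_{a,n}-\bar X_{b,n}\|^2=2|B|W(B)$, and uses an explicit merge identity to get block sizes in $[b_m,2b_m]$. You instead prove the bound from scratch with a grid and snake ordering, and use cruder squared-diameter bounds.

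Your bookkeeping does go through, though "consecutive leftovers stay within a column" is not literally true. Let $N=\lceil M_n/\delta\rceil$.
\begin{itemize}
\item Leftover pairs straddling a column boundary number at most one per boundary, hence at most $N^{p_X-1}$, and each costs at most $p_XM_n^2$.
\item Pairs inside a column have disjoint first-coordinate gaps, whose squares sum to at most $M_n^2$ per column, plus $(p_X-1)\delta^2$ per pair.
\item The normalized cost is therefore $\lesssim \delta^2+m^{-1}M_n^2N^{p_X-1}$, and $\delta=M_nm^{-1/(p_X+1)}$ gives the stated rate. A plain raster order of cells would serve equally well.
\end{itemize}
For (b), the same count works. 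At most $N^{p_X-1}$ blocks cross a column boundary, each carrying weight at most $2b_m/m$. Blocks inside a column that span several cells have squared first-coordinate extents summing to $O(M_n^2)$ per column, and blocks confined to one cell contribute $O(\delta^2)$. This gives $\delta^2+(b_m/m)M_n^2N^{p_X-1}$, hence $M_n^2(m/b_m)^{-2/(p_X+1)}$. Your route is self-contained and shows where the exponent $2/(p_X+1)$ comes from (balancing within-cell cost against column-crossing edges). The paper's route is shorter.

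One small slip in Step 1: for $i,i'$ in the same stratum you have $\|X_i-X_{i'}\|^2\le 4r_a$. But $4r_a\le 3(r_a+c_{ab,m}+r_b)$ fails when, e.g., $c_{ab,m}=r_b=0<r_a$. Using the constant $4$ in place of $3$ fixes this and leaves the conclusion unchanged.
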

The assumption $R_n \rightarrow 0$ ensures that the original stratification $\Lambda_n$ is sufficiently ``local,'' in the sense that units within the same stratum have similar covariate values. Analogous sufficient conditions to \eqref{eq:pm_range_condition} for this kind of stratum-level locality are documented in \cite{bai2022inference} and \cite{cytrynbaum2021optimal}. The range condition \eqref{eq:pm_range_condition} in Proposition \ref{prop:graph_locality}(a) is automatically satisfied if the stratum means have uniformly bounded coordinate range, so that $M_n=O(1)$.  More generally, if $M_{n}^2=O(m^{2/\zeta})$ for some $\zeta>p_X+1$, then \eqref{eq:pm_range_condition} holds. Therefore, the minimum-cost perfect matching graph is local whenever the original strata are internally ``tight'' and the stratum means do not spread out too quickly.

Proposition \ref{prop:graph_locality}(b) provides the corresponding statement for $\omega_m^{\rm SDP}$. Recall from Theorem \ref{thm:minimax} that the maximum normalized bias of $\hat{V}_n(\omega_m^{\rm SDP}(\kappa_m))$ is governed directly by $\kappa_m$, and if $\kappa_m \to 1$ as $m \to \infty$, then it approaches the minimax normalized-bias bound.  Proposition \ref{prop:graph_locality}(b) shows that $\kappa_m \to 1$ is compatible with the graph-locality condition, as long as a suitable $b_m$ can be chosen to satisfy \eqref{eq:lp_range_condition} while $b_m/(b_m - 1) \to 1$.  If $M_n=O(1)$, then \eqref{eq:lp_range_condition} holds for any choice of $b_m$ satisfying \eqref{eq:block_size_tuning}.  More generally, if $M_n^2=O(m^{2/\zeta})$ for $\zeta>p_X+1$, then \eqref{eq:lp_range_condition} requires $b_m=o(m^{1-(p_X+1)/\zeta})$; if we take $b_m=\lceil m^{\gamma}\rceil+1$, this requires $0<\gamma<1-\frac{p_X+1}{\zeta}$.  For this choice,
\[
\frac{b_m}{b_m-1}
=
1+\frac{1}{\lceil m^\gamma\rceil}
\le
1+m^{-\gamma}~.
\]

We conclude this section by suggesting a heuristic method for choosing $\gamma$ in practice. For $0 < \gamma < 1$, define
\[
\kappa_m(\gamma)
=
\max\left\{\frac{m}{m-1},1+m^{-\gamma}\right\}~.
\]
Suppose that $M_n=O(1)$. Under the preceding choice of $b_m$, the left-hand side of \eqref{eq:lp_range_condition} is
\[
M_{n}^2\left(\frac{m}{b_m}\right)^{-2/(p_X+1)} = O(m^{-2(1-\gamma)/(p_X+1)})~,
\]
and therefore converges to zero.
Thus, larger values of $\gamma$ send $\kappa_m(\gamma)$ to one more quickly, but slow down the sufficient rate of convergence for \eqref{eq:lp_range_condition} required in Proposition \ref{prop:graph_locality}. Equating the two exponents leads to a rate-balanced choice of $\gamma$:
\begin{equation}\label{eq:gamma_rate_balance}
\bar\gamma(p_X)
=
\frac{2}{p_X+3}~.
\end{equation}
Although this calculation provides a parsimonious choice for $\gamma$, it relies formally on the assumption that the coordinate-range of the covariates is bounded. In the next section, we assess how sensitive the performance of our recommendation is to this assumption via simulation.

\begin{remark}\label{rem:fogarty}
\cite{fogarty2018mitigating} proposes an alternative generalization of the Imai estimator that uses regression adjustment of the stratum-level treatment-effect estimates on baseline covariates. Although his estimator is not a graph-Laplacian estimator in the sense of Section \ref{sec:main_var}, it can be analyzed under a limiting framework similar to that developed in this section. In Appendix \ref{app:fogarty}, we show that the limit of its expectation is
\[
V^{\rm obs}
+
k\var_Q\left[
E[\tilde Y(1) - \tilde Y(0) | \tilde X]-\blp_Q(\tilde Y(1) - \tilde Y(0) | 1,\tilde X)
\right]~,
\]
where $\blp_Q(\cdot \mid 1, \tilde{X})$ denotes the best linear predictor on a constant and $\tilde{X}$, and hence coincides with $V^{\rm obs}$ only when $E[\tilde Y(1) - \tilde Y(0) | \tilde X]$ is linear in $\tilde X$.
\end{remark}

\section{Simulations} \label{sec:sims}
In this section, we conduct a design-based simulation to illustrate the main findings in the previous sections and to assess the parsimonious regularization constant $\kappa_m(\bar{\gamma})$ proposed in \eqref{eq:gamma_rate_balance}, in multivariate and heavy-tailed designs. We select the treatment effects to interpolate between a design that is favorable for the cost-minimizing perfect matching in \eqref{eq:perfect_matching} and a design that is adversarial against it. We set $k=2$, so the strata are pairs and $n=2m$. For each design, we first draw a ``master'' population with $m_{\max}=250$ pair strata. To form nested finite sub-populations, we randomly order the master strata and, for each reported value of $n$, keep the first $m=n/2$ strata in that ordering. The reported population sizes are $n\in\{100,252,500\}$. For each population size and design, the covariates, potential outcomes, and graph weights are held fixed in simulations, and the only source of randomness is from treatment assignment within pairs.

We consider three covariate designs. In Model A, we draw a scalar stratum center i.i.d.\ $\bar X_j^{\max}\sim U[0,1]$ for each $1\le j\le m_{\max}$. Model B uses a three-dimensional correlated Gaussian covariate vector, with stratum centers drawn i.i.d.\ according to
\[
\bar X_j^{\max}
\sim
N_3\left(
\begin{pmatrix}
.5\\
.5\\
.5
\end{pmatrix},
\frac{1}{12}
\begin{pmatrix}
1 & .5 & .25\\
.5 & 1 & .5\\
.25 & .5 & 1
\end{pmatrix}
\right),
\qquad
1\le j\le m_{\max}~.
\]
Thus, each coordinate has the same marginal mean and variance as $U[0,1]$. Model C features a heavy-tailed design: using the same uniform draws as in Model A, we construct $\bar X_j^{\max}=.5+T_{3j}/6$, where $T_{3j}$ is obtained by applying the $t_3$ quantile transformation to the corresponding uniform draw, and the scaling ensures the stratum center has the same variance as $U[0,1]$.

For Models A and C, we draw a stratum-shock $\eta_j$ independently across strata from a normal distribution with mean zero and standard deviation $.10/m_{\max}$. For Model B, we instead draw
\[
\eta_j
\sim
N_3\left(
\begin{pmatrix}
0\\
0\\
0
\end{pmatrix},
\left(\frac{.10}{m_{\max}}\right)^2
\begin{pmatrix}
1 & .5 & .25\\
.5 & 1 & .5\\
.25 & .5 & 1
\end{pmatrix}
\right)~.
\]
In every design, we set the two unit-level covariates in stratum $j$ so that
\[
\{X_i:i\in\lambda_j\}
=
\{\bar X_{j}^{\max}+\eta_j,\bar X_{j}^{\max}-\eta_j\}~.
\]
This construction makes $\bar X_j^{\max}$ the exact stratum-level covariate mean. 

To define $Y_i(0)$, we first define the scalar score
\[
s(x)
=
\begin{cases}
x-.5,
& \text{if } p_X=1~,\\[3pt]
\displaystyle
\frac{x_1+x_2+x_3-1.5}
{\sqrt{11/2}},
& \text{if } p_X=3~.
\end{cases}
\]
The variance of $s(\bar X_j^{\max})$ equals $1/12$ in every design. The untreated potential outcomes are then generated according to
\[
Y_i(0)=1.5s(X_i)+\varepsilon_i~,
\qquad
\varepsilon_i\sim N(0,.35^2)~,
\]
with the shocks drawn independently across units. The potential-outcome shocks and the ordering used to form the nested sub-populations are shared across all three designs. For any vector $u=(u_1,\ldots,u_m)$, let $\operatorname{std}_m(u)$ denote the vector with $j$th component
\[
\operatorname{std}_m(u)_j
=
\frac{u_j-\bar u}{s_u}
\text{ for }
\bar u=\frac{1}{m}\sum_{1 \le j \le m} u_j
\text{ and }
s_u^2=\frac{1}{m-1}\sum_{1 \le j \le m}(u_j-\bar u)^2~.
\]

As mentioned above, the treatment effects are chosen to interpolate between a ``smooth'' regime that is favorable to perfect matching and one that is adversarial against perfect matching. Writing $s(\bar X^{\max})$ for the vector of master-population scores, we specify the ``smooth'' regime favorable to perfect matching using a vector $\theta = (\theta_1, \ldots , \theta_m)'$, where
\[
\theta_j
:=
\operatorname{std}_{m_{\max}}\{s(\bar X^{\max})\}_j~.
\]
Each nested sub-population inherits the corresponding unit-level covariates, untreated potential outcomes, stratum centers, and values of $\theta_j$ from its master population.

To define the adversarial regime, for each nested population containing $m$ strata, and hence $n=2m$ units, we construct the minimum-cost perfect-matching graph using the raw squared Euclidean costs
\[
c_{ab,m}
=
\|\bar X_a^{\max}-\bar X_b^{\max}\|^2,
\qquad
a\ne b~.
\]
Write the pairs in this matching as $\{a_q,b_q\}$, $1\le q\le m/2$, ordered so that $s(\bar X_{a_q}^{\max})\leq s(\bar X_{b_q}^{\max})$. Define the $m$-dimensional vector $v^{\rm PM}$ with entries
\[
v_{a_q}^{\rm PM}=-\frac{1}{\sqrt m}
\text{ and}\qquad
v_{b_q}^{\rm PM}=\frac{1}{\sqrt m}
\]
for each pair. This vector is an eigenvector associated with the largest eigenvalue of $L_m(\omega_m^{\rm PM})$; as shown in the proof of Theorem \ref{thm:minimax}, normalized bias is maximized when the centered treatment-effect vector is an eigenvector associated with the largest eigenvalue of the Laplacian. We therefore specify the adversarial regime by standardizing $v^{\rm PM}$ within the sub-population:
\[
\xi_j :=\operatorname{std}_m(v^{\rm PM})_j~.
\]
For $\nu\in[0,1]$, the treated potential outcomes are
\[
Y_i(1)=Y_i(0)+\Delta_j(\nu),
\qquad
i\in\lambda_j~,
\]
where
\[
\Delta_j(\nu)
:=
3\operatorname{std}_{m}\{r(\nu)\}_j~,
\]
with $r(\nu) = (r_1(\nu), \ldots, r_m(\nu))'$ and
\[
r_j(\nu):=\nu\theta_j+(1-\nu)\xi_j~.
\]
Thus, $\nu=1$ corresponds to the favorable smooth regime and $\nu=0$ corresponds to the adversarial regime. The scale factor keeps the sample standard deviation of $\Delta_j(\nu)$ equal to three for every value of $\nu$. We consider $\nu\in\{0,0.5,1\}$.

We compare the complete graph $\omega_m^{\rm CG}$, the minimum-cost perfect-matching graph $\omega_m^{\rm PM}$, and three spectrally constrained graphs $\omega_m^{\rm SDP}(\kappa_m(\gamma))$ that solve the regularized graph problem in Section \ref{sec:minimax} for three different choices of $\gamma$ and
\[
\kappa_m(\gamma)
=
\max\left\{
\frac{m}{m-1},
1+m^{-\gamma}
\right\}~.
\]
The choices of $\gamma$ include a weakly regularized comparison $\gamma=0.1$, a deliberately highly regularized comparison $\gamma=0.75$, and the parsimonious choice
\[
\bar\gamma(p_X)=\frac{2}{p_X+3}~,
\]
which equals $0.5$ for Models A and C and $\frac{1}{3}$ for Model B. For Models A and B, almost every resulting fixed-population sequence satisfies $M_n^2 = O(1)$ and $M_n^2=O(\log m)$, respectively. To guarantee the range condition \eqref{eq:lp_range_condition} in Proposition \ref{prop:graph_locality}(b), we therefore need
\[
 M_n^2m^{-2(1-\gamma)/(p_X+1)}\to0
\]
for every choice of $\gamma < 1$. For the $t_3$ design, by contrast, almost every resulting fixed-population sequence satisfies $M_n^2=O(m^{\frac{2}{3}+\epsilon})$ for every $\epsilon>0$, and therefore we need $\gamma<\frac{1}{3}$.

For each finite population and each reported value of $\nu$, we draw 5000 independent treatment assignments, assigning exactly one unit to treatment in each stratum. For each assignment, we compute the usual difference-in-means estimator $\hat\Delta_n$ and, for each graph, the variance estimator $\hat V_n(\omega_m)$ in \eqref{eq:graph_estimator}. Table \ref{tab:simulation_graph_adversarial} reports the coverage and average length of nominal $95\%$ confidence intervals
\[
\hat\Delta_n
\pm
z_{0.975}\sqrt{\hat V_n(\omega_m)}~.
\]

\begin{table}[ht]
    \centering
    \footnotesize
    \begin{tabular}{cccccccc}
\toprule
        \textbf{Population size} & \textbf{Estimator} &
        \multicolumn{2}{c}{$\nu=0$} &
        \multicolumn{2}{c}{$\nu=.5$} &
        \multicolumn{2}{c}{$\nu=1$} \\
        \cmidrule(lr){3-4} \cmidrule(lr){5-6} \cmidrule(lr){7-8}
        & & Coverage & Length & Coverage & Length & Coverage & Length \\
        \hline
        \multicolumn{8}{l}{\textit{Panel A: $U[0,1]$ covariates}} \\
\multirow{5}{*}{100} & PM & 1.000 & 2.344 & 1.000 & 1.707 & 0.960 & 0.303 \\
 & SDP, $\gamma=.1$ & 1.000 & 2.086 & 1.000 & 1.472 & 0.961 & 0.300 \\
 & SDP, parsimonious & 1.000 & 1.774 & 1.000 & 1.269 & 0.991 & 0.385 \\
 & SDP, $\gamma=.75$ & 1.000 & 1.710 & 1.000 & 1.323 & 1.000 & 0.780 \\
 & CG & 1.000 & 1.686 & 1.000 & 1.687 & 1.000 & 1.686 \\
\hline
\multirow{5}{*}{252} & PM & 1.000 & 1.487 & 1.000 & 1.074 & 0.948 & 0.181 \\
 & SDP, $\gamma=.1$ & 1.000 & 1.271 & 1.000 & 0.910 & 0.948 & 0.181 \\
 & SDP, parsimonious & 1.000 & 1.103 & 1.000 & 0.795 & 0.977 & 0.206 \\
 & SDP, $\gamma=.75$ & 1.000 & 1.073 & 1.000 & 0.804 & 1.000 & 0.369 \\
 & CG & 1.000 & 1.063 & 1.000 & 1.063 & 1.000 & 1.063 \\
\hline
\multirow{5}{*}{500} & PM & 1.000 & 1.057 & 1.000 & 0.755 & 0.944 & 0.128 \\
 & SDP, $\gamma=.1$ & 1.000 & 0.892 & 1.000 & 0.635 & 0.945 & 0.128 \\
 & SDP, parsimonious & 1.000 & 0.776 & 1.000 & 0.555 & 0.963 & 0.136 \\
 & SDP, $\gamma=.75$ & 1.000 & 0.759 & 1.000 & 0.558 & 0.999 & 0.221 \\
 & CG & 1.000 & 0.755 & 1.000 & 0.754 & 1.000 & 0.755 \\
\midrule
        \multicolumn{8}{l}{\textit{Panel B: correlated $N_3$ covariates}} \\
\multirow{5}{*}{100} & PM & 1.000 & 2.345 & 1.000 & 1.813 & 1.000 & 0.512 \\
 & SDP, $\gamma=.1$ & 1.000 & 2.036 & 1.000 & 1.526 & 1.000 & 0.534 \\
 & SDP, parsimonious & 1.000 & 1.797 & 1.000 & 1.318 & 1.000 & 0.572 \\
 & SDP, $\gamma=.75$ & 1.000 & 1.706 & 1.000 & 1.428 & 1.000 & 1.103 \\
 & CG & 1.000 & 1.687 & 1.000 & 1.687 & 1.000 & 1.686 \\
\hline
\multirow{5}{*}{252} & PM & 1.000 & 1.487 & 1.000 & 1.126 & 0.997 & 0.269 \\
 & SDP, $\gamma=.1$ & 1.000 & 1.261 & 1.000 & 0.930 & 0.999 & 0.290 \\
 & SDP, parsimonious & 1.000 & 1.132 & 1.000 & 0.837 & 1.000 & 0.345 \\
 & SDP, $\gamma=.75$ & 1.000 & 1.070 & 1.000 & 0.864 & 1.000 & 0.594 \\
 & CG & 1.000 & 1.063 & 1.000 & 1.063 & 1.000 & 1.063 \\
\hline
\multirow{5}{*}{500} & PM & 1.000 & 1.058 & 1.000 & 0.781 & 0.985 & 0.163 \\
 & SDP, $\gamma=.1$ & 1.000 & 0.886 & 1.000 & 0.636 & 0.988 & 0.167 \\
 & SDP, parsimonious & 1.000 & 0.798 & 1.000 & 0.580 & 0.997 & 0.207 \\
 & SDP, $\gamma=.75$ & 1.000 & 0.758 & 1.000 & 0.594 & 1.000 & 0.382 \\
 & CG & 1.000 & 0.755 & 1.000 & 0.755 & 1.000 & 0.754 \\
\midrule
        \multicolumn{8}{l}{\textit{Panel C: $t_3$ covariates}} \\
\multirow{5}{*}{100} & PM & 1.000 & 2.344 & 1.000 & 1.794 & 0.998 & 0.490 \\
 & SDP, $\gamma=.1$ & 1.000 & 2.115 & 1.000 & 1.577 & 0.999 & 0.505 \\
 & SDP, parsimonious & 1.000 & 1.777 & 1.000 & 1.395 & 1.000 & 0.713 \\
 & SDP, $\gamma=.75$ & 1.000 & 1.710 & 1.000 & 1.456 & 1.000 & 1.090 \\
 & CG & 1.000 & 1.686 & 1.000 & 1.687 & 1.000 & 1.686 \\
\hline
\multirow{5}{*}{252} & PM & 1.000 & 1.487 & 1.000 & 1.119 & 0.986 & 0.224 \\
 & SDP, $\gamma=.1$ & 1.000 & 1.272 & 1.000 & 0.944 & 0.988 & 0.230 \\
 & SDP, parsimonious & 1.000 & 1.103 & 1.000 & 0.849 & 1.000 & 0.361 \\
 & SDP, $\gamma=.75$ & 1.000 & 1.073 & 1.000 & 0.887 & 1.000 & 0.592 \\
 & CG & 1.000 & 1.063 & 1.000 & 1.063 & 1.000 & 1.063 \\
\hline
\multirow{5}{*}{500} & PM & 1.000 & 1.057 & 1.000 & 0.766 & 0.994 & 0.181 \\
 & SDP, $\gamma=.1$ & 1.000 & 0.890 & 1.000 & 0.643 & 0.996 & 0.198 \\
 & SDP, parsimonious & 1.000 & 0.776 & 1.000 & 0.583 & 1.000 & 0.280 \\
 & SDP, $\gamma=.75$ & 1.000 & 0.759 & 1.000 & 0.614 & 1.000 & 0.417 \\
 & CG & 1.000 & 0.755 & 1.000 & 0.754 & 1.000 & 0.755 \\
\bottomrule
    \end{tabular}
    \caption{Graph-adversarial mixture simulation. Panel A uses bounded $U[0,1]$ stratum centers, Panel B uses correlated three-dimensional Gaussian stratum centers, and Panel C uses scaled $t_3$ stratum centers. The parsimonious row uses $\gamma=.5$ in Panels A and C and $\gamma=1/3$ in Panel B. Treatment effects interpolate between a smooth function of the stratum centers $\bar X_j$ and the leading Laplacian eigenvector of the PM graph. The entries are coverage probabilities and average lengths for nominal $95\%$ intervals across randomized assignments.}
    \label{tab:simulation_graph_adversarial}
\end{table}

Table \ref{tab:simulation_graph_adversarial} illustrates the intended robustness--locality tradeoff in all three covariate designs. When $\nu=0$, the treatment-effect vector is chosen adversarially to the perfect-matching graph. The confidence intervals arising from the perfect-matching graph are then the longest, while stronger regularization moves the SDP interval lengths toward that of the complete graph interval. When $\nu=1$, treatment effects are smooth in the baseline score, and the ranking of confidence interval lengths is reversed. At the intermediate value $\nu=.5$, the dimension-specific parsimonious choice produces the shortest intervals in all three designs.

\begin{table}[ht]
    \centering
    \footnotesize
    \begin{tabular}{ccccc}
\toprule
        \textbf{Population size} & \textbf{Estimator} &
        $\mathcal D_n(\omega)$ & $\lambda_{\max}\{L_m(\omega)\}$ & $\max \omega_{ab}$ \\
        \hline
        \multicolumn{5}{l}{\textit{Panel A: $U[0,1]$ covariates}} \\
\multirow{5}{*}{100} & PM & 0.001 & 2.000 & 1.000 \\
 & SDP, $\gamma=.1$ & 0.001 & 1.676 & 0.676 \\
 & SDP, parsimonious & 0.004 & 1.141 & 0.141 \\
 & SDP, $\gamma=.75$ & 0.033 & 1.053 & 0.053 \\
 & CG & 0.171 & 1.020 & 0.020 \\
\hline
\multirow{5}{*}{252} & PM & 0.000 & 2.000 & 1.000 \\
 & SDP, $\gamma=.1$ & 0.000 & 1.617 & 0.617 \\
 & SDP, parsimonious & 0.002 & 1.089 & 0.089 \\
 & SDP, $\gamma=.75$ & 0.016 & 1.027 & 0.027 \\
 & CG & 0.168 & 1.008 & 0.008 \\
\hline
\multirow{5}{*}{500} & PM & 0.000 & 2.000 & 1.000 \\
 & SDP, $\gamma=.1$ & 0.000 & 1.576 & 0.576 \\
 & SDP, parsimonious & 0.001 & 1.063 & 0.063 \\
 & SDP, $\gamma=.75$ & 0.010 & 1.016 & 0.016 \\
 & CG & 0.174 & 1.004 & 0.004 \\
\midrule
        \multicolumn{5}{l}{\textit{Panel B: correlated $N_3$ covariates}} \\
\multirow{5}{*}{100} & PM & 0.050 & 2.000 & 1.000 \\
 & SDP, $\gamma=.1$ & 0.060 & 1.676 & 0.676 \\
 & SDP, parsimonious & 0.092 & 1.272 & 0.271 \\
 & SDP, $\gamma=.75$ & 0.263 & 1.053 & 0.053 \\
 & CG & 0.499 & 1.020 & 0.020 \\
\hline
\multirow{5}{*}{252} & PM & 0.029 & 2.000 & 1.000 \\
 & SDP, $\gamma=.1$ & 0.034 & 1.617 & 0.617 \\
 & SDP, parsimonious & 0.063 & 1.200 & 0.199 \\
 & SDP, $\gamma=.75$ & 0.219 & 1.027 & 0.027 \\
 & CG & 0.495 & 1.008 & 0.008 \\
\hline
\multirow{5}{*}{500} & PM & 0.018 & 2.000 & 1.000 \\
 & SDP, $\gamma=.1$ & 0.022 & 1.576 & 0.576 \\
 & SDP, parsimonious & 0.046 & 1.159 & 0.159 \\
 & SDP, $\gamma=.75$ & 0.196 & 1.016 & 0.016 \\
 & CG & 0.505 & 1.004 & 0.004 \\
\midrule
        \multicolumn{5}{l}{\textit{Panel C: $t_3$ covariates}} \\
\multirow{5}{*}{100} & PM & 0.007 & 2.000 & 1.000 \\
 & SDP, $\gamma=.1$ & 0.008 & 1.676 & 0.676 \\
 & SDP, parsimonious & 0.019 & 1.141 & 0.141 \\
 & SDP, $\gamma=.75$ & 0.050 & 1.053 & 0.053 \\
 & CG & 0.124 & 1.020 & 0.020 \\
\hline
\multirow{5}{*}{252} & PM & 0.002 & 2.000 & 1.000 \\
 & SDP, $\gamma=.1$ & 0.002 & 1.617 & 0.617 \\
 & SDP, parsimonious & 0.011 & 1.089 & 0.089 \\
 & SDP, $\gamma=.75$ & 0.034 & 1.027 & 0.027 \\
 & CG & 0.118 & 1.008 & 0.008 \\
\hline
\multirow{5}{*}{500} & PM & 0.004 & 2.000 & 1.000 \\
 & SDP, $\gamma=.1$ & 0.006 & 1.576 & 0.576 \\
 & SDP, parsimonious & 0.016 & 1.063 & 0.063 \\
 & SDP, $\gamma=.75$ & 0.040 & 1.016 & 0.016 \\
 & CG & 0.140 & 1.004 & 0.004 \\
\bottomrule
    \end{tabular}
    \caption{Graph diagnostics for the graph-adversarial mixture simulation. Panel A uses bounded $U[0,1]$ stratum centers, Panel B uses correlated three-dimensional Gaussian stratum centers, and Panel C uses scaled $t_3$ stratum centers. Because $\mathcal D_n(\omega)$ uses raw squared Euclidean distance, its level is dimension-dependent. Locality is measured by $\mathcal D_n(\omega)$; maximum normalized bias is proportional to the largest eigenvalue of $L_m(\omega)$.}
    \label{tab:simulation_graph_diagnostics}
\end{table}

Table \ref{tab:simulation_graph_diagnostics} compares the spectral and locality properties of each graph. The perfect-matching graph has the largest Laplacian eigenvalue, while the complete graph has the smallest. Among the regularized graphs, both the largest Laplacian eigenvalue and the maximum edge weight decrease as the spectral constraint tightens across all three models. In contrast, the locality measures depend directly on the covariate distribution. Under the uniform and Gaussian designs, Proposition \ref{prop:graph_locality} establishes locality for all three fixed-$\gamma$ regularized graph sequences. Under the $t_3$ design, however, the sufficient condition requires $\gamma<\frac{1}{3}$, and therefore covers $\gamma=0.1$ but not the parsimonious choice $\gamma=0.5$ or $\gamma=0.75$. Correspondingly, the locality measure computed for the parsimonious choice does not decline steadily across the three population sizes in Panel C. We emphasize, however, that since all the graphs are constructed to be degree calibrated, they are guaranteed to be finite-population upward-biased and thus deliver asymptotically valid inference via Theorem \ref{thm:upward}.

\section{Recommendations for Empirical Practice}\label{sec:recommendations}
Based on our theoretical results as well as the simulation study above, we conclude with some recommendations for empirical practice. For estimating $\var[\hat{\Delta}_n]$ in a finely stratified experiment, a natural starting point is the graph-Laplacian estimator using a minimum-cost perfect matching (or its fractional relaxation) based on stratum-level covariate summaries such as $\bar X_{j,n}=k^{-1}\sum_{i\in\lambda_j}X_i$.  In the well-behaved smooth regime formalized in Section \ref{sec:variances}, this estimator attains the observable asymptotic benchmark $V^{\rm obs}$ in large samples. 

At the same time, the minimax results in Section \ref{sec:minimax} demonstrate that a perfect matching is a strong commitment to this smooth regime and the regularized fractional matching formulation offers a way to temper this commitment;  by imposing a constraint on the graph Laplacian, it controls the worst-case normalized bias while preserving much of the local structure of a perfect matching. Therefore when sensitivity to worst-case heterogeneity is a concern, even a small amount of regularization can provide a transparent compromise.

We further recommend reporting diagnostics such as the locality measure $\mathcal D_n(\omega)$ and the largest eigenvalue of $L_m(\omega)$. The locality measure records the extent to which the graph compares nearby strata in terms of their baseline covariates, which helps assess the quality of matches in the smooth regime. On the other hand, the largest eigenvalue of $L_m(\omega)$ diagnoses worst-case normalized bias outside of this regime.

\clearpage
\appendix

\section{Proofs of Main Results}
In the appendix, we use $\mathrm{tr}(A)$ to denote the trace of a square matrix $A$ and $a_n \lesssim b_n$ to denote that there exists $c > 0$ such that $a_n \leq c b_n$ for all $n \geq 1$.

\subsection{Proof of Theorem \ref{thm:upward}}
First, we show that the assumptions in \eqref{eq:pop_moments} and \eqref{eq:pop_nondegen} imply the conditions in Theorem \ref{thm:normal}, so that 
\begin{equation}\label{eq:normal}
\frac{\hat \Delta_n - \Delta_n}{\var[\hat \Delta_n]^{1/2}} \stackrel{d}{\rightarrow} N(0,1)~.
\end{equation}
To that end, we first show 
\begin{equation}\label{eq:num_zero}
\frac{1}{n}\max_{1 \leq j \leq m} \max_{i \in \lambda_j} (R_i - \bar R_j)^2 \rightarrow 0
\end{equation}
for
\begin{align}
\label{eq:Ri} R_i & := \frac{Y_i(1)}{\eta} + \frac{Y_i(0)}{1 - \eta} \\
\label{eq:Rbar} \bar R_j & := \frac{1}{k} \sum_{i \in \lambda_j} R_i~,
\end{align}
and then show that $\frac{1}{n}\sum_{1 \leq j \leq m} \sum_{i \in \lambda_j} (R_i - \bar R_j)^2$ is bounded away from zero. Note that from the inequality $(a + b)^2 \le 2(a^2 + b^2)$, 
\[\max_{i \in \lambda_j}(R_i - \bar{R}_j)^2 \lesssim \max_{i \in \lambda_j}\frac{1}{\eta^2}(Y_i(1) - \bar Y_{j, n}(1))^2 + \frac{1}{(1 - \eta)^2}\max_{i \in \lambda_j}(Y_i(0) - \bar Y_{j, n}(0))^2~.\]
Since by the triangle inequality $\max_{i \in \lambda_j}|Y_i(d) - \bar Y_{j, n}(d)| \le 2\max_{i \in \lambda_j}|Y_i(d)|$, 
\begin{align*}
\frac{1}{n}\max_{1 \leq j \leq m}\max_{i \in \lambda_j}(Y_i(d) - \bar Y_{j, n}(d))^2 &\lesssim \frac{1}{n}\max_{1 \leq i \leq n}Y_i(d)^2 \rightarrow 0~,
\end{align*}
where the final convergence follows from \eqref{eq:pop_moments}. The convergence in \eqref{eq:num_zero} therefore follows. Next, since it is shown in the proof of Theorem \ref{thm:normal} that $\frac{1}{n}\sum_{1 \leq j \leq m} \sum_{i \in \lambda_j} (R_i - \bar R_j)^2$ is proportional to $n \var[\hat{\Delta}_n]$, we immediately obtain from \eqref{eq:pop_nondegen} that $\frac{1}{n}\sum_{1 \leq j \leq m} \sum_{i \in \lambda_j} (R_i - \bar R_j)^2$ is bounded away from zero. We thus obtain \eqref{eq:highlevel} in Theorem \ref{thm:normal} and so \eqref{eq:normal} follows. To complete the proof, we apply Lemma \ref{lem:ratio} with $X_n = \sqrt n(\hat \Delta_n - \Delta_n)$, $\sigma_n^2 = n \var[\hat \Delta_n]$, and $\hat \sigma_n^2 = n \tilde V_n$. \qed

\subsection{Proof of Theorem \ref{thm:main}}
For $1 \leq i \leq n$, Assumption \ref{ass:dn} implies $E[D_i]=\ell/k$ and $E[1-D_i]=(k-\ell)/k$.  Therefore,
\[
E[\hat\Delta_{j,n}]
=
\frac{1}{\ell}\sum_{i\in\lambda_j}E[D_i]Y_i(1)
-
\frac{1}{k-\ell}\sum_{i\in\lambda_j}E[1-D_i]Y_i(0)
=
\Delta_{j,n}~.
\]
Because treatment assignments are independent across strata, $\hat\Delta_{a,n}$ and $\hat\Delta_{b,n}$ are independent for $1 \leq a\ne b \leq m$, and therefore
\begin{align*}
E[(\hat\Delta_{a,n}-\hat\Delta_{b,n})^2] & =\var[\hat\Delta_{a,n}-\hat\Delta_{b,n}] +(E[\hat\Delta_{a,n}-\hat\Delta_{b,n}])^2 \\
& =\var[\hat\Delta_{a,n}]+\var[\hat\Delta_{b,n}]+(\Delta_{a,n}-\Delta_{b,n})^2~.    
\end{align*}
It follows that
\begin{align*}
E[\hat V_n(\omega_m)] &= \frac{1}{m^2}\sum_{1 \leq a < b \leq m}\omega_{ab,m} ( \var[\hat\Delta_{a,n}]+\var[\hat\Delta_{b,n}]+(\Delta_{a,n}-\Delta_{b,n})^2)\\
&= \frac{1}{m^2}\sum_{1 \leq a < b \leq m}\omega_{ab,m}(\Delta_{a,n}-\Delta_{b,n})^2 + \frac{1}{m^2}\sum_{1 \le j \le m} \mathrm{deg}_{j,m}\var[\hat\Delta_{j,n}]~.
\end{align*}
Since
\[
\var[\hat\Delta_n] = \var\bigg[\frac{1}{m}\sum_{1 \le j \le m}\hat\Delta_{j,n}\bigg] = \frac{1}{m^2}\sum_{1 \le j \le m} \var[\hat\Delta_{j,n}]~,
\]
Theorem \ref{thm:main}(a) follows.

It remains to prove Theorem \ref{thm:main}(b). Let $\varepsilon_{j,n}=\hat\Delta_{j,n}-\Delta_{j,n}$ and $\varepsilon_n=(\varepsilon_{1,n},\ldots,\varepsilon_{m,n})'$, and note $E[\varepsilon_n] = 0$. Next, because $L_m(\omega_m)$ is symmetric by \eqref{eq:graph_weights}, \eqref{eq:laplacian_quadratic} implies that
\begin{align*}
\hat V_n(\omega_m) & = \frac{1}{m^2}(\delta_n+\varepsilon_n)'L_m(\omega_m)(\delta_n+\varepsilon_n) \\
& = \frac{1}{m^2} \big ( \delta_n'L_m(\omega_m)\delta_n + 2\delta_n'L_m(\omega_m)\varepsilon_n + \varepsilon_n'L_m(\omega_m)\varepsilon_n \big )~.
\end{align*}
Since $E[\varepsilon_n] = 0$, we have
\[
E[\hat V_n(\omega_m)] = \frac{1}{m^2} \big ( \delta_n'L_m(\omega_m)\delta_n + E[\varepsilon_n'L_m(\omega_m)\varepsilon_n] \big )~.
\]
Subtracting the previous two displays and multiplying by $m$, we get
\begin{equation} \label{eq:Vomega_decomp}
m \big ( \hat V_n(\omega_m)-E[\hat V_n(\omega_m)] \big ) = A_n+B_n~,    
\end{equation}
where
\begin{align*}
A_n & = \frac{2}{m}\delta_n'L_m(\omega_m)\varepsilon_n \\
B_n & = \frac{1}{m} \big ( \varepsilon_n'L_m(\omega_m)\varepsilon_n - E[\varepsilon_n'L_m(\omega_m)\varepsilon_n] \big )~.
\end{align*}
It suffices to show $A_n$ and $B_n$ in \eqref{eq:Vomega_decomp} satisfy
\begin{align}
\label{eq:An_0} A_n & \xrightarrow{P} 0 \\
\label{eq:Bn_0} B_n & \xrightarrow{P} 0~,
\end{align}
which together with \eqref{eq:Vomega_decomp} implies Theorem \ref{thm:main}(b) because $n = mk$.

To show \eqref{eq:An_0} and \eqref{eq:Bn_0}, we first record some simple implications of the fourth-moment condition in \eqref{eq:four_moment}. First consider $\Delta_{j,n}$ for $1 \leq j \leq m$. Since
\[
\Delta_{j,n}
=
\frac{1}{k}\sum_{i\in\lambda_j} (Y_i(1)-Y_i(0))~,
\]
it follows from the convexity of $x \mapsto x^4$ and Jensen's inequality that
\[
\Delta_{j,n}^4
\leq
\frac{1}{k}\sum_{i\in\lambda_j} (Y_i(1)-Y_i(0))^4~.
\]
Next, it follows from Jensen's inequality that $(|x| + |y|)^r \leq 2^{r - 1} (|x|^r + |y|^r)$ for $r \geq 1$. Taking $r = 4$ and using $|x - y| \leq |x| + |y|$, we get
\[
\Delta_{j,n}^4
\leq
\frac{8}{k}
\sum_{i\in\lambda_j} (Y_i(1)^4+Y_i(0)^4)~,
\]
and hence
\begin{equation} \label{eq:delta_four}
\sum_{1 \le j \le m} \Delta_{j,n}^4
\leq
\frac{8}{k}
\sum_{1 \le i \le n} (Y_i(1)^4+Y_i(0)^4)
=
o(m^2)
\end{equation}
because of \eqref{eq:four_moment}.
Next, consider $\varepsilon_{j,n}=\hat\Delta_{j,n}-\Delta_{j,n}$ for $1 \leq j \leq m$. To begin, we have
\[
\varepsilon_{j,n}
=
\sum_{i\in\lambda_j}
\left(\frac{D_i}{\ell}-\frac{1}{k}\right)Y_i(1)
-
\sum_{i\in\lambda_j}
\left(\frac{1-D_i}{k-\ell}-\frac{1}{k}\right)Y_i(0)~.
\]
Because $D_i \in \{0, 1\}$,
\[
\left|\frac{D_i}{\ell}-\frac{1}{k}\right|
\leq
\frac{1}{\ell}+\frac{1}{k}~,
\qquad
\left|\frac{1-D_i}{k-\ell}-\frac{1}{k}\right|
\leq
\frac{1}{k-\ell}+\frac{1}{k}~.
\]
Therefore, setting
\[
M_{k,\ell}
=
\max\left\{
\frac{1}{\ell}+\frac{1}{k}~,
\frac{1}{k-\ell}+\frac{1}{k}
\right\}~,
\]
we have the bound
\[
|\varepsilon_{j,n}|
\leq
M_{k,\ell}
\sum_{i\in\lambda_j} (|Y_i(1)|+|Y_i(0)|)~.
\]
Taking fourth powers and applying Jensen's inequality to the average over the $k$ terms, we have
\[
\bigg(
\sum_{i\in\lambda_j} (|Y_i(1)|+|Y_i(0)|)
\bigg)^4
\leq
k^3
\sum_{i\in\lambda_j} (|Y_i(1)|+|Y_i(0)|)^4~.
\]
Applying again the inequality $(|x| + |y|)^r \leq 2^{r - 1} (|x|^r + |y|^r)$ for $r \geq 1$ with $r = 4$ and taking expectations yields
\[
E[\varepsilon_{j,n}^4]
\leq
8k^3 M_{k, \ell}^4
\sum_{i\in\lambda_j} (Y_i(1)^4+Y_i(0)^4)~,
\]
and hence it follows from \eqref{eq:four_moment} again that
\begin{equation} \label{eq:epsilon_four}
\sum_{1 \le j \le m} E[\varepsilon_{j,n}^4]
\leq
8k^3 M_{k, \ell}^4
\sum_{1 \le i \le n} (Y_i(1)^4+Y_i(0)^4)
=
o(m^2)~.
\end{equation}
Finally, since Jensen's inequality implies $(\var[\hat \Delta_{j, n}])^2 = (\operatorname{Var}[\varepsilon_{j,n}])^2 \leq E[\varepsilon_{j,n}^2]^2 \leq E[\varepsilon_{j,n}^4]$, we obtain
\begin{equation} \label{eq:var^2}
\sum_{1 \le j \le m} (\var[\hat \Delta_{j, n}])^2
\leq
\sum_{1 \le j \le m} E[\varepsilon_{j,n}^4]
=
o(m^2)~.   
\end{equation}

We proceed to showing \eqref{eq:An_0}. Because $E[\varepsilon_n]=0$ and $E[A_n]=0$, independence across strata implies
\[
\var[A_n]
=
\frac{4}{m^2}\sum_{1 \le j \le m}
\big ( [L_m(\omega_m)'\delta_n]_j \big )^2\var[\hat \Delta_{j, n}]~,
\]
where $[L_m(\omega_m)'\delta_n]_j$ is the $j$th entry of $L_m(\omega_m)'\delta_n$. By the symmetry of $L_m(\omega_m)$ and the Cauchy--Schwarz inequality,
\[
\sum_{1 \le j \le m}\big ( [L_m(\omega_m)'\delta_n]_j \big )^2\var[\hat \Delta_{j, n}]
\le
\bigg(\sum_{1 \le j \le m}\big ( [L_m(\omega_m)\delta_n]_j \big )^4 \bigg)^{1/2}
\bigg(\sum_{1 \le j \le m} \var[\hat \Delta_{j, n}]^2\bigg)^{1/2}~.
\]
It follows from \eqref{eq:laplacian_diag} and \eqref{eq:laplacian_offdiag} that for $1 \leq j \leq m$,
\[
[L_m(\omega_m)\delta_n]_j = \sum_{b\ne j}\omega_{jb,m}(\Delta_{j,n}-\Delta_{b,n})~.
\]
Next, note because $\omega_{ab,m} \geq 0$, Hölder's inequality implies
\begin{align*}
\sum_{b \ne j} \omega_{jb,m}|x_b| & = \sum_{b \ne j} \omega_{jb,m}^{3/4} \big ( \omega_{jb,m}^{1/4}|x_b| \big ) \\
& \leq \Big ( \sum_{b \ne j} \big ( \omega_{jb,m}^{3/4} \big )^{4/3} \Big )^{3/4} \Big ( \sum_{b \ne j} \big ( \omega_{jb,m}^{1/4}|x_b| \big )^4 \Big )^{1/4} = \Big ( \sum_{b \ne j}\omega_{jb,m} \Big )^{3/4} \Big ( \sum_{b \ne j}\omega_{jb,m}|x_b|^4 \Big )^{1/4}    
\end{align*}
and hence $\big ( \sum_{b \ne j} \omega_{jb,m}|x_b| \big )^4 \le \big ( \sum_{b \ne j}\omega_{jb,m} \big )^3\sum_{b \ne j}\omega_{jb,m}|x_b|^4$. Therefore,
\begin{align*}
\sum_{1 \le j \le m} \big ( (L_m(\omega_m)\delta_n)_j \big )^4
&\le
\sum_{1 \le j \le m} \Big ( \sum_{b \neq j} \omega_{jb,m} \Big )^3 \Big ( \sum_{b\ne j}\omega_{jb,m}(\Delta_{j,n}-\Delta_{b,n})^4 \Big )\\
&\le
8C^3\sum_{1 \le j \le m}\sum_{b\ne j}\omega_{jb,m}(\Delta_{j,n}^4+\Delta_{b,n}^4)\\
&\le
16C^4\sum_{1 \le j \le m}\Delta_{j,n}^4
=o(m^2)~,
\end{align*}
where the second inequality follows from \eqref{eq:deg_bound} and the inequality $(x-y)^4\le 8(x^4+y^4)$ and the last inequality follows from \eqref{eq:delta_four}. Therefore, $\var[A_n]=o(1)$, and \eqref{eq:An_0} follows from Markov's inequality.

Finally, we show \eqref{eq:Bn_0}. Since $E[B_n]=0$, by Markov's inequality, it suffices to show $\var[B_n] \to 0$. Write
\[
\varepsilon_n'L_m(\omega_m)\varepsilon_n
=
\sum_{1 \le j \le m}\mathrm{deg}_{j,m}\varepsilon_{j,n}^2
-2\sum_{1 \leq a<b \leq m}\omega_{ab,m}\varepsilon_{a,n}\varepsilon_{b,n}~.
\]
For the first term,
\begin{equation} \label{eq:Bn_1}
\var\bigg[\sum_{1 \le j \le m}\mathrm{deg}_{j,m}\varepsilon_{j,n}^2\bigg]
=
\sum_{1 \le j \le m}\mathrm{deg}_{j,m}^2\var[\varepsilon_{j,n}^2]
\le
C^2\sum_{1 \le j \le m}E[\varepsilon_{j,n}^4]
=o(m^2)~,
\end{equation}
where the inequality follows from \eqref{eq:deg_bound} and the last equality follows from \eqref{eq:epsilon_four}. For the second term,
\begin{align} 
\nonumber \var\bigg[\sum_{1 \leq a<b \leq m}\omega_{ab,m}\varepsilon_{a,n}\varepsilon_{b,n}\bigg] & = E \bigg[\bigg ( \sum_{1 \leq a<b \leq m}\omega_{ab,m}\varepsilon_{a,n}\varepsilon_{b,n} \bigg )^2 \bigg] \\
\label{eq:Bn_2} & = \sum_{1 \leq a < b \leq m}\sum_{1 \leq c < d \leq m}\omega_{ab,m}\omega_{cd,m}E[\varepsilon_{a,n}\varepsilon_{b,n}\varepsilon_{c,n}\varepsilon_{d,n}]~.
\end{align}
Now consider $1 \leq a < b \leq m$ and $1 \leq c < d \leq m$. If $a$, $b$, $c$, $d$ are all different, then $E[\varepsilon_{a,n}\varepsilon_{b,n}\varepsilon_{c,n}\varepsilon_{d,n}] = 0$ by independence. If they share exactly one endpoint, so that they can be denoted without loss of generality by $\{a,b\}$ and $\{a,d\}$, then
\[
E[\varepsilon_{a,n}\varepsilon_{b,n}\varepsilon_{c,n}\varepsilon_{d,n}]
=
E[\varepsilon_{a,n}^2\varepsilon_{b,n}\varepsilon_{d,n}]
=
E[\varepsilon_{a,n}^2]E[\varepsilon_{b,n}]E[\varepsilon_{d,n}]
=
0~.
\]
Therefore, the only nonzero terms on the right-hand side of \eqref{eq:Bn_2} arise when $c = a$ and $d = b$. For such terms,
\[
E[\varepsilon_{a,n}\varepsilon_{b,n}\varepsilon_{c,n}\varepsilon_{d,n}]
=
E[\varepsilon_{a,n}^2\varepsilon_{b,n}^2]
=
E[\varepsilon_{a,n}^2]E[\varepsilon_{b,n}^2]
=
\var[\hat \Delta_{a, n}] \var[\hat \Delta_{b, n}]~,
\]
and hence
\[
\var\bigg[ \sum_{1 \leq a<b \leq m}\omega_{ab,m}\varepsilon_{a,n}\varepsilon_{b,n} \bigg ]
=
\sum_{1 \leq a<b \leq m}\omega_{ab,m}^2\var[\hat \Delta_{a, n}] \var[\hat \Delta_{b, n}]~.
\]
Because $0\le \omega_{ab,m}\le \mathrm{deg}_{a,m}\le C$, it follows from the inequality $xy \leq (x^2 + y^2) / 2$ and \eqref{eq:var^2} that
\begin{align}
\nonumber \sum_{1 \leq a<b \leq m}\omega_{ab,m}^2 \var[\hat \Delta_{a, n}] \var[\hat \Delta_{b, n}]
&\le
\frac{C}{2}\sum_{1 \leq a<b \leq m}\omega_{ab,m}(\var[\hat \Delta_{a, n}]^2 + \var[\hat \Delta_{b, n}]^2)\\
\label{eq:Bn_3} &=
\frac{C}{2}\sum_{1 \le j \le m}\mathrm{deg}_{j,m}\var[\hat \Delta_{j, n}]^2
\le
\frac{C^2}{2}\sum_{1 \le j \le m} \var[\hat \Delta_{j, n}]^2
=o(m^2)~.
\end{align}
The desired result in \eqref{eq:Bn_0} now follows from \eqref{eq:Bn_1}--\eqref{eq:Bn_3}. \qed

\subsection{Proof of Proposition \ref{prop:no_uniform_ordering}}
Let $L_m$ and $\tilde L_m$ be the graph-Laplacians associated with $\omega_m$ and $\tilde \omega_m$. The identity \eqref{eq:x_quadratic} and the inequality \eqref{eq:order} imply $\tilde L_m-L_m$ is positive semidefinite. Because $\omega_m$ and $\tilde \omega_m$ are both degree-calibrated, $L_m$ and $\tilde L_m$ have the same diagonal entries, all equal to one, so $\tr(\tilde L_m-L_m) = 0$. A positive semidefinite matrix with zero trace is the zero matrix, so $\tilde L_m-L_m = 0$.  It then follows from \eqref{eq:graph_weights}, \eqref{eq:laplacian_diag}, and \eqref{eq:laplacian_offdiag} that $\omega_m = \tilde \omega_m$. \qed

\subsection{Proof of Theorem \ref{thm:minimax}}
To show Theorem \ref{thm:minimax}(a), recall that any $\omega_m \in \Omega_m$ is degree-calibrated and therefore satisfies $L_m(\omega_m) = I_m - \omega_m$. Next, note that $\delta_n = P_m \delta_n + \frac{1}{m} \iota_m \iota_m' \delta_n$, where the two components are orthogonal to each other and $\frac{1}{m} \iota_m \iota_m' \delta_n = \Delta_n \iota_m \in \mathrm{span}(\iota_m)$, so $P_m \delta_n$ is the projection of $\delta_n$ on $\mathrm{span}(\iota_m)^\perp$, the orthogonal complement of $\mathrm{span}(\iota_m)$. Because $L_m(\omega_m) \iota_m = (I_m - \omega_m) \iota_m = 0$ by degree calibration, we have
\[
\delta_n' L_m(\omega_m) \delta_n = (P_m \delta_n)' L_m(\omega_m) (P_m \delta_n)~.
\]
Therefore, it follows from Theorem 4.2.2 in \cite{horn1985matrix} and $\|P_m\delta_n\|^2 = m \|P_m\delta_n\|_m^2$, where $\|\cdot\|$ denotes the Euclidean norm, that
\begin{equation} \label{eq:quad_ev}
\sup_{\delta_n \in \mathbb R^m: \|P_m \delta_n\|_m^2 \leq C}
\frac{1}{m}\delta_n' L_m(\omega_m) \delta_n
= C \lambda_{\mathrm{max}}(L_m(\omega_m))~,
\end{equation}
where $\lambda_{\mathrm{max}}(L_m(\omega_m))$ is the maximum eigenvalue of $L_m(\omega_m)$. Next, \eqref{eq:x_quadratic} shows $L_m(\omega_m)$ is positive semidefinite, so its eigenvalues are all nonnegative. Furthermore, $L_m(\omega_m)$ has at least one zero eigenvalue because $L_m(\omega_m) \iota_m = 0$. Because the sum of the remaining $m - 1$ eigenvalues equals $\tr(L_m(\omega_m)) = \tr(I_m) - \tr(\omega_m) = m - 0 = m$, $\lambda_{\mathrm{max}}(L_m(\omega_m))$ is at least $m / (m - 1)$, and therefore
\begin{equation} \label{eq:quad_bound}
\sup_{\delta_n \in \mathbb R^m: \|P_m \delta_n\|_m^2 \leq C}
\frac{1}{m}\delta_n' L_m(\omega_m) \delta_n
= C \lambda_{\mathrm{max}}(L_m(\omega_m)) \geq C \frac{m}{m - 1}~.
\end{equation}
On the other hand, for $x \in \mathrm{span}(\iota_m)^\perp$, we have $\iota_m' x = 0$ and therefore
\[ L_m(\omega_m^{\rm CG}) x = (I_m - \omega_m^{\rm CG}) x = x - \frac{1}{m - 1} (\iota_m \iota_m' x - x) = \frac{m}{m - 1} x~. \]
Because $\iota_m$ and any basis of $\mathrm{span}(\iota_m)^\perp$ form a basis for $\mathbb R^m$, we have that $L_m(\omega_m^{\rm CG})$ has one eigenvalue of zero and $m - 1$ eigenvalues $\frac{m}{m - 1}$, and therefore it follows from \eqref{eq:quad_ev} that
\begin{equation} \label{eq:quad_attained}
\sup_{\delta_n \in \mathbb R^m: \|P_m \delta_n\|_m^2 \leq C}
\frac{1}{m}\delta_n'L_m(\omega_m^{\rm CG})\delta_n
= C \frac{m}{m- 1}~.
\end{equation}
The results in \eqref{eq:quad_bound} and \eqref{eq:quad_attained} show that the infimum in the theorem is attained by $\omega_m^{\rm CG}$.

To prove uniqueness, suppose that some $\omega_m\in\Omega_m$ also attains the infimum. Then, it follows from \eqref{eq:quad_bound} and \eqref{eq:quad_attained} that $L_m(\omega_m)$ also has one eigenvalue of zero and $m - 1$ eigenvalues $\frac{m}{m - 1}$. $L_m(\omega_m)$ is positive semidefinite, so it is diagonalizable with respect to an orthonormal basis. Because $L_m(\omega_m) \iota_m = 0$, $\iota_m$ is the eigenvector associated with the zero eigenvalue. It therefore follows that for any vector $u \in \mathrm{span}(\iota_m)^\perp$, $L_m(\omega_m) u = \frac{m}{m - 1} u$. Because $\iota_m$ and any basis of $\mathrm{span}(\iota_m)^\perp$ form a basis for $\mathbb R^m$, and $L_m(\omega_m)$ and $L_m(\omega_m^{\rm CG})$ coincide on this basis, we have $L_m(\omega_m) = L_m(\omega_m^{\rm CG})$. It then follows from \eqref{eq:graph_weights}, \eqref{eq:laplacian_diag}, and \eqref{eq:laplacian_offdiag} that $\omega_m=\omega_m^{\rm CG}$.

To show Theorem \ref{thm:minimax}(b), suppose $L_m(\omega_m) \preceq \kappa_m P_m$. Then
\[
\frac{1}{m}\delta_n' L_m(\omega_m) \delta_n
\leq \frac{\kappa_m}{m}\delta_n' P_m \delta_n
= \kappa_m\|P_m\delta_n\|_m^2
\le \kappa_m C~.
\]

Finally, we show Theorem \ref{thm:minimax}(c). Because $\omega_m\in\mathcal M_m$ is a perfect matching graph and the set $\{\delta_n \in \mathbb R^m: \|P_m \delta_n\|_m^2 \leq C\}$ is invariant to relabeling, we can reorder the indices without loss of generality so that the matched pairs are $\{2j-1,2j\}$ for $1 \le j \le m/2$. Then, $L_m(\omega_m)$ is block-diagonal with $2 \times 2$ blocks of
\[
\begin{pmatrix}
    1 & -1 \\ -1 & 1
\end{pmatrix}~.
\]
Each of these blocks has two eigenvalues 0 and 2 and corresponding eigenvectors $(1, 1)'$ and $(1, -1)'$, which form a basis in $\mathbb R^2$. Letting $f_1, \dots, f_{m / 2}$ denote the standard basis in $\mathbb R^{m / 2}$, we see that $\{f_j\otimes (1, 1)', f_j\otimes (1, -1)': 1 \leq j \leq m / 2\}$ form a basis in $\mathbb R^m$, so that $L_m(\omega_m)$ also has only two eigenvalues 0 and 2, each with multiplicity $m / 2$. The desired conclusion now follows from \eqref{eq:quad_ev}.
\qed

\subsection{Proof of Theorem \ref{thm:limit_V}}

In the proof of Theorem \ref{thm:normal} we have obtained
\[
n\cdot\var[\hat\Delta_n]
=
k\eta^2\left(\frac{1}\ell-\frac{1}{k}\right)\frac{1}{k-1}
\frac{1}{m}\sum_{1 \le j \le m}\sum_{i\in\lambda_j}(R_i-\bar R_j)^2~.
\]
Below we will argue that 
\begin{equation}\label{eq:R_limit}
\frac{1}{m}\sum_{1 \le j \le m}\sum_{i\in\lambda_j}(R_i-\bar R_j)^2
\to
(k-1)E_Q[\var_Q[\tilde R\mid \tilde X]]~.
\end{equation}
for
\[\tilde R=\frac{\tilde Y(1)}{\eta}+\frac{\tilde Y(0)}{1-\eta}~.\]
If \eqref{eq:R_limit} holds, then because $k\eta^2(1/\ell-1/k)=\eta(1-\eta)$, we have
\[
n \var[\hat\Delta_n]
\to
\eta(1-\eta)E_Q[\var_Q[\tilde R\mid \tilde X]]~,
\]
which leads to the conclusion of the theorem because
\begin{align*}
\eta(1-\eta)\var_Q[\tilde R\mid \tilde X] & = \eta(1-\eta)\var_Q\left[\frac{\tilde Y(1)}{\eta}+\frac{\tilde Y(0)}{1-\eta}\Bigm|\tilde X\right] \\
&=
\frac{1-\eta}{\eta}\var_Q[\tilde Y(1)\mid\tilde X]
+
\frac{\eta}{1-\eta}\var_Q[\tilde Y(0)\mid\tilde X] +
2\cov_Q[\tilde Y(1),\tilde Y(0)\mid\tilde X]\\
&=
\frac{\var_Q[\tilde Y(1)\mid\tilde X]}{\eta}
+
\frac{\var_Q[\tilde Y(0)\mid\tilde X]}{1-\eta}-\var_Q[\tilde Y(1)-\tilde Y(0)\mid\tilde X]~.
\end{align*}
To show \eqref{eq:R_limit}, first note
\begin{equation} \label{eq:Ri2_converge}
\frac{1}n\sum_{1 \le i \le n} R_i^2
=
\frac{1}{\eta^2}\frac{1}n\sum_{1 \le i \le n}Y_i(1)^2
+
\frac{1}{(1-\eta)^2}\frac{1}n\sum_{1 \le i \le n}Y_i(0)^2
+
\frac{2}{\eta(1-\eta)}\frac{1}n\sum_{1 \le i \le n}Y_i(1)Y_i(0) \to E_Q[\tilde R^2]~.    
\end{equation}
by Assumption \ref{ass:super}(a). Next, because
\[
\frac{1}{\binom{k}{2}}\sum_{i<i'\in\lambda_j}R_iR_{i'}
=
\frac{1}{k(k-1)}
\sum_{\substack{i,i'\in\lambda_j\\ i\ne i'}}R_iR_{i'}~,
\]
it follows from Assumption \ref{ass:super}(b) that
\begin{align}
\nonumber & \frac{1}{m}\sum_{1 \le j \le m}\frac{1}{\binom{k}{2}}
\sum_{i<i'\in\lambda_j}R_iR_{i'} \\
& \to \frac{1}{\eta^2}E_Q[E_Q[\tilde{Y}(1)\mid \tilde{X}]^2]
+
\nonumber \frac{1}{(1-\eta)^2}E_Q[E_Q[\tilde{Y}(0)\mid \tilde{X}]^2] + 
\frac{2}{\eta(1-\eta)}E_Q[E_Q[\tilde{Y}(1)\mid \tilde{X}]E_Q[\tilde{Y}(0)\mid \tilde{X}]]
 \\
\label{eq:Rcross_converge} & = E_Q[E_Q[\tilde R\mid \tilde X]^2]~.
\end{align}
Finally,
\begin{align}
\nonumber \frac{1}{m}\sum_{1 \le j \le m}\sum_{i\in\lambda_j}(R_i-\bar R_j)^2
&=
\frac{1}{m}\sum_{1 \le j \le m}\sum_{i\in\lambda_j}R_i^2
-
\frac{k}{m}\sum_{1 \le j \le m}\bar R_j^2\\
\nonumber &=
\frac{1}{m}\sum_{1 \le j \le m}\sum_{i\in\lambda_j}R_i^2
-
\frac{1}{mk}\sum_{1 \le j \le m}\sum_{i,i'\in\lambda_j}R_iR_{i'}\\
\label{eq:R_converge_decomp} &=
(k-1)\left(
\frac{1}n\sum_{1 \le i \le n}R_i^2
-
\frac{1}{m}\sum_{1 \le j \le m}\frac{1}{\binom{k}{2}}\sum_{i<i'\in\lambda_j}R_iR_{i'}
\right)~,
\end{align}
and \eqref{eq:R_limit} follows from \eqref{eq:Ri2_converge}--\eqref{eq:R_converge_decomp}. \qed

\subsection{Proof of Theorem \ref{thm:main_var}}
Because $\omega_m$ is degree-calibrated, it follows from the definition in \eqref{eq:graph_estimator} that we can rewrite the variance estimator as
\[
\hat V_n(\omega_m)
=
\frac{1}{m^2}\left(
\sum_{1 \le j \le m}\hat\Delta_{j,n}^2
-
\sum_{1 \le a \le m}\sum_{b\ne a}\omega_{ab,m}\hat\Delta_{a,n}\hat\Delta_{b,n}
\right)~.
\]
Taking expectations and multiplying by $n=mk$, we have
\begin{equation} \label{eq:main_var1}
nE[\hat V_n(\omega_m)]
=
k\frac{1}{m}\sum_{1 \le j \le m}E[\hat\Delta_{j,n}^2]
-
k\frac{1}{m}\sum_{1 \le a \le m}\sum_{b\ne a}\omega_{ab,m}\Delta_{a,n}\Delta_{b,n}~,    
\end{equation}
where we use $E[\hat\Delta_{a,n}\hat\Delta_{b,n}]=\Delta_{a,n}\Delta_{b,n}$ for $a\ne b$ by independence across strata. Under Assumption \ref{ass:dn}, for $i \ne i' \in \lambda_j$ and $1 \leq j \leq m$,
\begin{align*}
E[D_i] &= \frac{\ell}{k} \\
E[D_iD_{i'}] &= \frac{\ell(\ell-1)}{k(k-1)}\\
E[(1-D_i)(1-D_{i'})] &= \frac{(k-\ell)(k-\ell-1)}{k(k-1)} \\
E[D_i(1-D_{i'})] &= \frac{\ell(k-\ell)}{k(k-1)}~.
\end{align*}
These results imply
\begin{align*}
E[\hat\Delta_{j,n}^2]
&=
\frac{1}{\ell}\frac{1}{k}\sum_{i\in\lambda_j}Y_i(1)^2
+
\frac{\ell-1}{\ell}\frac{1}{\binom{k}{2}}\sum_{i<i'\in\lambda_j}Y_i(1)Y_{i'}(1)\\
&\hspace{2em}+
\frac{1}{k-\ell}\frac{1}{k}\sum_{i\in\lambda_j}Y_i(0)^2
+
\frac{k-\ell-1}{k-\ell}\frac{1}{\binom{k}{2}}\sum_{i<i'\in\lambda_j}Y_i(0)Y_{i'}(0)\\
&\hspace{2em}-
\frac{1}{\binom{k}{2}}\sum_{\substack{i,i'\in\lambda_j\\ i\ne i'}}Y_i(1)Y_{i'}(0)~.
\end{align*}
Averaging over $j$ and applying Assumption \ref{ass:super}(a) and (b) term by term, we get
\begin{align*}
\frac{1}{m}\sum_{1 \le j \le m}E[\hat\Delta_{j,n}^2]
&\to
\frac{1}\ell E_Q[\tilde Y(1)^2]
+
\frac{\ell-1}{\ell}E_Q[E_Q[\tilde{Y}(1)\mid \tilde X]^2]\\
&\hspace{2em}+
\frac{1}{k-\ell}E_Q[\tilde Y(0)^2]
+
\frac{k-\ell-1}{k-\ell}E_Q[E_Q[\tilde{Y}(0)\mid \tilde X]^2]\\
&\hspace{2em}-
2E_Q[E_Q[\tilde{Y}(1)\mid \tilde X]E_Q[\tilde{Y}(0)\mid \tilde X]]~.
\end{align*}
Multiplying by $k$ and using $\eta=\ell/k$ and the conditional variance decomposition, we have
\begin{equation} \label{eq:main_var2}
k\cdot\frac{1}{m}\sum_{1 \le j \le m}E[\hat\Delta_{j,n}^2] \to V^{\rm obs}+kE_Q[E_Q[\tilde Y(1) - \tilde Y(0) | \tilde X]^2]~.
\end{equation}
Next, because
\[
\Delta_{a,n}\Delta_{b,n}
=
(\bar Y_{a,n}(1)-\bar Y_{a,n}(0)) (\bar Y_{b,n}(1)-\bar Y_{b,n}(0))~,
\]
Assumption \ref{ass:limit_graph_products} implies
\begin{equation} \label{eq:main_var3}
\frac{1}{m}\sum_{1 \le a \le m}\sum_{b\ne a}\omega_{ab,m}\Delta_{a,n}\Delta_{b,n}
\to
E_Q[E_Q[\theta(\tilde{X})\mid\Gamma_\omega(\tilde{X})]^2]~.    
\end{equation}
The results in \eqref{eq:main_var1}--\eqref{eq:main_var3} imply
\[
nE[\hat V_n(\omega_m)]
\to
V^{\rm obs}+k E_Q\left[\var_Q[\theta(\tilde{X})\mid\Gamma_\omega(\tilde{X})]\right]~.
\]

To prove the convergence of the coverage probability, note as shown in the proof of Theorem \ref{thm:upward}, \eqref{eq:pop_moments} and \eqref{eq:pop_nondegen} imply
\[
\frac{\hat \Delta_n-\Delta_n}{\var[\hat\Delta_n]^{1/2}}
\stackrel{d}{\to} N(0,1)~.
\]
By Theorem \ref{thm:limit_V}, $n \var[\hat\Delta_n]\to V$. On the other hand,
\[
n \hat V_n(\omega_m)
=
nE[\hat V_n(\omega_m)]
+
n(\hat V_n(\omega_m)-E[\hat V_n(\omega_m)])
\xrightarrow{P}
V_\omega~.
\]
Therefore, it follows from Slutsky's theorem that
\[
\frac{\hat\Delta_n-\Delta_n}{\sqrt{\hat V_n(\omega_m)}}
=
\frac{\sqrt n(\hat\Delta_n-\Delta_n)}
{\sqrt{n \hat V_n(\omega_m)}}
\stackrel{d}{\to}
N(0,V/V_\omega)~.
\]
The desired conclusion now follows because $\varsigma_\omega=(V/V_\omega)^{1/2}$.
\qed

\subsection{Proof of Proposition \ref{prop:complete_graph_product}}
Because $\omega_{ab,m}^{\rm CG}=1/(m-1)$,
\begin{align}
\nonumber & \frac{1}{m}\sum_{1 \le a \le m}\sum_{b\ne a}\omega_{ab,m}^{\rm CG}
\bar Y_{a,n}(d)\bar Y_{b,n}(d') \\
\nonumber & = \frac{1}{m}\sum_{1 \le a \le m}\sum_{b\ne a}\frac{1}{m-1}
\bar Y_{a,n}(d)\bar Y_{b,n}(d')\\
\label{eq:cg_converge1} & =
\frac{m}{m-1}
\bigg(\frac{1}{m}\sum_{1 \le j \le m}\bar Y_{j,n}(d)\bigg)
\bigg(\frac{1}{m}\sum_{1 \le j \le m}\bar Y_{j,n}(d')\bigg)
-
\frac{1}{m(m-1)}
\sum_{1 \le j \le m}
\bar Y_{j,n}(d)\bar Y_{j,n}(d')~.
\end{align}
By Assumption \ref{ass:super}(a) with $r = 1$,
\begin{align}
\label{eq:cg_converge2} & \frac{1}{m}\sum_{1 \le j \le m}\bar Y_{j,n}(d) \to E_Q[\tilde{Y}(d)] \\
\label{eq:cg_converge3} & \frac{1}{m}\sum_{1 \le j \le m}\bar Y_{j,n}(d') \to E_Q[\tilde{Y}(d')]~.
\end{align}
Meanwhile, by the Cauchy--Schwarz inequality, Jensen's inequality, and Assumption \ref{ass:super}(a) with $r = 2$,
\begin{align}
\frac{1}{m}\sum_{1 \le j \le m}
|\bar Y_{j,n}(d)\bar Y_{j,n}(d')|
\nonumber &\le
\bigg(\frac{1}{m}\sum_{1 \le j \le m}\bar Y_{j,n}(d)^2\bigg)^{1/2}
\bigg(\frac{1}{m}\sum_{1 \le j \le m}\bar Y_{j,n}(d')^2\bigg)^{1/2}\\
\nonumber &\le
\bigg(\frac{1}n\sum_{1 \le i \le n}Y_i(d)^2\bigg)^{1/2}
\bigg(\frac{1}n\sum_{1 \le i \le n}Y_i(d')^2\bigg)^{1/2} \\
\label{eq:cg_converge4} & \to (E_Q[\tilde Y(d)^2])^{1/2} (E_Q[\tilde Y(d')^2])^{1/2}~.
\end{align}
The desired conclusion of the proposition now follows from \eqref{eq:cg_converge1}--\eqref{eq:cg_converge4}.
\qed

\subsection{Proof of Proposition \ref{prop:local_graph_product}}
We first show that
\begin{equation}\label{eq:smooth_bilinear_local}
\frac{1}{m}\sum_{1 \le a \le m}\sum_{b \ne a}\omega_{ab,m}\bar{\mu}_{a,n}(d)\bar{\mu}_{b,n}(d')-\frac{1}n\sum_{1 \le i \le n}\mu_d(X_i)\mu_{d'}(X_i)\to0~.
\end{equation}
Because $\omega_m$ is degree-calibrated,
\begin{align*}
&\frac{1}{m}\sum_{1 \le a \le m}\sum_{b \ne a}\omega_{ab,m}\bar{\mu}_{a,n}(d)\bar{\mu}_{b,n}(d')
-
\frac{1}{m}\sum_{1 \le j \le m}\bar \mu_{j,n}(d)\bar \mu_{j,n}(d')\\
&\qquad=
\frac{1}{m}\sum_{1 \le a \le m}\sum_{b\ne a}\omega_{ab,m}
\bar \mu_{a,n}(d)(\bar \mu_{b,n}(d')-\bar \mu_{a,n}(d'))~.
\end{align*}
Therefore, by the Cauchy--Schwarz inequality,
\begin{align*}
&\bigg|
\frac{1}{m}\sum_{1 \le a \le m}\sum_{b \ne a}\omega_{ab,m}\bar{\mu}_{a,n}(d)\bar{\mu}_{b,n}(d')
-
\frac{1}{m}\sum_{1 \le j \le m}\bar \mu_{j,n}(d)\bar \mu_{j,n}(d')
\bigg|^2\\
&\le
\bigg(
\frac{1}{m}\sum_{1 \le a \le m}\sum_{b\ne a}\omega_{ab,m}\bar \mu_{a,n}(d)^2
\bigg)
\bigg(
\frac{1}{m}\sum_{1 \le a \le m}\sum_{b\ne a}\omega_{ab,m}
(\bar \mu_{b,n}(d')-\bar \mu_{a,n}(d'))^2
\bigg)~.
\end{align*}
The first bracket equals $\frac{1}{m}\sum_{1 \le j \le m}\bar{\mu}_{j,n}(d)^2$ by degree calibration and thus is $O(1)$ by Jensen's inequality and Assumption \ref{ass:lipschitz}(b). Because $\mu_{d'}$ is Lipschitz by Assumption \ref{ass:lipschitz}(a),
\[
\frac{1}{m}\sum_{1 \le a \le m}\sum_{b\ne a}\omega_{ab,m} (\bar \mu_{b,n}(d')-\bar \mu_{a,n}(d'))^2
\lesssim
\frac{1}{m}\sum_{1 \le a \le m}\sum_{b\ne a}\omega_{ab,m} \max_{i,i'\in\lambda_a\cup\lambda_b}\|X_i-X_{i'}\|^2 \leq \mathcal{D}_n(\omega_m) \to 0
\]
by assumption. Therefore,
\begin{equation}\label{eq:smooth_graph_to_stratum}
\frac{1}{m}\sum_{1 \le a \le m}\sum_{b \ne a}\omega_{ab,m}\bar{\mu}_{a,n}(d)\bar{\mu}_{b,n}(d')
-
\frac{1}{m}\sum_{1 \le j \le m}\bar \mu_{j,n}(d)\bar \mu_{j,n}(d')
\to0~.
\end{equation}

Next, for $1 \le j \le m$,
\[
\bar \mu_{j,n}(d)\bar \mu_{j,n}(d')
-
\frac{1}{k}\sum_{i\in\lambda_j}\mu_d(X_i)\mu_{d'}(X_i)
=
-\frac{1}{2k^2}\sum_{i,i'\in\lambda_j}
(\mu_d(X_i)-\mu_d(X_{i'}))(\mu_{d'}(X_i)-\mu_{d'}(X_{i'}))~.
\]
Because $\mu_{d}$ and $\mu_{d'}$ are Lipschitz by Assumption \ref{ass:lipschitz}(a),
\begin{equation} \label{eq:local_1}
\bigg|
\bar \mu_{j,n}(d)\bar \mu_{j,n}(d')
-
\frac{1}{k}\sum_{i\in\lambda_j}\mu_d(X_i)\mu_{d'}(X_i)
\bigg|
\lesssim
\max_{i,i'\in\lambda_j}\|X_i-X_{i'}\|^2~.    
\end{equation}
Again because $\omega_m$ is degree-calibrated, we have
\begin{equation} \label{eq:local_2}
\frac{1}{m}\sum_{1 \le j \le m}\max_{i,i'\in\lambda_j}\|X_i-X_{i'}\|^2
=
\frac{1}{m}\sum_{1 \le a \le m}\sum_{b\ne a}\omega_{ab,m}\max_{i,i'\in\lambda_a}\|X_i-X_{i'}\|^2
\le
\mathcal D_n(\omega_m)
\to0~.    
\end{equation}
Combining \eqref{eq:local_1} and \eqref{eq:local_2} and using $n=mk$, we have
\begin{align*}
\bigg|
\frac{1}{m}\sum_{1 \le j \le m}\bar \mu_{j,n}(d)\bar \mu_{j,n}(d')
-
\frac{1}n\sum_{1 \le i \le n}\mu_d(X_i)\mu_{d'}(X_i)
\bigg|
&\le
\frac{1}{m}\sum_{1 \le j \le m}
\bigg|
\bar \mu_{j,n}(d)\bar \mu_{j,n}(d')
-
\frac{1}{k}\sum_{i\in\lambda_j}\mu_d(X_i)\mu_{d'}(X_i)
\bigg|\\
&\to0~.
\end{align*}
Together with \eqref{eq:smooth_graph_to_stratum}, this implies \eqref{eq:smooth_bilinear_local}. It therefore follows from Assumption \ref{ass:lipschitz}(b) that
\begin{equation} \label{eq:local_3}
\frac{1}{m}\sum_{1 \le a \le m}\sum_{b \ne a}\omega_{ab,m}\bar{\mu}_{a,n}(d)\bar{\mu}_{b,n}(d') \rightarrow E_Q[E_Q[\tilde{Y}(d)\mid \tilde{X}]E_Q[\tilde{Y}(d')\mid \tilde{X}]]~.    
\end{equation}
The conclusion of the proposition then follows from $\bar Y_{a,n}(d)=\bar \mu_{a,n}(d)+\bar u_{a,n}(d)$, \eqref{eq:local_3}, and Assumption \ref{ass:lipschitz}(c).
\qed

\subsection{Proof of Proposition \ref{prop:graph_locality}}
For any $\omega_m\in\Omega_m$, define 
\begin{equation}\label{eq:ordered_mean_cost}
C_n(\omega_m)
:=
\frac{1}{m}\sum_{1 \le a \le m}\sum_{b\ne a}
\omega_{ab,m}\|\bar X_{a,n}-\bar X_{b,n}\|^2
=
\frac{2}{m}\sum_{1\le a<b\le m}\|\bar X_{a,n}-\bar X_{b,n}\|^2\omega_{ab,m}~.
\end{equation}
For any partition $\mathcal B_m$ of $\{1,\ldots,m\}$, define
\begin{equation} \label{eq:A_n}
A_n(\mathcal B_m)
:=
\frac{1}{m}\sum_{B\in\mathcal B_m}\sum_{a\in B}
\bigg\|\bar X_{a,n}-\frac{1}{|B|}\sum_{b\in B}\bar X_{b,n}\bigg\|^2~.
\end{equation}
To show part (a), note that by Lemma \ref{lem:mean_cost_to_graph_locality}, it is enough to show that
\[
C_n(\omega_m^{\rm PM})\to0~.
\]
Let $\tilde{\mathcal P}_m$ be the pair partition from Lemma \ref{lem:spatial_sorting_consequence}(a), and let $\tilde \omega_m$ the associated perfect-matching graph. Then, Lemma \ref{lem:spatial_sorting_consequence}(a) implies
\[
C_n(\tilde \omega_m)
=
4A_n(\tilde{\mathcal P}_m)
\le
4K_{p_X}M_{n}^2m^{-2/(p_X+1)}~.
\]
By the range condition \eqref{eq:pm_range_condition}, $C_n(\tilde \omega_m)\to0$. Because $\omega_m^{\rm PM}$ minimizes $(m/2)C_n(\omega)$ and therefore $C_n(\omega)$,
\[
C_n(\omega_m^{\rm PM})\le C_n(\tilde \omega_m)\to0~.
\]
Lemma \ref{lem:mean_cost_to_graph_locality} and $R_n\to0$ now imply $\mathcal D_n(\omega_m^{\rm PM})\to0$.

To show part (b), note that by Lemma \ref{lem:mean_cost_to_graph_locality}, it is enough to prove that
\[
C_n(\omega_m^{\rm SDP}(\kappa_m))\to0~.
\]
Let $\mathcal B_m=\{B_{1,m},\ldots,B_{S_m,m}\}$ be the block partition from Lemma \ref{lem:spatial_sorting_consequence}(b). The lemma implies that for all large $m$,
\[
b_m\le |B_{s,m}|\le 2b_m
\]
and combined with \eqref{eq:lp_range_condition}, we have
\begin{equation} \label{eq:blockwise_An}
A_n(\mathcal B_m)
\le
K_{p_X}M_{n}^2\left(\frac{m}{b_m}\right)^{-2/(p_X+1)}
\to 0~.    
\end{equation}
Define a blockwise complete graph $\omega_m^0$ by
\[
\omega_{ab,m}^0
=
\frac{1}{|B_{s,m}|-1}
\text{ if }a\ne b\text{ and }a,b\in B_{s,m}~,
\]
and set $\omega_{ab,m}^0=0$ if $a$ and $b$ are in different blocks. This graph is symmetric, has zero diagonal, and every row sums to one, so $\omega_m^0\in\Omega_m$. We will verify that $\omega_m^0$ is feasible for the constraints in \eqref{eq:sdp}, and that
\begin{equation} \label{eq:blockwise_Cn}
C_n(\omega_m^0) \to 0~,
\end{equation}
because then by optimality of $\omega_m^{\rm SDP}(\kappa_m)$,
\[
C_n(\omega_m^{\rm SDP}(\kappa_m))
\le
C_n(\omega_m^0)
\to0~.
\]

To show $\omega_m^0$ is feasible for the problem in \eqref{eq:sdp}, fix $z\in\mathbb R^m$. For $1 \leq s \leq S_m$, define $\bar z_{s,m} = |B_{s,m}|^{-1}\sum_{a\in B_{s,m}}z_a$. For any block $B_{s,m}$, the identity
\[
\sum_{\substack{a,b\in B_{s,m}\\ a<b}}(z_a-z_b)^2
=
|B_{s,m}|\sum_{a\in B_{s,m}}(z_a-\bar z_{s,m})^2
\]
implies that
\begin{align}
\nonumber z'L_m(\omega_m^0)z
&=
\sum_{1 \le s \le S_m}
\frac{1}{|B_{s,m}|-1}
\sum_{\substack{a,b\in B_{s,m}\\ a<b}}(z_a-z_b)^2\\
\nonumber &=
\sum_{1 \le s \le S_m}\frac{|B_{s,m}|}{|B_{s,m}|-1}
\sum_{a\in B_{s,m}}(z_a-\bar z_{s,m})^2 \\
\label{eq:blockwise_laplacian_quadratic} &\le
\left(\max_{1 \le s \le S_m}\frac{|B_{s,m}|}{|B_{s,m}|-1}\right)
\sum_{1 \le s \le S_m}\sum_{a\in B_{s,m}}(z_a-\bar z_{s,m})^2~.
\end{align}
Meanwhile, note from direct calculation that for $\bar z_m=m^{-1}\sum_{1 \le j \le m} z_j$, we have
\begin{equation} \label{eq:blockwise_quadratic_decomp}
z'P_m z = \sum_{1 \le a \le m}(z_a-\bar z_m)^2
=
\sum_{1 \le s \le S_m}\sum_{a\in B_{s,m}}(z_a-\bar z_{s,m})^2
+
\sum_{1 \le s \le S_m} |B_{s,m}|(\bar z_{s,m}-\bar z_m)^2~.    
\end{equation}
The results in \eqref{eq:blockwise_laplacian_quadratic} and \eqref{eq:blockwise_quadratic_decomp} imply
\begin{equation} \label{eq:blockwise_laplacian_bound1}
z'L_m(\omega_m^0)z
\le
\left(\max_{1 \le s \le S_m}\frac{|B_{s,m}|}{|B_{s,m}|-1}\right)z'P_mz~.    
\end{equation}
Since $|B_{s,m}|\ge b_m$ and $q/(q-1)$ is decreasing in $q$,
\begin{equation} \label{eq:blockwise_laplacian_bound2}
\max_{1 \le s \le S_m}\frac{|B_{s,m}|}{|B_{s,m}|-1}
\le
\frac{b_m}{b_m-1}
\le
\kappa_m~.
\end{equation}
Because $z \in \mathbb R^m$ is arbitrary, \eqref{eq:blockwise_laplacian_bound1} and \eqref{eq:blockwise_laplacian_bound2} imply $L_m(\omega_m^0)\preceq\kappa_mP_m$, so $\omega_m^0$ is feasible for the problem in \eqref{eq:sdp}.

It remains to show \eqref{eq:blockwise_Cn}. For $1 \leq s \leq S_m$,
\[
\sum_{a\in B_{s,m}}\sum_{\substack{b\in B_{s,m}\\ b\ne a}}
\|\bar X_{a,n}-\bar X_{b,n}\|^2
=
2|B_{s,m}|\sum_{a\in B_{s,m}}\bigg\|\bar X_{a,n}-\frac{1}{|B_{s,m}|}\sum_{b \in B_{s,m}}\bar X_{b,n}\bigg\|^2~.
\]
Consequently, because $|B_{s,m}|\ge2$ for $m$ large enough,
\begin{align*}
C_n(\omega_m^0)
&=
\frac{1}{m}\sum_{1 \le s \le S_m}
\frac{1}{|B_{s,m}|-1}
\sum_{a\in B_{s,m}}\sum_{\substack{b\in B_{s,m}\\ b\ne a}}
\|\bar X_{a,n}-\bar X_{b,n}\|^2  \\
&=
\frac{1}{m}\sum_{1 \le s \le S_m}
\frac{2|B_{s,m}|}{|B_{s,m}|-1}
\sum_{a\in B_{s,m}}\bigg\|\bar X_{a,n}-\frac{1}{|B_{s,m}|}\sum_{b \in B_{s,m}}\bar X_{b,n}\bigg\|^2  \\
&\le
\frac4m\sum_{1 \le s \le S_m}\sum_{a\in B_{s,m}}\bigg\|\bar X_{a,n}-\frac{1}{|B_{s,m}|}\sum_{b \in B_{s,m}}\bar X_{b,n}\bigg\|^2
\le
4A_n(\mathcal B_m)
\to0~,
\end{align*}
where the final convergence follows from \eqref{eq:blockwise_An}. The conclusion in (b) then follows.
\qed

% For the paired-strata graph, $\omega^{\rm Bai}_{2j-1,2j,n}=1$ and all other weights are zero.  Under the original adjacent-pair product condition,
% \[
% \frac{2}{m}\sum_{j=1}^{m/2}
% \bar Y_{2j-1,n}(d)\bar Y_{2j,n}(d')
% \to
% E_Q[\mu_d(\tilde X)\mu_{d'}(\tilde X)]
% \]
% and
% \[
% \frac{2}{m}\sum_{j=1}^{m/2}
% \bar Y_{2j,n}(d)\bar Y_{2j-1,n}(d')
% \to
% E_Q[\mu_d(\tilde X)\mu_{d'}(\tilde X)].
% \]
% Therefore
% \[
% \Gamma_{\rm Bai}(d,d')
% =
% \frac{1}2E_Q[\mu_d(\tilde X)\mu_{d'}(\tilde X)]
% +
% \frac{1}2E_Q[\mu_d(\tilde X)\mu_{d'}(\tilde X)]
% =
% E_Q[\mu_d(\tilde X)\mu_{d'}(\tilde X)].
% \]
% It follows that
% \[
% \Gamma_{\rm Bai}^\Delta=E_Q[\theta(\tilde X)^2],
% \]
% and hence
% \[
% nE[\hat V_n(\omega^{\rm Bai})]\to V^{\rm obs}.
% \]

\section{Analysis of Fogarty's Estimator}\label{app:fogarty}
In this section, we perform an analysis similar to Section \ref{sec:variances} for the variance estimator proposed by \cite{fogarty2018mitigating}. Let $R$ be an $m \times L$ matrix for $L < m$ with rank $L$ and $H_R = R (R' R)^{-1} R'$ denote the projection matrix onto its column space. For each $R$, consider the estimator given by
\begin{equation} \label{eq:fogarty}
\hat V_n^{\rm F}(R) =\frac{1}{m^2}\hat{\delta}_n' (\diag(I-H_R))^{-1/2}(I-H_R)(\diag(I-H_R))^{-1/2}\hat{\delta}_n,
\end{equation}
where $\hat{\delta}_n=(\hat \Delta_{1, n},\dots,\hat \Delta_{m, n})'$. Note that $\hat{V}_n^{\rm F}(\iota_m) = \hat{V}_n^{\rm IM}$, so that $\hat{V}_n^{\rm F}(R)$ is indeed a generalization of $\hat{V}^{\rm IM}_n$. \cite{fogarty2018mitigating} focuses on the estimator given by $\hat V_n^{\rm F} = \hat V_n^{\rm F}(Q_2)$, where
\[
Q_2 = \begin{pmatrix}
     & (\bar X_{1, n} - \mu_{X, n})' \\
    \iota_m & \vdots \\
     & (\bar X_{m, n} - \mu_{X, n})'
\end{pmatrix}
\]
for $\mu_{X, n} = \frac{1}n \sum_{1 \le i \le n} X_i$. In particular, \cite{fogarty2018mitigating} argues that $\hat{V}_n^{\rm F}$ can be less conservative than $\hat{V}_n^{\rm IM}$ whenever the covariates are predictive of the treatment effects in an appropriate sense. The bias for $\hat V_n^{\rm F}$ is \citep[see][]{fogarty2018mitigating}
\[ E[\hat V_n^{\rm F}] - \var[\hat \Delta_n] = \frac{1}{m^2} \delta_n' (\diag(I-H_{Q_2}))^{-1/2}(I-H_{Q_2})(\diag(I-H_{Q_2}))^{-1/2} \delta_n \geq 0~,  \]
so $\hat V_n^{\rm F}$ is also an upward-biased estimator for $\var[\hat \Delta_n]$.

\begin{assumption} \label{ass:X}
For $\tilde W \sim Q$ in Assumption \ref{ass:super}, $W^{(n)}$ and $\Lambda_n$ satisfy the following requirements:
\begin{enumerate}
\item[(a)] As $n \to \infty$,
\[
\frac{1}n\max_{1 \le i \le n} \lVert X_i \rVert^2 \to 0~.
\]

\item[(b)] As $n \to \infty$,
\begin{align*}
\frac{1}n \sum_{1 \le i \le n} X_i
&\to E_Q[\tilde X]~,\\
\frac{1}n \sum_{1 \le i \le n} X_iX_i'
&\to E_Q[\tilde X \tilde X']~,\\
\frac{1}n \sum_{1 \le i \le n} X_iY_i(d)
&\to E_Q[\tilde X \tilde Y(d)]
\text{ for } d \in \{0,1\}~.
\end{align*}
Furthermore, $\var_Q[\tilde X]$ is nonsingular.

\item[(c)] As $n \to \infty$,
\[
\frac{1}{m}\sum_{1 \le j \le m}
\frac{1}{k(k-1)}
\sum_{\substack{i,i'\in\lambda_j\\ i\ne i'}}
X_iX_{i'}'
\to
E_Q[\tilde X\tilde X']~,
\]
and, for each $d\in\{0,1\}$,
\[
\frac{1}{m}\sum_{1 \le j \le m}
\frac{1}{k(k-1)}
\sum_{\substack{i,i'\in\lambda_j\\ i\ne i'}}
X_iY_{i'}(d)
\to
E_Q[\tilde X\tilde Y(d)]~.
\]
\end{enumerate}
\end{assumption}
Assumption \ref{ass:X} is the analogue of Assumption \ref{ass:super} for settings with covariates. The assumptions on the cross products between the covariates and potential outcomes are needed because the estimator in \eqref{eq:fogarty} projects stratum-level treatment-effect estimates on stratum-level average baseline covariates. 
\begin{theorem}\label{thm:fogarty_var}
Suppose Assumptions \ref{ass:dn}, \ref{ass:super}, and \ref{ass:X} hold. Then,
\[
nE[\hat V_n^{\rm F}]
\to
V^{\rm obs}
+
k\var_Q\left[
E_Q[\tilde Y(1)-\tilde Y(0)\mid\tilde X]-\blp_Q(\tilde Y(1) - \tilde Y(0) \mid 1,\tilde X)
\right]~.
\]
\end{theorem}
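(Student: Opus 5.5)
The plan is to reduce the statement to the bias formula quoted from \cite{fogarty2018mitigating}, combined with Theorem \ref{thm:limit_V}, and then compute the limit of a normalized quadratic form in $\delta_n$.

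\textbf{Step 1 (bias identity).} Write $M_n = I - H_{Q_2}$, $D_n = \diag(M_n)$ and $A_n = D_n^{-1/2} M_n D_n^{-1/2}$. By the displayed bias identity,
\[
nE[\hat V^{\rm F}_n] = n\var[\hat\Delta_n] + k\,\tfrac1m\,\delta_n' A_n\delta_n .
\]
This identity holds because $A_n$ has unit diagonal and the same independence argument as in Theorem \ref{thm:main}(a) applies. By Theorem \ref{thm:limit_V}, $n\var[\hat\Delta_n]\to V$. Write $\tau=\tilde Y(1)-\tilde Y(0)$ and $\theta(\tilde X)=E_Q[\tau\mid\tilde X]$. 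Since $V^{\rm obs}=V+E_Q[\var_Q[\tau\mid \tilde X]]$, it suffices to show
\[
\tfrac1m\delta_n'A_n\delta_n\to \tfrac1k E_Q[\var_Q[\tau\mid\tilde X]] + \var_Q[\theta(\tilde X)-\blp_Q(\tau\mid 1,\tilde X)] .
\]

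\textbf{Step 2 (remove the diagonal rescaling).} This is the step I expect to be the main technical obstacle. I will show that the leverages satisfy $\max_j h_{jj}\to0$. Write $r_j=(1,(\bar X_{j,n}-\mu_{X,n})')'$. Then $h_{jj}=r_j'(Q_2'Q_2)^{-1}r_j$. Step 3 will give $Q_2'Q_2/m\to\diag(1,\var_Q[\tilde X])$, which is nonsingular by Assumption \ref{ass:X}(b). Hence
\[
h_{jj}\lesssim \big(1+\|\bar X_{j,n}-\mu_{X,n}\|^2\big)/m .
\]
By Jensen's inequality, $\max_j\|\bar X_{j,n}\|^2/m\le k\max_i\|X_i\|^2/n\to0$ by Assumption \ref{ass:X}(a), and $\mu_{X,n}$ is bounded. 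Therefore $D_n^{-1/2}=I+o(1)$ uniformly over the diagonal. Since the operator norm of $M_n$ is at most one, we get
\[
\big|\delta_n'(A_n-M_n)\delta_n\big|/m \le o(1)\cdot \tfrac1m\|\delta_n\|^2 .
\]
Here $\tfrac1m\|\delta_n\|^2=O(1)$ by Jensen's inequality and Assumption \ref{ass:super}(a). It then suffices to analyze
\[
\tfrac1m\delta_n'M_n\delta_n=\tfrac1m\sum_j\Delta_{j,n}^2-\tfrac1m\delta_n'H_{Q_2}\delta_n .
\]

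\textbf{Step 3 (limits).} For the first term, expand $\Delta_{j,n}^2$ into squares and within-stratum cross products of $\tau_i = Y_i(1)-Y_i(0)$. Assumption \ref{ass:super}(a),(b) then gives
\[
\tfrac1m\sum_j\Delta_{j,n}^2\to \tfrac1kE_Q[\tau^2]+\tfrac{k-1}{k}E_Q[\theta(\tilde X)^2]=E_Q[\theta^2]+\tfrac1kE_Q[\var_Q[\tau\mid\tilde X]] .
\]
For the second term, write
\[
\tfrac1m\delta_n'H_{Q_2}\delta_n=(Q_2'\delta_n/m)'(Q_2'Q_2/m)^{-1}(Q_2'\delta_n/m) .
\]
The same expansion into diagonal and off-diagonal within-stratum sums yields:
\begin{itemize}
\item $\tfrac1m\sum_j\bar X_{j,n}\bar X_{j,n}'\to E_Q[\tilde X\tilde X']$, which gives $Q_2'Q_2/m\to\diag(1,\var_Q[\tilde X])$ after centering;
\item $\tfrac1m\sum_j \bar X_{j,n}\Delta_{j,n}\to E_Q[\tilde X\tau]$, which gives $Q_2'\delta_n/m\to(E_Q\tau,\cov_Q(\tilde X,\tau)')'$.
\end{itemize}
Both limits use Assumption \ref{ass:X}(b) for the diagonal parts, Assumption \ref{ass:X}(c) for the off-diagonal parts, and $\tfrac1m\sum_j \bar X_{j,n}\to E_Q[\tilde X]$. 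Hence the quadratic form converges to
\[
(E_Q\tau)^2+\cov_Q(\tau,\tilde X)\var_Q[\tilde X]^{-1}\cov_Q(\tilde X,\tau)=E_Q[\blp_Q(\tau\mid1,\tilde X)^2] .
\]

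\textbf{Step 4 (conclude).} Because $\blp_Q(\tau\mid 1,\tilde X)=\blp_Q(\theta\mid1,\tilde X)$, the residual $\theta-\blp_Q$ has mean zero and is orthogonal to $\blp_Q$. Hence
\[
E_Q[\theta^2]-E_Q[\blp_Q^2]=\var_Q[\theta-\blp_Q(\tau\mid1,\tilde X)] .
\]
Multiplying by $k$ and combining with Steps 1--3 gives the stated limit.
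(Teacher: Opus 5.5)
Your proposal is correct and follows essentially the same route as the paper. Both arguments use the same four ingredients:
\begin{enumerate}
\item uniform leverage control via Assumption \ref{ass:X}(a) and $\Sigma_{X,n}\to\var_Q[\tilde X]$;
\item the unit-diagonal trace identity, which isolates $n\var[\hat\Delta_n]$;
\item an operator-norm argument that replaces the rescaled matrix by $I-H_{Q_2}$;
\item limits of the projection quadratic form obtained from Assumption \ref{ass:X}(b)--(c).
\end{enumerate}
The only difference is cosmetic. The paper rewrites the expression as $nE[\hat V_n^{\rm IM}]$ minus the covariate-projection term and then invokes the complete-graph limit (Theorem \ref{thm:main_var} and Proposition \ref{prop:complete_graph_product}). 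You instead pair the variance term with Theorem \ref{thm:limit_V} and compute $\frac{1}{m}\sum_j \Delta_{j,n}^2$ and the full $H_{Q_2}$ quadratic form directly.
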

The limit of the expectation of the variance in Theorem \ref{thm:fogarty_var} is weakly smaller than the complete-graph limit $V^{\rm obs}+k\var_Q[E[\tilde Y(1) - \tilde Y(0) | \tilde X]]$, because the linear projection on $(1,\tilde X)$ weakly reduces the variance. However, the limiting variance does not coincide with $V^{\rm obs}$ unless $E[\tilde Y(1) - \tilde Y(0) | \tilde X]$ is itself a linear function of $\tilde X$, $Q$-a.s.
\begin{proof}
We first show that $\Sigma_{X,n}\to\var_Q[\tilde X]$. By direct calculation and Assumption \ref{ass:X}(b) and (c),
\begin{align*}
\frac{1}{m}\sum_{1 \le j \le m}\bar X_{j,n}\bar X_{j,n}'
&=
\frac{1}{k}\frac{1}n\sum_{1 \le i \le n}X_iX_i'
+
\frac{k-1}{k}
\frac{1}{m}\sum_{1 \le j \le m}
\frac{1}{k(k-1)}
\sum_{\substack{i,i'\in\lambda_j\\ i\ne i'}}
X_iX_{i'}' \to E_Q[\tilde X\tilde X']~.
\end{align*}
Because $m^{-1}\sum_{1 \le j \le m}\bar X_{j,n}=\mu_{X,n}\to E_Q[\tilde X]$ by Assumption \ref{ass:X}(b), it follows that $\Sigma_{X,n}\to\var_Q[\tilde X]$. Next, we show that the diagonal entries of the projection matrix $H_{Q_2}$ converge to $0$. Let $\Sigma_{X, n}= \frac{1}{m} \sum_{1 \le j \le m}(\bar X_{j,n}-\mu_{X,n})(\bar X_{j,n}-\mu_{X,n})'$, and we have
\begingroup
\allowdisplaybreaks
\begin{align*}
H_{Q_2}&=Q_2(Q_2'Q_2)^{-1}Q_2' \\
&=
\begin{pmatrix}
     & (\bar X_{1, n} - \mu_{X, n})' \\
    \iota_m & \vdots \\
     & (\bar X_{m, n} - \mu_{X, n})'
\end{pmatrix}
\begin{pmatrix}
m & 0_p' \\
0_p & m \Sigma_{X, n}
\end{pmatrix}^{-1}
\begin{pmatrix}
 & \iota_m' & \\
\bar X_{1, n} - \mu_{X, n} & \cdots & \bar X_{m, n} - \mu_{X, n}
\end{pmatrix}
\\
&=
\frac{1}{m}\iota_m\iota_m'
+
\frac{1}{m} \begin{pmatrix}
(\bar X_{1, n} - \mu_{X, n})' \\
\vdots \\
(\bar X_{m, n} - \mu_{X, n})'
\end{pmatrix}
\Sigma_{X, n}^{-1}
\begin{pmatrix}
\bar X_{1, n} - \mu_{X, n} & \cdots & \bar X_{m, n} - \mu_{X, n}
\end{pmatrix}~.
\end{align*}
\endgroup
Therefore, denoting the operator norm by $\|\cdot\|_{\rm op}$, we have that for $1 \le j \le m$, the $j$-th diagonal entry of $H_{Q_2}$ may be uniformly bounded as
\begin{align*}
\max_{1 \leq j \leq m} [H_{Q_2}]_{jj}& = \max_{1 \leq j \leq m} \left ( \frac{1}{m}+\frac{1}{m}(\bar X_{j, n} - \mu_{X, n})' \Sigma_{X, n}^{-1}(\bar X_{j, n} - \mu_{X, n}) \right )
\\
&\le
\frac{1}{m} + \frac{1}{m}\left\lVert \Sigma_{X, n}^{-1} \right\rVert_{\rm op} \max_{1 \leq j \leq m} \lVert \bar X_{j, n} - \mu_{X, n} \rVert^2
\\
&\lesssim
\frac{1}{m}+ k \left\lVert \Sigma_{X, n}^{-1}\right\rVert_{\rm op} \frac{1}{n}\max_{1 \le i \le n} \lVert X_i \rVert^2 \to 0,
\end{align*}
where the first inequality follows from the definition of the operator norm, the second inequality follows from repeated application of the inequality that $(a - b)^2 \leq 2(a^2 + b^2)$, and the convergence follows from Assumption \ref{ass:X}(a) and that $\Sigma_{X, n} \to \var_Q[\tilde X]$.

Define
\[Q_{2, X} = \begin{pmatrix}
     (\bar X_{1, n} - \mu_{X, n})' \\
     \vdots \\
     (\bar X_{m, n} - \mu_{X, n})'
\end{pmatrix}~. \]
Let $H_{\iota_m}$ denote the projection matrix for $\iota_m$ and $H_{Q_{2, X}}$ denote the projection matrix for $Q_{2, X}$. Note that by construction $\iota_m' Q_{2, X} = 0$. It follows from Lemma \ref{lem:quadratic} that
\begin{align}
\nonumber n E[\hat{V}_n^{\rm F}] &= \frac{k}{m} E[\hat{\delta}_n'(\diag(I-H_{Q_2}))^{-1/2}(I-H_{Q_2})(\diag(I-H_{Q_2}))^{-1/2}\hat{\delta}_n]
\\
\label{eq:f1} & = \frac{k}{m} \delta_n' (\diag(I-H_{Q_2}))^{-1/2}(I-H_{Q_2})(\diag(I-H_{Q_2}))^{-1/2} \delta_n \\
\label{eq:f2} & \hspace{3em} + \frac{k}{m} \tr((I-H_{Q_2}) \var[(\diag(I-H_{Q_2}))^{-1/2} \hat \delta_n])~.
\end{align}
Define $\epsilon_n = (\diag(I-H_{Q_2}))^{-1/2} \delta_n - \delta_n$. Note \eqref{eq:f1} equals
\begin{equation} \label{eq:f1-expanded}
\frac{k}{m} \delta_n' (I - H_{Q_2}) \delta_n + \frac{k}{m} \epsilon_n' (I - H_{Q_2}) \delta_n + \frac{k}{m} \delta_n' (I - H_{Q_2}) \epsilon_n + \frac{k}{m} \epsilon_n' (I - H_{Q_2}) \epsilon_n~.    
\end{equation}
Note $\frac{1}{m} \|\delta_n\|^2$ converges to a constant by the derivations in the proof for the limit of $n E[\hat V_n^{\rm IM}]$. Because $\max_{1 \leq j \leq m} [H_{Q_2}]_{jj} \to 0$ as $n \to \infty$,
\[ \frac{1}{m} \|\epsilon_n\|^2 \leq \frac{1}{m} \|\delta_n\|^2 \max_{1 \leq j \leq m} \left ( \frac{1}{\sqrt{1 - [H_{Q_2}]_{jj}}} - 1 \right )^2 \to 0~. \]
Because $I - H_{Q_2}$ is a projection matrix, all of its eigenvalues are either 0 or 1, and hence its operator norm is 1. Therefore,
\[ \frac{k}{m} \epsilon_n' (I - H_{Q_2}) \delta_n \leq k \left ( \frac{1}{m}\|\epsilon_n\|^2 \right )^{1/2} \|I - H_{Q_2}\|_{\rm op} \left ( \frac{1}{m} \|\delta_n\|^2 \right )^{1/2} \to 0~. \]
Similarly, the last two terms in \eqref{eq:f1-expanded} also converge to 0 as $n \to \infty$. Recalling $\iota_m' Q_{2, X} = 0$, we get that \eqref{eq:f1} equals $o(1)$ plus
\begin{equation} \label{eq:muAmu}
\frac{k}{m} \delta_n' (I - H_{Q_2}) \delta_n = \frac{k}{m} \delta_n' (I-H_{\iota_m}-H_{Q_{2, X}}) \delta_n = \frac{k}{m} \delta_n' (I-H_{\iota_m}) \delta_n - \frac{k}{m} \delta_n' Q_{2, X} \Sigma_{X, n}^{-1} \frac{1}{m} Q_{2, X}' \delta_n~.
\end{equation}
Next, because $\hat \Delta_{j, n}$ are independent across $1 \leq j \leq m$, note \eqref{eq:f2} equals
\begin{align}
\nonumber & \frac{k}{m} \tr \left ( (I-H_{Q_2}) \diag \bigg ( \frac{\var[\hat \Delta_{j, n}]}{1 - [H_{Q_2}]_{jj}}: 1 \leq j \leq m \bigg ) \right ) \\
\nonumber & = \frac{k}{m} \sum_{1 \leq j \leq m} (1 - [H_{Q_2}]_{jj}) \frac{\var[\hat \Delta_{j, n}]}{1 - [H_{Q_2}]_{jj}} \\
\label{eq:Avar} & = \frac{k}{m} \sum_{1 \leq j \leq m} \var[\hat \Delta_{j, n}]~.
\end{align}
Moreover, this relationship holds with $Q_2$ replaced by $\iota_m$. Therefore, the \eqref{eq:muAmu}--\eqref{eq:Avar} imply
\begin{align*}
& \nonumber n E[\hat{V}_n^{\rm F}] \\
& = \frac{k}{m} \delta_n' (I-H_{\iota_m}) \delta_n - \frac{k}{m} \delta_n' Q_{2, X} \Sigma_{X, n}^{-1} \frac{1}{m} Q_{2, X}' \delta_n + \frac{k}{m} \tr((I - H_{\iota_m}) \var[(\diag(I - H_{\iota_m}))^{-1/2} \hat \delta_n]) + o(1) \\
& = n E[\hat V_n^{\rm IM}] - \frac{k}{m} \delta_n' Q_{2, X} \Sigma_{X, n}^{-1} \frac{1}{m} Q_{2, X}' \delta_n + o(1)
\end{align*}
For $d\in\{0,1\}$, it follows from Assumption \ref{ass:X}(b) and (c) that
\[ \frac{1}{m}\sum_{1 \le j \le m}\bar X_{j,n}\bar Y_{j,n}(d) = \frac{1}{k}\frac{1}n\sum_{1 \le i \le n}X_iY_i(d) + \frac{k-1}{k} \frac{1}{m}\sum_{1 \le j \le m} \frac{1}{k(k-1)} \sum_{\substack{i,i'\in\lambda_j\\ i\ne i'}} X_iY_{i'}(d) \to E_Q[\tilde X\tilde Y(d)]~. \]
Additionally applying Assumptions \ref{ass:super}(a) and \ref{ass:X}(b), we have
\[ \frac{1}{m} Q_{2, X}' \delta_n = \frac{1}{m} \sum_{1 \leq j \leq m} \bar X_{j, n} (\bar Y_{j, n}(1) - \bar Y_{j, n}(0)) - \mu_{X, n} \Delta_n \to \cov_Q[\tilde X,\tilde Y(1)-\tilde Y(0)]~. \]
Further note $\frac{m - 1}{m} \to 1$, $\Sigma_{X, n} \to \var_Q[\tilde X]$, and
\[ nE[\hat V_n^{\rm IM}] \to V^{\rm obs} + k \var_Q[E_Q[\tilde Y(1) - \tilde Y(0) | \tilde X]] \]
by Theorem \ref{thm:main_var} and Proposition \ref{prop:complete_graph_product}, so that
\begin{align*}
n E[\hat{V}_n^{\rm F}] &\to
V^{\rm obs}
+
k\var_Q[ E_Q[ \tilde{Y}(1)-\tilde{Y}(0)|\tilde{X} ] ]
-
k\cov_Q[\tilde X,\tilde Y(1)-\tilde Y(0)]'\var_Q[\tilde X]^{-1}\cov_Q[\tilde X,\tilde Y(1)-\tilde Y(0)]
\\
&=
V^{\rm obs}
+
k\var_Q[ E_Q[ \tilde{Y}(1)-\tilde{Y}(0)|\tilde{X} ] ]
-
k\var_Q[\text{BLP}_Q(\tilde Y(1)-\tilde Y(0)|1,\tilde X)]
\\
&=
V^{\rm obs}
+
k\var_Q[E_Q[ \tilde{Y}(1)-\tilde{Y}(0)|\tilde{X} ]-\text{BLP}_Q(\tilde Y(1)-\tilde Y(0)|1,\tilde X)]~,
\end{align*}
where the first equality follows from Lemma \ref{lem:blp}(b) and the second follows from Lemma \ref{lem:blp}(d). 
\end{proof}

\section{Auxiliary Lemmas}

Here we state and prove a standard central limit theorem for design-based inference \citep[see for instance][]{hajek1960limiting,li2017general}.

\begin{theorem}\label{thm:normal}
Suppose that Assumption \ref{ass:dn} holds and that
\begin{equation} \label{eq:highlevel}
\frac{\max_{1 \leq j \leq m} \max_{i \in \lambda_j} (R_i - \bar R_j)^2}{\sum_{1 \leq j \leq m} \sum_{i \in \lambda_j} (R_i - \bar R_j)^2} \rightarrow 0
\end{equation}
for $R_i$ and $\bar{R}_j$ in \eqref{eq:Ri} and \eqref{eq:Rbar} as $n \rightarrow \infty$. Then,
\[ \frac{\hat \Delta_n - \Delta_n}{\var[\hat \Delta_n]^{1/2}} \stackrel{d}{\rightarrow} N(0,1)~. \]
\end{theorem}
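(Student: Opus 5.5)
Within each stratum, I would write $\hat\Delta_n-\Delta_n$ as a sum of independent, mean-zero stratum contributions. The plan is then to verify a Lindeberg condition by bounding the maximal contribution relative to the total variance. Using $Y_i = Y_i(1)D_i+Y_i(0)(1-D_i)$ and $\sum_{i\in\lambda_j}D_i=\ell$, we have
\[
\hat\Delta_{j,n}-\Delta_{j,n}=\sum_{i\in\lambda_j}\Big(\frac{D_i}{\ell}-\frac1k\Big)Y_i(1)-\sum_{i\in\lambda_j}\Big(\frac{1-D_i}{k-\ell}-\frac1k\Big)Y_i(0).
\]
The second sum equals $\sum_{i}(\frac{D_i}{k-\ell}-\frac{\ell}{k(k-\ell)})Y_i(0)$, because $\frac{1-D_i}{k-\ell}-\frac1k=\frac{\ell}{k(k-\ell)}-\frac{D_i}{k-\ell}$. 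Combining the two and using $D_i-\eta$ with $\eta=\ell/k$, I get $\hat\Delta_{j,n}-\Delta_{j,n}=\frac{1}{k}\sum_{i\in\lambda_j}(D_i-\eta)\frac{1}{\eta(1-\eta)}\cdot\eta(1-\eta)\big(\frac{Y_i(1)}{\eta}+\frac{Y_i(0)}{1-\eta}\big)\cdot\frac{1}{\eta}$ up to constants. Up to a fixed constant $c_{k,\ell}$, this gives $\hat\Delta_{j,n}-\Delta_{j,n}=c_{k,\ell}\sum_{i\in\lambda_j}(D_i-\eta)R_i=c_{k,\ell}\sum_{i\in\lambda_j}(D_i-\eta)(R_i-\bar R_j)$. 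The last equality holds because $\sum_{i\in\lambda_j}(D_i-\eta)=0$. Hence $\hat\Delta_n-\Delta_n=\frac{c_{k,\ell}}{m}\sum_j T_j$ with $T_j:=\sum_{i\in\lambda_j}(D_i-\eta)(R_i-\bar R_j)$ independent across $j$ and mean zero.

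Next I would compute $\var[T_j]$ using the simple-random-sampling formula. Here $(D_i)_{i\in\lambda_j}$ is a draw of $\ell$ out of $k$ without replacement, so $\var[\sum_{i\in\lambda_j}D_ia_i]=\ell(1-\frac{\ell}{k})\frac{1}{k-1}\sum_{i}(a_i-\bar a)^2$. This gives $\var[T_j]=\frac{\ell(k-\ell)}{k(k-1)}\sum_{i\in\lambda_j}(R_i-\bar R_j)^2$. Summing over strata shows that $\var[\hat\Delta_n]$ is a fixed multiple of $s_n^2:=\sum_j\sum_{i\in\lambda_j}(R_i-\bar R_j)^2$. This is the proportionality fact already invoked in the proof of Theorem \ref{thm:upward} and in the proof of Theorem \ref{thm:limit_V}, and I would record the exact constant there.

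For the limit, I would apply the Lindeberg--Feller CLT to the triangular array $T_j/\sqrt{\sum_j\var[T_j]}$. The key bound is deterministic. By Cauchy--Schwarz, $|T_j|\le \sqrt{k}\,(\sum_{i\in\lambda_j}(R_i-\bar R_j)^2)^{1/2}\le k\max_{i\in\lambda_j}|R_i-\bar R_j|$. Therefore $\max_j T_j^2/\sum_j\var[T_j]\lesssim \max_j\max_{i\in\lambda_j}(R_i-\bar R_j)^2/s_n^2\to0$ by \eqref{eq:highlevel}. Since every summand is uniformly bounded by $\epsilon$ times the standard deviation for large $n$, each Lindeberg indicator eventually vanishes identically, and the Lindeberg condition holds trivially. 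The conclusion then follows.

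The main obstacle is only bookkeeping. I need the algebraic reduction of $\hat\Delta_{j,n}-\Delta_{j,n}$ to a single linear statistic in $D_i$ with coefficients proportional to $R_i$, and the correct constant linking $\var[\hat\Delta_n]$ to $s_n^2$. One small point to handle is the degenerate case $s_n^2=0$, which \eqref{eq:highlevel} implicitly excludes because its denominator must be nonzero.
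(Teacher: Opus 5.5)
Your proof is correct and follows essentially the same route as the paper. You use the same representation $\hat\Delta_n-\Delta_n=\frac{1}{n}\sum_{j}\sum_{i\in\lambda_j}(R_i-\bar R_j)D_i$ as a sum of independent stratum terms, the same finite-population sampling-variance formula, and the same deterministic bound on each summand, which makes the Lindeberg indicators vanish identically. One fix: your intermediate display carries a stray factor $1/\eta$. The clean identity is $\hat\Delta_{j,n}-\Delta_{j,n}=\frac{1}{k}\sum_{i\in\lambda_j}(D_i-\eta)R_i$, so $c_{k,\ell}=1/k$, although any nonzero constant cancels in the normalized statistic anyway.
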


\begin{proof}
Re-writing $\hat{\Delta}_n$ and $\Delta_n$ over the blocks $\{\lambda_j: 1 \le j \le m\}$, we obtain
\[\hat{\Delta}_n = \frac{1}{n}\left(\sum_{1 \leq j \leq m}\sum_{i \in \lambda_j}\left(\frac{Y_i(1)}{\eta} + \frac{Y_i(0)}{1 - \eta}\right)D_i - \sum_{1 \leq j \leq m}\sum_{i \in \lambda_j}\frac{Y_i(0)}{1 - \eta}\right)~,\]
and 
\[\Delta_n = \frac{1}{n}\left(\sum_{1 \leq j \leq m}\sum_{i \in \lambda_j}\left(\bar{R}_j(1) + \bar{R}_j(0)\right)D_i - \sum_{1 \leq j \leq m}\sum_{i \in \lambda_j}\frac{Y_i(0)}{1 - \eta}\right)~,\]
where $\bar{R}_j(1) = \frac{1}{k}\sum_{i \in \lambda_j}\frac{Y_i(1)}{\eta}$ and $\bar{R}_j(0) = \frac{1}{k}\sum_{i \in \lambda_j}\frac{Y_i(0)}{1-\eta}$. Putting both together, using the definitions in \eqref{eq:Ri} and \eqref{eq:Rbar}, and noting $\bar{R}_j = \bar{R}_j(1) + \bar{R}_j(0)$, we have
\[\hat{\Delta}_n - \Delta_n = \frac{1}{n}\sum_{1 \leq j \leq m}\sum_{i \in \lambda_j}(R_i - \bar{R}_j)D_i~.\]
Hence we obtain that 
\[\sqrt{m}(\hat{\Delta}_n - \Delta_n) = \sum_{1 \leq j \leq m}A_j~,\]
where $A_j = \frac{1}{\sqrt{m}}\frac{\eta}{\ell}\sum_{i \in \lambda_j}(R_i - \bar{R}_j)D_i$. Note that $\frac{1}{\ell}\sum_{i \in \lambda_j}(R_i - \bar{R}_j)D_i$ is the sample mean when sampling a subset of size $\ell$ without replacement from the finite population $\{(R_i - \bar{R}_j): i \in \lambda_j\}$. Accordingly, from the properties of the sample mean when sampling from a finite population \citep[Theorems 2.1 and 2.2 in][]{cochran1977sampling}, we obtain immediately that 
\begin{align*}
E[A_j] & = 0 \\
\var[A_j] & = \frac{1}{m}\eta^2\left(\frac{1}{\ell} - \frac{1}{k}\right)\left(\frac{1}{k-1}\right)\sum_{i \in \lambda_j}(R_i - \bar{R}_j)^2~.
\end{align*}
Moreover, by the definition of $A_j$ and Jensen's inequality, 
\[A_j^2 \lesssim \frac{1}{m}\max_{i \in \lambda_j}(R_i - \bar{R}_j)^2~.\]
To establish asymptotic normality, we verify the Lindeberg condition:
\begin{equation} \label{eq:lindeberg}
\sum_{1 \leq j \leq m} E \bigg [ \frac{A_j^2}{s_n^2} I \bigg\{\frac{A_j^2}{s_n^2} > \epsilon^2\bigg\} \bigg ] \to 0~,    
\end{equation}
where $s_n^2 = \sum_{1 \leq j \leq m}\var[A_j]$. Note that by our previous calculations 
\[ s_n^2 \propto \frac{1}{m}\sum_{1 \leq j \leq m} \sum_{i \in \lambda_j} (R_i - \bar R_j)^2~,\]
and hence 
\[\frac{A_j^2}{s_n^2} \lesssim \frac{\max_{i \in \lambda_j}(R_i - \bar{R}_j)^2}{\sum_{1 \leq j \leq m}\sum_{i \in \lambda_j}(R_i - \bar{R}_j)^2}~.\]
Therefore, if \eqref{eq:highlevel} holds, then for each fixed $\epsilon > 0$, for $m$ large enough,
\[ \max_{1 \leq j \leq m} \frac{A_j^2}{s_n^2} \leq \epsilon^2~, \]
so
\[ \sum_{1 \leq j \leq m} \frac{A_j^2}{s_n^2} I \bigg\{\frac{A_j^2}{s_n^2} > \epsilon^2\bigg\} \leq \sum_{1 \leq j \leq m} \frac{A_j^2}{s_n^2} I \bigg\{\max_{1 \leq j \leq m} \frac{A_j^2}{s_n^2} > \epsilon^2\bigg\} = 0~. \]
As a result, the Lindeberg condition in \eqref{eq:lindeberg} holds, and the conclusion follows.
\end{proof}

\begin{lemma} \label{lem:ratio}
Let $X_n, n \geq 1$ be a sequence of random variables such that $E[X_n] = 0$, $\var[X_n] = \sigma_n^2$ and $X_n / \sigma_n \stackrel{d}{\to} N(0, 1)$. Further suppose $\liminf_{n \to \infty} \sigma_n^2 > 0$, $\limsup_{n \to \infty} \sigma_n^2 < \infty$, and there exists a sequence of random variables $\hat \sigma_n^2 \geq 0, n \geq 1$ such that $\hat \sigma_n^2 - E[\hat \sigma_n^2] \stackrel{P}{\to} 0$. Then,
\begin{align*}
\liminf_{n \to \infty} P \left \{ \frac{X_n}{\hat \sigma_n} \leq x \right \} \geq  \Phi(x) & \text{ for } x > 0 \\
\limsup_{n \to \infty} P \left \{ \frac{X_n}{\hat \sigma_n} \leq x \right \} \leq  \Phi(x) & \text{ for } x < 0~,
\end{align*}
where $\Phi(x) = P \{N(0, 1) \leq x\}$, if and only if $\liminf_{n \to \infty} (E[\hat \sigma_n^2] - \sigma_n^2) \geq 0$.
\end{lemma}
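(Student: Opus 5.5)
The plan is a subsequence argument combined with Slutsky's theorem. Write $\tau_n^2 := E[\hat\sigma_n^2] \geq 0$. Each claim concerns a $\liminf$ or $\limsup$ of probabilities, so it suffices to work along subsequences. By the hypotheses on $\sigma_n^2$, every subsequence has a further subsequence along which $\sigma_n^2 \to \sigma^2 \in (0,\infty)$ and $\tau_n^2 \to \tau^2 \in [0,\infty]$. Along such a sub-subsequence I will compute the limit of $P\{X_n/\hat\sigma_n \leq x\}$ in each of three regimes and then compare it with $\Phi(x)$. Since $X_n/\sigma_n \stackrel{d}{\to} N(0,1)$ and $\sigma_n \to \sigma$, Slutsky gives $X_n \stackrel{d}{\to} N(0,\sigma^2)$. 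When $\tau < \infty$, the hypothesis $\hat\sigma_n^2 - \tau_n^2 \stackrel{P}{\to} 0$ gives $\hat\sigma_n^2 \stackrel{P}{\to} \tau^2$.

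\emph{Regime $0<\tau<\infty$.} The continuous mapping theorem and Slutsky give $X_n/\hat\sigma_n \stackrel{d}{\to} N(0,\sigma^2/\tau^2)$. The limit law is continuous, so
\[
P\{X_n/\hat\sigma_n \leq x\} \to \Phi(x\tau/\sigma)\quad\text{for every } x.
\]

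\emph{Regime $\tau=\infty$.} Here $\hat\sigma_n \stackrel{P}{\to} \infty$ while $X_n = O_P(1)$, so $X_n/\hat\sigma_n \stackrel{P}{\to} 0$. Hence the probability tends to $1$ for $x>0$ and to $0$ for $x<0$.

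\emph{Regime $\tau=0$.} Rewrite the event as $\{X_n \leq x\hat\sigma_n\}$. Since $x\hat\sigma_n \stackrel{P}{\to} 0$ and $X_n \stackrel{d}{\to} N(0,\sigma^2)$, Slutsky applied to $X_n - x\hat\sigma_n$ gives $P\{X_n \leq x\hat\sigma_n\} \to 1/2$ for every fixed $x$. To avoid dividing by zero I interpret the event $\{X_n/\hat\sigma_n \leq x\}$ as $\{X_n \leq x\hat\sigma_n\}$ throughout; this is the natural convention for the coverage event.

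\emph{Sufficiency.} Assume $\liminf(\tau_n^2-\sigma_n^2)\geq 0$. Then every limit point satisfies $\tau^2 \geq \sigma^2 > 0$, so only the first two regimes can occur. For $x>0$ the first regime gives $\Phi(x\tau/\sigma)\geq \Phi(x)$ and the second gives $1 \geq \Phi(x)$. For $x<0$ the inequalities reverse: $\Phi(x\tau/\sigma)\leq\Phi(x)$ and $0\leq\Phi(x)$. Every subsequence therefore has a further subsequence along which the limit lies on the correct side of $\Phi(x)$. This yields both the $\liminf$ and the $\limsup$ statements.

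\emph{Necessity.} Assume $\liminf(\tau_n^2-\sigma_n^2)<0$. Choose a subsequence with $\tau_n^2-\sigma_n^2 \to -c < 0$, and refine it so that $\sigma^2$ and $\tau^2$ exist, with $\tau^2 \leq \sigma^2 - c < \sigma^2$. Fix any $x>0$. If $\tau>0$, the limit is $\Phi(x\tau/\sigma) < \Phi(x)$. If $\tau=0$, the limit is $1/2 < \Phi(x)$. Either way $\liminf P\{X_n/\hat\sigma_n\leq x\} < \Phi(x)$, contradicting the first displayed inequality of the lemma.

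I expect the main obstacle to be the degenerate cases $\tau^2\in\{0,\infty\}$. These require the convention $\{X_n \leq x\hat\sigma_n\}$ at $\hat\sigma_n=0$, together with the observation that the limit then equals $1/2$ rather than $\Phi(x)$. The case $\tau_n$ unbounded matters because the hypotheses do not bound $E[\hat\sigma_n^2]$. The remaining steps are routine applications of Slutsky and the continuity of the normal cdf.
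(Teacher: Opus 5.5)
Your argument is correct, but it follows a different route from the paper. You apply compactness to the deterministic sequences: you pass to sub-subsequences along which $\sigma_n^2\to\sigma^2\in(0,\infty)$ and $E[\hat\sigma_n^2]\to\tau^2\in[0,\infty]$. Slutsky's theorem then gives the exact limit of $P\{X_n\le x\hat\sigma_n\}$ in each regime: $\Phi(x\tau/\sigma)$, $1$ or $0$ according to the sign of $x$, or $1/2$. Both directions then reduce to comparing $\tau$ with $\sigma$.

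The paper never identifies a limit. It writes $X_n/\hat\sigma_n=(X_n/\sigma_n)(\sigma_n/\hat\sigma_n)$ and shows $P\{\sigma_n/\hat\sigma_n>1+\epsilon\}\to0$ under the bias condition. When the condition fails, it shows $P\{\sigma_n/\hat\sigma_n\ge1+\epsilon\}\to1$ along a suitable subsequence, treating $\liminf E[\hat\sigma_n^2]=0$ separately. Both facts come from subsequences along which $\hat\sigma_n^2-E[\hat\sigma_n^2]\to0$ almost surely. It then concludes through event inclusions with a multiplicative $\epsilon$-slack and continuity of $\Phi$.

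The paper's route needs only one-sided control of $\sigma_n/\hat\sigma_n$ and no convergence of $\sigma_n$. Yours handles the three regimes uniformly and makes explicit the convention at $\hat\sigma_n=0$, which the paper leaves implicit. It also yields the subsequential limiting coverage itself, mirroring the form $2\Phi(z_{1-\alpha/2}/\varsigma_\omega)-1$ in Theorem~\ref{thm:main_var}.

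Your route also avoids the constant bookkeeping in the paper's necessity step. There, the displayed choice $\epsilon=(M/(2M-c))^{1/2}-1$ is actually negative, since along that subsequence $M\ge\sigma_{n(k)}^2>c$. It should be replaced by some $\epsilon>0$ with $(1+\epsilon)^2<2M/(2M-c)$.
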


\begin{proof}
We first prove the assertion for $x \geq 0$. The assertion for $x<0$ is obtained by applying the $x>0$ result to $-X_n$. Because $\liminf_{n \to \infty} \sigma_n^2 > 0$, $\sigma_n^2 \ge c > 0$ for some $c > 0$, at least for $n$ large enough. Since further $\liminf_{n \to \infty} (E[\hat \sigma_n^2] - \sigma_n^2) \geq 0$, $E[\hat \sigma_n^2] \geq \frac{3}{4} c$ for $n$ large enough. Because further $\hat \sigma_n^2 - E[\hat \sigma_n^2] \stackrel{P}{\to} 0$, we have
\[ P \{\hat \sigma_n^2 > c / 2\} \to 1~. \]
Therefore, in what follows, all statements should be understood as conditioning on this event. In that case,
\begin{equation} \label{eq:x/v}
\frac{X_n}{\hat \sigma_n} = \frac{X_n}{\sigma_n} \frac{\sigma_n}{\hat \sigma_n}~.    
\end{equation}
Next, we claim that for each $\epsilon > 0$,
\begin{equation} \label{eq:v/v}
P \left \{ \frac{\sigma_n}{\hat \sigma_n} > 1 + \epsilon \right \} \to 0~.    
\end{equation}
Suppose for a moment that \eqref{eq:v/v} holds. Then, because $X_n / \sigma_n \stackrel{d}{\to} N(0, 1)$, $\sigma_n > 0$, and $\hat \sigma_n > 0$, \eqref{eq:x/v} implies that for each $x \geq 0$ and $\epsilon > 0$,
\[ P \left \{ \frac{X_n}{\hat \sigma_n} > x (1 + \epsilon) \right \} \leq P \left \{ \frac{X_n}{\sigma_n} > x \right \} + P \left \{ \frac{\sigma_n}{\hat \sigma_n} > 1 + \epsilon \right \} \to 1 - \Phi(x) \]
as $n \to \infty$. In other words, for $x > 0$ and $\epsilon > 0$,
\[ \limsup_{n \to \infty} P \left \{ \frac{X_n}{\hat \sigma_n} > x \right \} \leq 1 - \Phi(x / (1 + \epsilon))~. \]
The conclusion of the theorem follows by letting $\epsilon \to 0$ and noting $\Phi$ is continuous everywhere. It therefore suffices to prove \eqref{eq:v/v}. Because \eqref{eq:v/v} holds if and only if
\begin{equation} \label{eq:indicatorP}
I \left \{ \frac{\sigma_n}{\hat \sigma_n} > 1 + \epsilon \right \} \stackrel{P}{\to} 0~,    
\end{equation}
it suffices to prove that for each subsequence $\{n(k)\}$, there exists a further subsequence $\{n(k(\ell))\}$ along which the convergence in \eqref{eq:indicatorP} holds with probability one. Fix an arbitrary subsequence $\{n(k)\}$. Because $\liminf_{n \to \infty} \sigma_n^2 > 0$, $\liminf_{n \to \infty} (E[\hat \sigma_n^2] - \sigma_n^2) \geq 0$, and $\hat \sigma_n^2 - E[\hat \sigma_n^2] \stackrel{P}{\to} 0$, there exists a constant $c > 0$ and a further subsequence $\{n(k(\ell))\}$ along which $\sigma_{n(k(\ell))}^2 \geq c > 0$, $E[\hat \sigma_{n(k(\ell))}^2] \geq c / 2 > 0$, and $\hat \sigma_{n(k(\ell))}^2 - E[\hat \sigma_{n(k(\ell))}^2] \to 0$ with probability one. Furthermore,
\[ \limsup_{\ell \to \infty} \frac{\sigma_{n(k(\ell))}^2}{E[\hat \sigma_{n(k(\ell))}^2]} \leq 1~. \]
Along this subsequence,
\[ \limsup_{\ell \to \infty} \frac{\sigma_{n(k(\ell))}^2}{\hat \sigma_{n(k(\ell))}^2} = \limsup_{\ell \to \infty} \frac{\sigma_{n(k(\ell))}^2}{E[\hat \sigma_{n(k(\ell))}^2]} \frac{E[\hat \sigma_{n(k(\ell))}^2]}{\hat \sigma_{n(k(\ell))}^2} \leq 1~, \]
so \eqref{eq:indicatorP} holds with probability one, and hence \eqref{eq:v/v} follows.

Next, we show that $\liminf_{n \rightarrow \infty} (E[\hat{\sigma}_n^2] - \sigma_n^2) \ge 0$ is necessary for the result to hold. We present the proof for $x > 0$; the case $x<0$ is obtained by applying the displayed inequalities to $-X_n$. To that end, first suppose $\liminf_{n \rightarrow \infty} (E[\hat{\sigma}_n^2] - \sigma_n^2) = - c_1 < 0$ and $\liminf_{n \to \infty} E[\hat \sigma_n^2] = c_2 > 0$. Define $c = c_1 \wedge c_2$. Then, there exists a subsequence $\{n(k)\}$ along which $E[\hat \sigma_{n(k)}^2] < \sigma_{n(k)}^2 - c / 2$ and $E[\hat \sigma_{n(k)}^2] \ge c/2$. Because by assumption $\limsup_{n \to \infty} \sigma_n^2 < \infty$, there exists $M > 0$ such that $\sigma_{n(k)}^2 \leq M$ for all $k$. Define $\epsilon = \left(\frac{M}{2M - c }\right)^{1/2} - 1 > 0$. Along this subsequence,
\[
\frac{E[\hat \sigma_{n(k)}^2]}{\sigma_{n(k)}^2}
<
1-\frac{c}{2\sigma_{n(k)}^2}
\le
1-\frac{c}{2M}
=
\frac{1}{2(1+\epsilon)^2} < \frac{1}{(1+\epsilon)^2} ~.
\]
Since $\hat \sigma_{n(k)}^2-E[\hat \sigma_{n(k)}^2]\xrightarrow{P}0$ and $E[\hat \sigma_{n(k)}^2]\ge c/2$, it follows that
\begin{equation} \label{eq:indicator2}
P \left \{ \frac{\sigma_{n(k)}}{\hat \sigma_{n(k)}} \geq 1 + \epsilon \right \} \to 1~.
\end{equation}
If instead $\liminf_{n \to \infty} E[\hat \sigma_n^2] = 0$, then by passing to a subsequence along which $E[\hat \sigma_{n(k)}^2] \to 0$, we can construct a further subsequence along which
\[\frac{\sigma_{n(k(\ell))}}{\hat \sigma_{n(k(\ell))}}\rightarrow \infty~,\]
with probability one, and thus \eqref{eq:indicator2} still holds. Let
\[
A_k=\left\{\frac{X_{n(k)}}{\sigma_{n(k)}}>\frac{x}{1+\epsilon}\right\}~,
\qquad
B_k=\left\{\frac{\sigma_{n(k)}}{\hat\sigma_{n(k)}}\ge 1+\epsilon\right\}~.
\]
Then
\begin{align*}
P \left \{ \frac{X_{n(k)}}{\hat \sigma_{n(k)}} > x \right \}
&\ge P(A_k\cap B_k)\\
&=P(A_k)-P(A_k\cap B_k^c)\\
&\to 1-\Phi(x/(1+\epsilon))>1-\Phi(x)~.
\end{align*}
Therefore,
\[ \liminf_{n \to \infty} P \left \{ \frac{X_n}{\hat \sigma_n} \leq x \right \} = 1 - \limsup _{n \to \infty} P \left \{ \frac{X_n}{\hat \sigma_n} > x \right \} \leq \Phi(x / (1 + \epsilon)) < \Phi(x)~, \]
and the desired result follows.
\end{proof}

\begin{lemma} \label{lem:quadratic}
Suppose $E[\hat \delta_n] = \delta_n$ and $A$ is deterministic. Then, $E[\hat \delta_n' A \hat \delta_n] = \delta_n' A \delta_n + \tr(A \var[\hat \delta_n])$.   
\end{lemma}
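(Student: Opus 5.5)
The identity in Lemma \ref{lem:quadratic} is the standard expectation of a quadratic form, and it follows by centering $\hat\delta_n$ around its mean. Write $e_n := \hat\delta_n - \delta_n$, so that $E[e_n] = 0$ and $\var[\hat\delta_n] = E[e_n e_n']$. Substituting $\hat\delta_n = \delta_n + e_n$ and expanding gives
\[
\hat\delta_n' A \hat\delta_n = \delta_n' A \delta_n + \delta_n' A e_n + e_n' A \delta_n + e_n' A e_n~.
\]

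Next take expectations term by term. Since $A$ and $\delta_n$ are deterministic, linearity gives $E[\delta_n' A e_n] = \delta_n' A E[e_n] = 0$. The same argument gives $E[e_n' A \delta_n] = 0$. No symmetry of $A$ is needed for either step.

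For the remaining term, note that $e_n' A e_n$ is a scalar. It therefore equals its own trace, and the cyclic property gives $e_n' A e_n = \tr(A e_n e_n')$. Trace and expectation are both linear and $A$ is nonrandom, so
\[
E[e_n' A e_n] = \tr(A\, E[e_n e_n']) = \tr(A \var[\hat\delta_n])~.
\]
Combining the three pieces gives the claim.

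The only point needing care is integrability: $E[\hat\delta_n' A \hat\delta_n]$ should be finite so that term-by-term expectation is justified. In the design-based setting this is automatic, because $\hat\delta_n$ takes finitely many values under Assumption \ref{ass:dn}. More generally, finite second moments of $\hat\delta_n$ suffice, and these are already implicit in writing $\var[\hat\delta_n]$. I do not expect any real obstacle.
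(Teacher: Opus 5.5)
Your proof is correct and follows essentially the same route as the paper: you center $\hat\delta_n$ at $\delta_n$, kill the cross terms via $E[\hat\delta_n-\delta_n]=0$, and write the remaining scalar quadratic form as $\tr(A(\hat\delta_n-\delta_n)(\hat\delta_n-\delta_n)')$ before exchanging trace and expectation. Your remark on integrability (automatic here, since $\hat\delta_n$ takes finitely many values under Assumption \ref{ass:dn}) is a harmless refinement that the paper leaves implicit.
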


\begin{proof}
Note
\begin{align*}
E[\hat \delta_n' A \hat \delta_n] & = E[(\delta_n + \hat \delta_n - \delta_n)' A ( \delta_n + \hat \delta_n - \delta_n)] \\
& = \delta_n' A \delta_n + E[(\hat \delta_n - \delta_n)' A (\hat \delta_n - \delta_n)] \\
& = \delta_n' A \delta_n + \tr( E[(\hat \delta_n - \delta_n)' A (\hat \delta_n - \delta_n)]) \\
& = \delta_n' A \delta_n + E[\tr((\hat \delta_n - \delta_n)' A (\hat \delta_n - \delta_n))] \\
& = \delta_n' A \delta_n + E[\tr(A (\hat \delta_n - \delta_n) (\hat \delta_n - \delta_n)')] \\
& = \delta_n' A \delta_n + \tr(E[A (\hat \delta_n - \delta_n) (\hat \delta_n - \delta_n)']) \\
& = \delta_n' A \delta_n + \tr(A \var[\hat \delta_n])~,
\end{align*}
where the second equality uses $E[\hat \delta_n - \delta_n] = 0$, the third uses that $E[(\hat \delta_n - \delta_n)' A (\hat \delta_n - \delta_n)]$ is a scalar, the fourth and sixth use linearity of expectation, and the fifth uses $\tr(AB) = \tr(BA)$ for conformable matrices.
\end{proof}

\begin{lemma}\label{lem:mean_cost_to_graph_locality}
For any $\omega_m\in\Omega_m$, $C_n(\omega_m)$ in \eqref{eq:ordered_mean_cost}, and $\mathcal D_n(\omega_m)$ in \eqref{eq:graph_edge_diameter},
\[
\mathcal D_n(\omega_m)
\le
12(2R_n+C_n(\omega_m))~.
\]
Therefore, if $R_n\to0$ and $C_n(\omega_m)\to0$, then $\mathcal D_n(\omega_m)\to0$.
\end{lemma}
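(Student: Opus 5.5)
The bound follows from a pointwise estimate for each edge $(a,b)$, which is then averaged against the weights $\omega_{ab,m}$ using degree calibration. Fix $a \ne b$ and $i,i' \in \lambda_a \cup \lambda_b$. If both lie in the same stratum, say $\lambda_a$, then
\[
\|X_i - X_{i'}\|^2 \le 2\|X_i - \bar X_{a,n}\|^2 + 2\|X_{i'} - \bar X_{a,n}\|^2 \le 4 r_a~,
\]
where $r_j := \max_{i \in \lambda_j}\|X_i - \bar X_{j,n}\|^2$. If $i \in \lambda_a$ and $i' \in \lambda_b$, we write
\[
X_i - X_{i'} = (X_i - \bar X_{a,n}) + (\bar X_{a,n} - \bar X_{b,n}) + (\bar X_{b,n} - X_{i'})
\]
and apply the elementary inequality $\|u+v+w\|^2 \le 3(\|u\|^2 + \|v\|^2 + \|w\|^2)$. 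This gives $\|X_i - X_{i'}\|^2 \le 3(r_a + r_b + \|\bar X_{a,n} - \bar X_{b,n}\|^2)$. In every case, therefore,
\[
\max_{i,i' \in \lambda_a \cup \lambda_b}\|X_i - X_{i'}\|^2 \le 4(r_a + r_b) + 3\|\bar X_{a,n} - \bar X_{b,n}\|^2~.
\]
The constant is crude, and any constant no larger than $12$ suffices.

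Next, multiply this bound by $\omega_{ab,m} \ge 0$, sum over $a$ and $b \ne a$, and divide by $m$. The cross-stratum term is exactly $3\,C_n(\omega_m)$ by the definition in \eqref{eq:ordered_mean_cost}. For the within-stratum terms, degree calibration ($\omega_m \in \Omega_m$, so $\sum_{b \ne a}\omega_{ab,m} = 1$) together with symmetry of the weights gives
\[
\frac{1}{m}\sum_{a}\sum_{b \ne a}\omega_{ab,m} r_a = \frac{1}{m}\sum_{b}\sum_{a \ne b}\omega_{ab,m} r_b = R_n~.
\]
Hence $\mathcal D_n(\omega_m) \le 8R_n + 3C_n(\omega_m) \le 12(2R_n + C_n(\omega_m))$. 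The final statement then follows immediately, since both $R_n$ and $C_n(\omega_m)$ tend to zero by hypothesis.

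There is no real obstacle in this argument. The only step requiring care is the reduction of the $r_b$ sum to $R_n$, because this is where both degree calibration and the symmetry $\omega_{ab,m} = \omega_{ba,m}$ from \eqref{eq:graph_weights} are needed. The remaining work is bookkeeping of constants to confirm they fall under $12$.
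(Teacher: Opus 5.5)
Your proof is correct and follows essentially the same route as the paper. Both arguments bound the squared diameter of $\lambda_a\cup\lambda_b$ edge by edge through the stratum means, then average against $\omega_{ab,m}$, using degree calibration and symmetry to collapse the radius terms to $R_n$. The only difference is the per-edge bound: your case split gives the sharper $8R_n+3C_n(\omega_m)$, whereas the paper applies the single bound $\|X_i-X_{i'}\|\le 2\sqrt{r_a}+\|\bar X_{a,n}-\bar X_{b,n}\|+2\sqrt{r_b}$ (in your notation) to all pairs and so reaches $12(2R_n+C_n(\omega_m))$ directly.
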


\begin{proof}
Define $r_{j,n}^2 = \max_{i\in\lambda_j}\|X_i-\bar X_{j,n}\|^2$ for $1 \leq j \leq m$. Fix two strata $1 \leq a \neq b \leq m$. For any $i,i'\in\lambda_a\cup\lambda_b$, the triangle inequality implies
\[
\|X_i-X_{i'}\|
\le
2r_{a,n}+\|\bar X_{a,n}-\bar X_{b,n}\|+2r_{b,n}~.
\]
Hence, using $(x+y+z)^2\le3(x^2+y^2+z^2)$ for nonnegative $x,y,z$, we get
\[
\max_{i,i'\in\lambda_a\cup\lambda_b}\|X_i-X_{i'}\|^2
\le
12(r_{a,n}^2+\|\bar X_{a,n}-\bar X_{b,n}\|^2+r_{b,n}^2)~.
\]
By definition of $\mathcal D_n(\omega_m)$ and $C_n(\omega_m)$,
\begin{align*}
\mathcal D_n(\omega_m)
&\le
\frac{12}{m}\sum_{1 \le a \le m}\sum_{b\ne a}\omega_{ab,m}
(r_{a,n}^2+\|\bar X_{a,n}-\bar X_{b,n}\|^2+r_{b,n}^2)  \\
&=
12\bigg(
\frac{1}{m}\sum_{1 \le a \le m} r_{a,n}^2\sum_{b\ne a}\omega_{ab,m}
+C_n(\omega_m)
+\frac{1}{m}\sum_{1 \le b \le m} r_{b,n}^2\sum_{a\ne b}\omega_{ab,m}
\bigg)~.
\end{align*}
Both sums of the weights are one by degree calibration. The conclusion now follows.
\end{proof}

\begin{lemma}\label{lem:spatial_sorting_consequence}
For $A_n(\cdot)$ in \eqref{eq:A_n}, there is a constant $K_{p_X}<\infty$ depending only on $p_X$ such that
\begin{enumerate}
\item[(a)] If $m$ is even, there exists a partition $\tilde{\mathcal P}_m$ of $\{1, \dots, m\}$ into pairs of strata such that
\[
A_n(\tilde{\mathcal P}_m)
\le
K_{p_X}M_{n}^2m^{-2/(p_X+1)}~.
\]
\item[(b)] For any sequence $b_m\to\infty$ such that $b_m=o(m)$, there exists a partition $\mathcal B_m$ of $\{1, \dots, m\}$ with
\[
b_m\le |B|\le 2b_m
\text{ for all }B\in\mathcal B_m
\]
for $m$ large enough, such that
\[
A_n(\mathcal B_m)
\le
K_{p_X}M_{n}^2\left(\frac{m}{b_m}\right)^{-2/(p_X+1)}~.
\]
\end{enumerate}
\end{lemma}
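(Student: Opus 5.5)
The plan is to build both partitions by recursive coordinate bisection (a k-d tree style median split) of the stratum means $\bar X_{j,n}$, and then to bound the within-cell spread through a sum-of-squared-diameters estimate. We rescale so the points lie in a box $[0,M_n]^{p_X}$ after translation. Since $A_n(\mathcal B)$ is bounded by $\frac1m\sum_{B}|B|\operatorname{diam}(B)^2$, it suffices to produce cells with controlled sums of squared diameters, weighted by cell size.

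Rather than using equal-count median splits (which can give long thin cells), we work along a space-filling order. First we order the $m$ points along a Hilbert-type curve, or more simply via a $p_X$-dimensional grid traversal. Then we cut the ordered sequence into consecutive runs of the required size: pairs in (a), and runs of size between $b_m$ and $2b_m$ in (b), which is possible once $m\ge b_m$ by absorbing the remainder into the last run. The key geometric fact we would invoke is the classical space-filling-curve / TSP-type bound. For $N$ points ordered along the curve in a box of side $M$, and consecutive groups $G_1,\dots,G_S$,
\[
\sum_{s}\operatorname{diam}(G_s)^2\lesssim M^2 S^{1-2/p}
\]
(a Hölder-type consequence of the curve being $1/p$-Hölder). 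We apply this in dimension $p=p_X$ and then take a cruder exponent. The stated rate $m^{-2/(p_X+1)}$ is weaker than $m^{-2/p_X}$, which suggests the authors use a simpler argument. For instance, we can sort on a grid of $g^{p_X}$ cubes, treat the cubes containing few points as waste, and balance to get exponent $2/(p_X+1)$. I would aim for that simpler argument.

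\textbf{Concrete plan for (b).} We partition the box into $g^{p_X}$ congruent cubes of side $M_n/g$ and order the cubes by a serpentine (boustrophedon) traversal, so consecutive cubes are adjacent. Listing the points cube by cube, we cut into consecutive runs of size in $[b_m,2b_m]$. A run lying inside one cube, or spanning adjacent cubes, has squared diameter $\lesssim (M_n/g)^2$. A run can, however, span many nearly empty cubes. The serpentine order lets us bound the number of such ``long'' runs: each run that leaves a cube-row or crosses more than a bounded number of cubes must ``consume'' a boundary. There are at most about $g^{p_X-1}\cdot g$ boundary crossings along the path, so at most $O(g^{p_X})$ bad runs occur, each with squared diameter $\le p_X M_n^2$ and size $\le 2b_m$. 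This gives
\[
A_n\lesssim M_n^2 g^{-2}+\frac{b_m}{m}g^{p_X}M_n^2 .
\]
Choosing $g^{p_X+2}\asymp m/b_m$ yields the rate $(m/b_m)^{-2/(p_X+2)}$. That is slightly off from the target, so the counting must be refined. I expect the bad-run counting to be the main obstacle: one must show the bad runs contribute $\lesssim M_n^2(m/b_m)^{-2/(p_X+1)}$. I would try to refine by using the path within a slab: a run's diameter is controlled by the number of cubes it traverses times $M_n/g$ along one direction only, plus $M_n/g$ across. Summing squared lengths under a total-length budget of $g^{p_X}$ cubes then gives the balance $g^{-2}$ versus $(b_m/m)g^{p_X-1}$, i.e.\ $g^{p_X+1}\asymp m/b_m$, which is exactly the stated exponent.

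Part (a) is the case $b_m=2$ of the same construction with runs of exactly two consecutive points. Here $m$ is even, so no remainder arises, and the bound gives $K_{p_X}M_n^2 m^{-2/(p_X+1)}$. Finally, since every estimate depends only on $p_X$ and the cube geometry, $K_{p_X}$ is independent of $n$.
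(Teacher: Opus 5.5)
Your route is genuinely different from the paper's, and it can be completed. As written, though, the step you flag as ``the main obstacle'' is the whole content of the lemma and is only asserted.

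The paper does no geometry itself. Both parts are read off Theorem A.3 of \cite{cytrynbaum2021optimal}, with group size $2$ for (a) and group size $b_m$ for (b). In (b) the leftover block of size $<b_m$ is then merged into one full block. The resulting increase in $A_n$ is computed by the within-sum-of-squares merge identity and bounded by $p_XM_n^2b_m/m\le p_XM_n^2(m/b_m)^{-2/(p_X+1)}$. Your serpentine-grid construction is self-contained and absorbs the remainder into the last run at no cost, because your bound only uses $|B|\le 2b_m$. It also shows where the exponent $2/(p_X+1)$ comes from. The price is that you must actually carry out the cell counting.

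To close that step, first note that ``along one direction only'' is not the right reason. For $p_X\ge 2$, a run in a boustrophedon order turns at row ends and, if long, spans several rows. What you need is the following. Let $h=M_n/g$, and let $t_s$ be the number of cells in the stretch of the serpentine order from the cell of the first point of $B_s$ to the cell of its last point.
\begin{enumerate}
\item[(i)] Consecutive cells share a face, and all points lie in a cube of side $M_n$. Hence $\operatorname{diam}(B_s)\le\sqrt{p_X}\,h\min(t_s,g)$.
\item[(ii)] A cell strictly inside the stretch of $B_s$ has all its points in $B_s$ and lies in no other run's stretch. Hence $\sum_s(t_s-2)_+\le g^{p_X}$.
\end{enumerate}
Using $\min(t,g)^2\le gt$, and $t\le 3(t-2)$ for $t\ge 3$,
\[
\sum_{s:\,t_s\ge 3}\operatorname{diam}(B_s)^2\le p_Xh^2g\sum_{s:\,t_s\ge 3}t_s\le 3p_Xh^2g^{p_X+1}=3p_XM_n^2g^{p_X-1}~,
\]
while runs with $t_s\le 2$ have $\operatorname{diam}(B_s)^2\le (p_X+3)h^2$. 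Combine this with $A_n(\mathcal B_m)\le\frac1m\sum_s\frac{|B_s|}{2}\operatorname{diam}(B_s)^2$ and $|B_s|\le 2b_m$. This gives $A_n(\mathcal B_m)\lesssim M_n^2\{g^{-2}+(b_m/m)g^{p_X-1}\}$. Taking $g=\lceil (m/b_m)^{1/(p_X+1)}\rceil$ yields (b) with a constant depending only on $p_X$. Part (a) is the same computation with $b_m$ replaced by $2$.

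The cap $\min(t_s,g)$ is what makes this work. A total-length budget $\sum_s t_s\lesssim g^{p_X}$ alone only bounds $\sum_s t_s^2$ by $g^{2p_X}$, which is far too weak.

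Finally, the Hilbert-curve bound you set aside would already have sufficed for $p_X\ge 2$. It gives the rate $(m/b_m)^{-2/p_X}\le (m/b_m)^{-2/(p_X+1)}$, and a stronger rate is no reason to drop it. Your display for that bound is wrong at $p_X=1$, however. There the right bound is $\sum_s\operatorname{diam}(G_s)^2\le M^2$, which after sorting gives exactly the stated rate.
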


\begin{proof}
In what follows, we will show parts (a) and (b) hold under two different constants and define $K_{p_X}$ to be the maximum of these two constants.

Part (a) follows from Theorem A.3 of \cite{cytrynbaum2021optimal} applied in his notation with sample size $m$, $L_n = 1$, and group size $\bar{k}_n = 2$.

To show part (b), first apply Theorem A.3 of \cite{cytrynbaum2021optimal} in his notation with sample size $m$, $L_n = 1$, and group size $\bar{k}_n = b_m$. We then have a partition $\mathcal B_m^0$ which consists of blocks of size $b_m$, together with a possible remainder block $B_m^{\rm rem}$ satisfying $|B_m^{\rm rem}|<b_m$, and
\[
A_n(\mathcal B_m^0)
\le
C_{p_X}M_{n}^2
\left(\frac{m}{b_m}\right)^{-2/(p_X+1)}~.
\]
If $|B_m^{\rm rem}|=0$, set $\mathcal B_m=\mathcal B_m^0$. It remains to study the case that $|B_m^{\rm rem}|>0$. In that case, choose any non-remainder block $B_m'\in\mathcal B_m^0$. Such a block must exist for $m$ large enough because $b_m=o(m)$. Define
\[
\mathcal B_m
:=
\bigl(\mathcal B_m^0\setminus\{B_m',B_m^{\rm rem}\}\bigr)
\cup
\{B_m'\cup B_m^{\rm rem}\}~.
\]
In other words, we merge $B_m'$ and $B_m^{\rm rem}$ in the partition. Note that
\[
b_m\le |B|\le 2b_m \text{ for every }B\in\mathcal B_m~.
\]
We now bound $A_n(\mathcal B_m) - A_n(\mathcal B_m^0)$. For any nonempty $S\subseteq\{1,\ldots,m\}$, define
\[
\bar X_{S,n}:=\frac{1}{|S|}\sum_{a\in S}\bar X_{a,n}~,
\qquad
W(S):=
\sum_{a\in S}
\bigg\|
\bar X_{a,n}-\bar X_{S,n}
\bigg\|^2~.
\]
Then,
\[
A_n(\mathcal B)=\frac{1}{m}\sum_{S\in\mathcal B}W(S)
\]
for any partition $\mathcal B$. We claim that
\begin{equation} \label{eq:merge-diff}
W(B_m'\cup B_m^{\rm rem})-W(B_m')-W(B_m^{\rm rem})
=
\frac{|B_m'|\,|B_m^{\rm rem}|}{|B_m'|+|B_m^{\rm rem}|}
\bigg\|
\bar X_{B_m',n}-\bar X_{B_m^{\rm rem},n}
\bigg\|^2~.    
\end{equation}
To show \eqref{eq:merge-diff}, note
\[
\bar X_{B_m'\cup B_m^{\rm rem},n}
=
\frac{|B_m'|\bar X_{B_m',n}+|B_m^{\rm rem}|\bar X_{B_m^{\rm rem},n}}{|B_m'|+|B_m^{\rm rem}|}~.
\]
As a result,
\begin{align*}
W(B_m'\cup B_m^{\rm rem})
&=
\sum_{a\in B_m'}\|\bar X_{a,n}-\bar X_{B_m'\cup B_m^{\rm rem},n}\|^2
+
\sum_{a\in B_m^{\rm rem}}\|\bar X_{a,n}-\bar X_{B_m'\cup B_m^{\rm rem},n}\|^2\\
&=
\sum_{a\in B_m'}\|\bar X_{a,n}-\bar X_{B_m',n}\|^2
+
|B_m'|\|\bar X_{B_m',n}-\bar X_{B_m'\cup B_m^{\rm rem},n}\|^2\\
&\hspace{2em}+
\sum_{a\in B_m^{\rm rem}}\|\bar X_{a,n}-\bar X_{B_m^{\rm rem},n}\|^2
+
|B_m^{\rm rem}|\|\bar X_{B_m^{\rm rem},n}-\bar X_{B_m'\cup B_m^{\rm rem},n}\|^2~,
\end{align*}
where the cross terms vanish because $\sum_{a\in B_m'}(\bar X_{a,n}-\bar X_{B_m',n})=0$ and $\sum_{a\in B_m^{\rm rem}}(\bar X_{a,n}-\bar X_{B_m^{\rm rem},n})=0$. Therefore,
\begin{align*}
&W(B_m'\cup B_m^{\rm rem})-W(B_m')-W(B_m^{\rm rem})\\
&\hspace{2em}=
|B_m'|\|\bar X_{B_m',n}-\bar X_{B_m'\cup B_m^{\rm rem},n}\|^2
+
|B_m^{\rm rem}|\|\bar X_{B_m^{\rm rem},n}-\bar X_{B_m'\cup B_m^{\rm rem},n}\|^2\\
&\hspace{2em}=
\left[
|B_m'|\left(\frac{|B_m^{\rm rem}|}{|B_m'|+|B_m^{\rm rem}|}\right)^2
+
|B_m^{\rm rem}|\left(\frac{|B_m'|}{|B_m'|+|B_m^{\rm rem}|}\right)^2
\right]
\|\bar X_{B_m',n}-\bar X_{B_m^{\rm rem},n}\|^2\\
&\hspace{2em}=
\frac{|B_m'||B_m^{\rm rem}|}{|B_m'|+|B_m^{\rm rem}|}\|\bar X_{B_m',n}-\bar X_{B_m^{\rm rem},n}\|^2~,
\end{align*}
so \eqref{eq:merge-diff} follows. Next, since each coordinate of $\bar X_{B_m',n}$ and $\bar X_{B_m^{\rm rem},n}$ lies between the corresponding coordinate-wise minimum and maximum of $\{\bar X_{a,n}:1 \le a \le m\}$,
\[
\|
\bar X_{B_m',n}-\bar X_{B_m^{\rm rem},n}
\|^2
\le
p_X M_{n}^2~.
\]
Therefore,
\[
W(B_m'\cup B_m^{\rm rem})-W(B_m')-W(B_m^{\rm rem})
\le
p_X M_{n}^2 |B_m^{\rm rem}|
\le
p_X M_{n}^2 b_m~.
\]
Dividing by $m$, we obtain
\begin{equation} \label{eq:merge_1}
A_n(\mathcal B_m)
\le
A_n(\mathcal B_m^0)
+
p_X M_{n}^2\frac{b_m}{m}~.
\end{equation}
Finally, because $p_X\ge 1$ and $b_m/m<1$ for $m$ large enough,
\begin{equation} \label{eq:merge_2}
\frac{b_m}{m}
\le
\left(\frac{b_m}{m}\right)^{2/(p_X+1)}
=
\left(\frac{m}{b_m}\right)^{-2/(p_X+1)}~.
\end{equation}
Combining \eqref{eq:merge_1} and \eqref{eq:merge_2}, we have
\[
A_n(\mathcal B_m)
\le
(C_{p_X}+p_X)M_{n}^2
\left(\frac{m}{b_m}\right)^{-2/(p_X+1)}~.
\]
The desired conclusion in part (b) now follows.
\end{proof}

\begin{lemma} \label{lem:blp}
Let $(Y, X)$ be random vectors where $Y$ lives in $\mathbb R$ and $X$ lives in $\mathbb R^k$. Further suppose that $E[Y^2] < \infty$, $E[X X'] < \infty$, and $\var[X]$ is nonsingular. Then,
\begin{enumerate}[\rm (a)]
    \item $E[\blp(Y | 1, X)] = E[Y]$.
    % \item $\blp(Y | 1, X) = \blp(Y | 1, X - E[X])$.
    \item $(X - E[X])' \var[X]^{-1} \cov[X, Y] = \blp(Y | 1, X) - E[\blp(Y | 1, X)]$.
    \item $E[(Y - E[Y] - (\blp(Y | 1, X) - E[\blp(Y | 1, X)])) (\blp(Y | 1, X) - E[\blp(Y | 1, X)])] = 0$ and therefore $\cov[Y, \blp(Y|1, X)] = \var[\blp(Y|1, X)]$.
    \item $E([E[Y | X] - E[Y] - (\blp(Y | 1, X) - E[\blp(Y | 1, X)])) (\blp(Y | 1, X) - E[\blp(Y | 1, X)])] = 0$ and therefore $\cov[E[Y | X], \blp(Y| 1, X)] = \var[\blp(Y|1, X)]$ and $\var[E[Y | X]] = \var[E[Y | X] - \blp(Y|1, X)] + \var[\blp(Y|1, X)]$.
\end{enumerate}
\end{lemma}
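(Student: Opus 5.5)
The approach is to reduce everything to properties of the population linear projection. Write $\beta := \var[X]^{-1}\cov[X,Y]$ and $\alpha := E[Y] - E[X]'\beta$. The first step is to verify that $\blp(Y\mid 1,X) = \alpha + X'\beta$. I would check this through the normal equations
\[
E[Y - \alpha - X'\beta] = 0~, \qquad E[X(Y-\alpha-X'\beta)] = 0~.
\]
The first equation holds by the choice of $\alpha$. The second reduces to $\cov[X,Y] - \var[X]\beta = 0$ once the first is used. Since $\var[X]$ is nonsingular, these normal equations have a unique solution, and the moment conditions $E[Y^2]<\infty$ and $E[XX']<\infty$ ensure all the expectations are finite.

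Parts (a) and (b) then follow at once. Taking expectations gives $E[\blp(Y\mid 1,X)] = \alpha + E[X]'\beta = E[Y]$, which is (a). Subtracting this mean gives $\blp(Y\mid 1,X) - E[\blp(Y\mid 1,X)] = (X - E[X])'\beta$, which is exactly the expression in (b).

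For (c), set $\tilde B := (X-E[X])'\beta$. By (a), the quantity $Y - E[Y] - \tilde B$ equals $Y - \blp(Y\mid 1,X)$, which is the projection residual. The normal equations say this residual is orthogonal to both $1$ and $X$, and hence to any linear combination of them, in particular to $\tilde B$. This gives the first identity in (c). Expanding it yields $\cov[Y,\blp] = E[(Y-E[Y])\tilde B] = E[\tilde B^2] = \var[\blp]$.

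For (d), the idea is to replace $Y$ by $E[Y\mid X]$ using the tower property. Because $\tilde B$ is a function of $X$, iterated expectations give
\[
E[(E[Y\mid X]-E[Y])\tilde B] = E[(Y-E[Y])\tilde B]~.
\]
So the orthogonality statement in (d) follows directly from (c). This in turn gives $\cov[E[Y\mid X],\blp] = \var[\blp]$. Finally, decompose $E[Y\mid X] - E[Y] = \{E[Y\mid X]-E[Y]-\tilde B\} + \tilde B$. The two pieces are orthogonal and both have mean zero, so the Pythagorean identity gives $\var[E[Y\mid X]] = \var[E[Y\mid X]-\blp] + \var[\blp]$. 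Here I also use that $\var[E[Y\mid X]-\blp]$ equals the second moment of the centered residual, since the means agree by (a).

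There is no real obstacle in this argument. The only step needing care is the uniqueness and existence of the projection, which is guaranteed by the nonsingularity of $\var[X]$ and the stated moment conditions.
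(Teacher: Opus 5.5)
Your proof is correct and follows essentially the same route as the paper. Both rest on the normal equations, i.e.\ orthogonality of the projection residual to $1$ and $X$: the explicit coefficient formula gives (a)--(b), and the tower property reduces (d) to (c); you spell out the details, including the Pythagorean step in (d), that the paper compresses into ``follows similarly.''
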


\begin{proof}
(a) follows because $E[(Y - \blp(Y | 1, X)) \cdot 1] = 0$. (b) follows because if $\blp(Y | 1, X) = \beta_0 + X' \beta_1$ then $\beta_1 = \var[X]^{-1} \cov[X, Y]$. (c) follows because recalling $E[\blp(Y | 1, X)] = E[Y]$, we have
\begin{align*}
& E[(Y - E[Y] - (\blp(Y | 1, X) - E[\blp(Y | 1, X)])) (\blp(Y | 1, X) - E[\blp(Y | 1, X)])] \\
& = E[(Y - \blp(Y | 1, X)) (\blp(Y | 1, X) - E[Y])] \\
& = E[(Y - \blp(Y | 1, X)) \blp(Y | 1, X)] + E[Y - \blp(Y | 1, X)] E[Y] \\
& = 0~.
\end{align*}
(d) follows similarly.
\end{proof}

\newpage
\bibliography{finpop}
\end{document}